\DeclareFontFamily{U}{rsfs}{\skewchar\font127 }
\DeclareFontShape{U}{rsfs}{m}{n}{%
   <-6> rsfs5
   <6-8> rsfs7
   <8-> rsfs10
}{}
\newenvironment{breakablealgorithm}
{
		\begin{center}
			\refstepcounter{algorithm}
			\hrule height.8pt depth0pt \kern2pt
			\renewcommand{\caption}[2][\relax]{
				{\raggedright\textbf{\ALG@name~\thealgorithm} ##2\par}%
				\ifx\relax##1\relax 
				\addcontentsline{loa}{algorithm}{\protect\numberline{\thealgorithm}##2}%
				\else 
				\addcontentsline{loa}{algorithm}{\protect\numberline{\thealgorithm}##1}%
				\fi
				\kern2pt\hrule\kern2pt
			}
		}{
		\kern2pt\hrule\relax
	\end{center}
}
\DeclareFontFamily{U}{rsfs}{\skewchar\font127 }
\DeclareFontShape{U}{rsfs}{m}{n}{%
   <-6.5> rsfs5
   <6.5-8> rsfs7
   <8-> rsfs10
}{}
\newcommand*\emptycirc[1][1ex]{\tikz\draw (0,0) circle (#1);} 
\newcommand*\halfcirc[1][1ex]{%
	\begin{tikzpicture}
	\draw[fill] (0,0)-- (90:#1) arc (90:270:#1) -- cycle ;
	\draw (0,0) circle (#1);
	\end{tikzpicture}}
\newcommand*\fullcirc[1][1ex]{\tikz\fill (0,0) circle (#1);} 
\newtheorem{theorem}{Theorem}
\newtheorem{lemma}{Lemma}
\newtheorem{definition}{Definition}
\newtheorem{problem}{Problem}
\DeclareFontFamily{U}{rsfs}{\skewchar\font127 }
\DeclareFontShape{U}{rsfs}{m}{n}{%
   <-6> rsfs5
   <6-8> rsfs7
   <8-> rsfs10
}{}
\begin{document}
\begin{sloppypar}

\title{EVA-S3PC: Efficient, Verifiable, Accurate Secure Matrix Multiplication Protocol Assembly and Its Application in Regression}

\author{Shizhao Peng}
\orcid{0000-0001-6333-5703}
\affiliation{%
  \institution{Beihang University}
  \streetaddress{37 Xueyuan Road}
  \city{Beijing}
  \country{China}
  \postcode{100191}
}
\email{by1806167@buaa.edu.cn}

\author{Tianrui Liu}
\orcid{0000-0002-6268-7018}
\affiliation{%
  \institution{Beihang University}
  \streetaddress{37 Xueyuan Road}
  \city{Beijing}
  \country{China}
  \postcode{100191}
}
\email{liu_tianrui@buaa.edu.cn}

\author{Tianle Tao}
\orcid{0009-0005-1218-1923}
\affiliation{%
  \institution{Beihang University}
  \streetaddress{37 Xueyuan Road}
  \city{Beijing}
  \country{China}
  \postcode{100191}
}
\email{taotianle@buaa.edu.cn}

\author{Derun Zhao}
\orcid{0009-0008-2294-9723}
\affiliation{%
  \institution{Beihang University}
  \streetaddress{37 Xueyuan Road}
  \city{Beijing}
  \country{China}
  \postcode{100191}
}
\email{derunz@buaa.edu.cn}

\author{Hao Sheng}
\orcid{0000-0002-2811-8962}
\affiliation{%
  \institution{Beihang University}
  \streetaddress{37 Xueyuan Road}
  \city{Beijing}
  \country{China}
  \postcode{100191}
}
\email{shenghao@buaa.edu.cn}

\author{Haogang Zhu*}
\orcid{0000-0003-1771-2752}
\affiliation{%
  \institution{Beihang University}
  \streetaddress{37 Xueyuan Road}
  \city{Beijing}
  \country{China}
  \postcode{100191}
}
\email{haogangzhu@buaa.edu.cn}

\begin{abstract}
Efficient multi-party secure matrix multiplication is crucial for privacy-preserving machine learning, but existing mixed-protocol frameworks often face challenges in balancing security, efficiency, and accuracy. This paper presents an efficient, verifiable and accurate secure three-party computing (EVA-S3PC) framework that addresses these challenges with elementary 2-party and 3-party matrix operations based on data obfuscation techniques. We propose basic protocols for secure matrix multiplication, inversion, and hybrid multiplication, ensuring privacy and result verifiability. Experimental results demonstrate that EVA-S3PC achieves up to 14 significant decimal digits of precision in Float64 calculations, while reducing communication overhead by up to $54.8\%$ compared to state of art methods. Furthermore, 3-party regression models trained using EVA-S3PC on vertically partitioned data achieve accuracy nearly identical to plaintext training, which illustrates its potential in scalable, efficient, and accurate solution for secure collaborative modeling across domains.
\end{abstract}

\maketitle



\section{Introduction}
The rapid advancement of digital technologies such as AGI (Artificial General Intelligence), IoT (Internet of Things), and cloud computing makes data a fundamental production factor in digital economy. Government organizations and businesses nowadays use centralized cloud services \cite{6424959} for data collaboration among multiple entities, which produced promising applications in healthcare, finance, and governance\cite{10.1145/3158363}, but this routing in general suffers from the problem of privacy. \autoref{fig:S3PC Modeling Problem} illustrates a ideal scenario of three financial institutions (P1: Bank with corporate financial data, P2: Rating Agency with credit rating, P3: Insurance holding a ten-year default record as labels) aiming to conduct joint regression analysis based on heterogeneously distributed data without sharing raw data directly. While deep learning as a service (DLaaS) \cite{sekar2023deep} can explore the value of their respective data in financial risk control scenarios by pooling data on a central service, most real life application would request the analysis to be carried out without exchanging any raw data to avoid privacy breaches.

\begin{figure}[tp]
  \centering
     \quad
  \includegraphics[width=1.0\hsize]{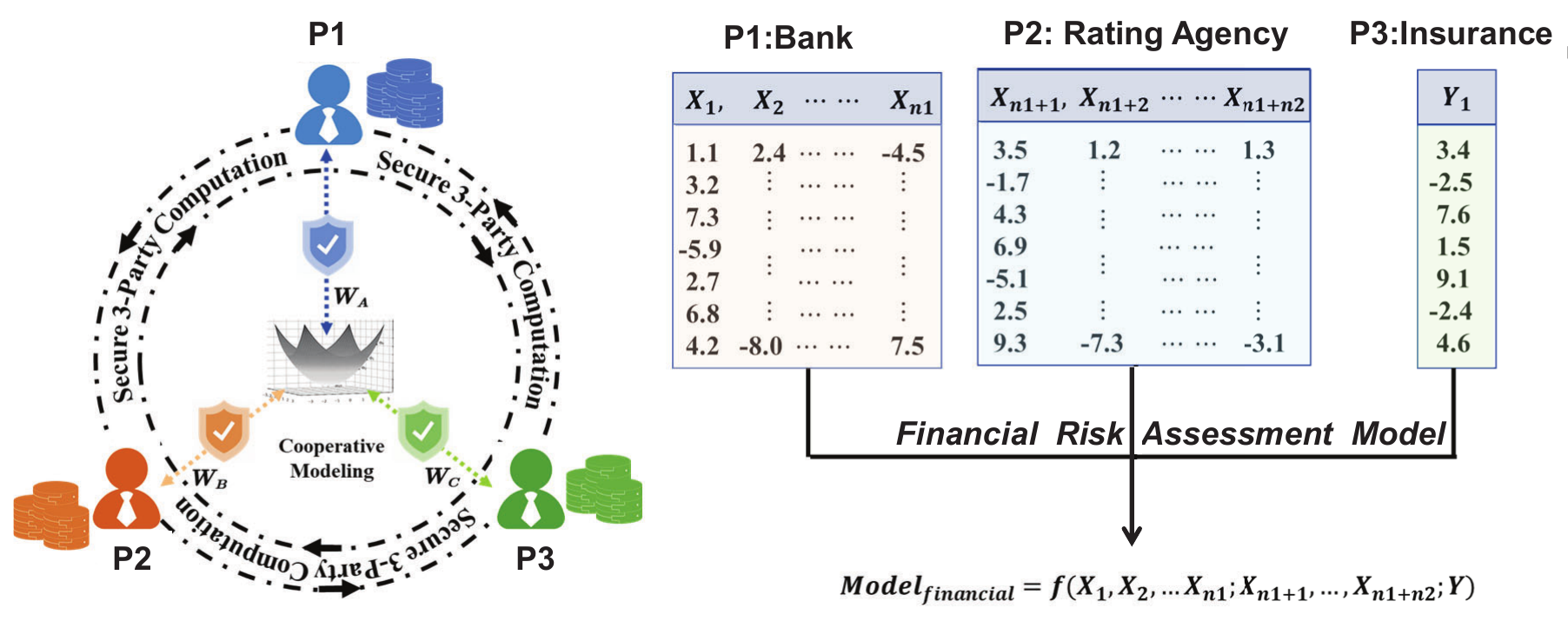}
     \quad
  \vspace{-0.3cm}
  \caption{Secure Three-Party Cooperative Modeling Problem}
  \label{fig:S3PC Modeling Problem}
  \Description[<short description>]{<long description>}
\end{figure}

In recent years, frequent data security incidents and heightened public awareness of personal privacy have led governments worldwide to implement regulations (e.g., the well-known GDPR \cite{voigt2017eu}) to protect data ownership and user rights. These regulations have made efficient data sharing across organizations more challenging, creating isolated "data islands" where vast amounts of data remain untouched. Therefore, achieving decentralized computation and collaborative modeling without compromising privacy has become a major research challenge for academia and industry. Current mainstream research directions include \textit{Secure Multi-party Computation (SMPC)} \cite{zhou2024secure,mohanta2020multi}, \textit{Homomorphic Encryption (HE)} \cite{munjal2023systematic}, \textit{Differential Privacy(DP)} \cite{zhao2022survey}, and \textit{privacy-preserving machine learning (PPML)} frameworks based on various cryptographic primitives. A critical component in all of these approaches is secure matrix multiplication, which is fundamental to many machine learning algorithms. Examples include calculating gain coefficients in decision trees \cite{abspoel2021secure}, computing Euclidean distances in K-nearest neighbor classification \cite{haque2020privacy}, and solving coefficients in linear regression \cite{aono2017input}.

\begin{table*}[htbp]
  \centering
  \caption{Comparison of various mixed-protocols SMPC frameworks related to 2PC and 3PC} 
    \resizebox{\linewidth}{!}{
    \begin{threeparttable}
    \begin{tabular}{cccccccccccccc}
    \toprule
    \multirow{2}[3]{*}{} & \multirow{2}[3]{*}{\textbf{Framework}} & \multicolumn{1}{c}{\multirow{2}[3]{*}{\textbf{Dev. Language}}} & \multicolumn{1}{c}{\multirow{2}[3]{*}{\textbf{Techniques Used}}} & \multicolumn{2}{c}{\textbf{Operator Supported}} & \multicolumn{1}{c}{\multirow{2}[3]{*}{\textbf{Security}}} & \multicolumn{3}{c}{\textbf{Theoretical Evaluation}} &       & \multicolumn{3}{c}{\textbf{Pratical Performance}} \\
\cmidrule{5-6}\cmidrule{8-10}\cmidrule{12-14}          &       &       &       & \multicolumn{1}{c}{\textbf{Mat.Mul}} & \textbf{Mat.Inv} &       & \textbf{Complexity} & \multicolumn{1}{c}{\textbf{Verifiablity}} & \multicolumn{1}{c}{\textbf{ Extensibility}} &       & \textbf{Float64} & \textbf{Comm.} & \multicolumn{1}{c}{\textbf{Big Matrix}} \\
    \midrule
    \multirow{8}[1]{*}{\textbf{2PC}} & \multicolumn{1}{c}{\textbf{SecureML\cite{mohassel2017secureml}}} & \textbf{C++} & \textbf{HE,GC,SS} & \textbf{\ding{52}} & \textbf{\ding{56}} & \fullcirc \:or \halfcirc & \textbf{High} & \textbf{\ding{56}} & \halfcirc &       & \halfcirc & \textbf{High} & \textbf{\ding{56}} \\
          & \textbf{Crypten\cite{knott2021crypten}} & \textbf{python} & \textbf{SS} & \textbf{\ding{52}} & \textbf{\ding{52}} & \halfcirc & \textbf{Medium} & \textbf{\ding{56}} & \fullcirc &       & \emptycirc & \textbf{Low} & \textbf{\ding{52}} \\
          & \multicolumn{1}{c}{\textbf{Delphi\cite{mishra2020delphi}}} & \textbf{C++} & \textbf{HE,GC,SS} & \textbf{\ding{52}} & \textbf{\ding{52}} & \halfcirc & \textbf{High} & \textbf{\ding{56}} & \halfcirc &       & \halfcirc & \textbf{High} & \textbf{\ding{56}} \\
          & \multicolumn{1}{c}{\textbf{Motion\cite{braun2022motion}}} & \textbf{C++} & \textbf{OT,GC,SS} & \textbf{\ding{52}} & \textbf{\ding{56}} & \halfcirc & \textbf{Medium} & \textbf{\ding{56}} & \halfcirc &       & \halfcirc & \textbf{High} & \textbf{\ding{56}} \\
          & \textbf{Pencil\cite{liu2024pencil}} & \textbf{C++} & \textbf{HE,DP} & \textbf{\ding{52}} & \textbf{\ding{56}} & \fullcirc \:or \halfcirc & \textbf{High} & \textbf{\ding{56}} & \halfcirc &       & \halfcirc & \textbf{High} & \textbf{\ding{56}} \\
          & \textbf{LibOTe\cite{libOTe}} & \textbf{C++} & \textbf{OT} & \textbf{\ding{52}} & \textbf{\ding{52}} & \halfcirc & \textbf{Medium} & \textbf{\ding{56}} & \halfcirc &       & \fullcirc & \textbf{High} & \textbf{\ding{56}} \\
          & \multicolumn{1}{c}{\textbf{Chameleon\cite{riazi2018chameleon}}} & \textbf{C++} & \textbf{GC,SS} & \textbf{\ding{52}} & \textbf{\ding{56}} & \halfcirc & \textbf{Medium} & \textbf{\ding{56}} & \halfcirc &       & \emptycirc & \textbf{Medium} & \textbf{\ding{52}} \\
          & \multicolumn{1}{c}{\textbf{Du's work\cite{du2002practical}}} & \textbf{N/A} & \textbf{DD} & \textbf{\ding{52}} & \textbf{\ding{52}} & \halfcirc & \textbf{Low} & \textbf{\ding{56}} & \fullcirc &       & \fullcirc & \textbf{Low} & \textbf{\ding{52}} \\
    \midrule
    \multirow{8}[1]{*}{\textbf{3PC}} & \textbf{ABY3\cite{mohassel2018aby3}} & \textbf{C++} & \textbf{GC,SS} & \textbf{\ding{52}} & \textbf{\ding{56}} & \halfcirc & \textbf{Medium} & \textbf{\ding{56}} & \halfcirc &       & \halfcirc & \textbf{Medium} & \textbf{\ding{52}} \\
          & \multicolumn{1}{c}{\textbf{Kumar's Work\cite{kumar2017privacy}}} & \textbf{Matlab} & \textbf{DD} & \textbf{\ding{52}} & \textbf{\ding{52}} & \halfcirc & \textbf{Low} & \textbf{\ding{52}} & \emptycirc &       & \fullcirc & \textbf{Low} & \textbf{\ding{52}} \\
          & \textbf{Daalen's Work\cite{van2023privacy}} & \textbf{N/A} & \textbf{DD} & \textbf{\ding{52}} & \textbf{\ding{56}} & \halfcirc & \textbf{Low} & \textbf{\ding{56}} & \emptycirc &       & \fullcirc & \textbf{Low} & \textbf{\ding{52}} \\
          & \textbf{MP-SPDZ\cite{keller2020mp}} & \textbf{C++} & \textbf{HE,OT,SS} & \textbf{\ding{52}} & \textbf{\ding{56}} & \fullcirc \:or \halfcirc & \textbf{High} & \textbf{\ding{56}} & \halfcirc &       & \halfcirc & \textbf{High} & \textbf{\ding{56}} \\
          & \textbf{Tenseal\cite{benaissa2021tenseal}} & \textbf{python} & \textbf{HE} & \textbf{\ding{52}} & \textbf{\ding{56}} & \fullcirc \:or \halfcirc & \textbf{High} & \textbf{\ding{56}} & \fullcirc &       & \halfcirc & \textbf{High} & \textbf{\ding{56}} \\
          & \textbf{FATE\cite{FATE}} & \textbf{python} & \textbf{HE,SS} & \textbf{\ding{52}} & \textbf{\ding{56}} & \fullcirc \:or \halfcirc & \textbf{High} & \textbf{\ding{56}} & \fullcirc &       & \halfcirc & \textbf{High} & \textbf{\ding{56}} \\
          & \textbf{SecretFlow\cite{ma2023secretflow}} & \textbf{python} & \textbf{HE,GC,SS} & \textbf{\ding{52}} & \textbf{\ding{52}} & \fullcirc \:or \halfcirc & \textbf{High} & \textbf{\ding{56}} & \fullcirc &       & \halfcirc & \textbf{High} & \textbf{\ding{56}} \\
          \rowcolor{gray!30}
          & \textbf{EVA-S3PC} & \textbf{python} & \textbf{DD} & \textbf{\ding{52}} & \textbf{\ding{52}} & \halfcirc & \textbf{ Low} & \textbf{\ding{52}} & \fullcirc &       & \fullcirc & \textbf{Low} & \textbf{\ding{52}} \\
    \bottomrule
    \end{tabular}
        \begin{tablenotes}
            \item \textbf{Note:} {\textbf{Dev. Language} indicates the main development language for each framework. \textbf{Mat.Mul} and \textbf{Mat.Inv} indicates multiplication and inversion operator for matrix. \textbf{\ding{52}} and \textbf{\ding{56}}} indicates the framework supports or not supported the feature. For \textbf{Security}, \:\fullcirc\:/ \halfcirc\: denotes the framework can against malicious adversary or semi-honest adversary. \:\fullcirc\:/ \halfcirc\:/ \emptycirc\: in \textbf{Extensibility} refers to the difficulty level of coupling the framework with other SMPC protocols based on existing languages and compiler. For Float64, these symbols indicate the precision degree(high, medium or low) under this data type. \textbf{Comm.} here indicates the communication overhead.   
        \end{tablenotes}   
    \end{threeparttable}}%
  \label{tab:SMPC_Comparison}%
\end{table*}%

We organize representative frameworks supporting 2-party or 3-party matrix multiplication operators in Table \ref{tab:SMPC_Comparison}, and evaluate 16 cutting-edge frameworks based on development language, technique type, security level, theoretical assessment metrics (Float64 computational precision, computational complexity, communication overhead), and practical performance indicators (support for result verification, capability for large matrix calculations, and extensibility with various security protocols).
For frameworks incorporating HE primitives, such as \cite{keller2020mp,benaissa2021tenseal}, their security is intrinsically high due to the NP-hard nature of the underlying computational problem. However, HE involves extensive calculations over long ciphertext sequences and can only approximate most non-linear operators, resulting in high computational complexity and low precision, making it less suitable for precise computation of large matrices. In contrast, frameworks like Chameleon and ABY3 \cite{riazi2018chameleon,mohassel2018aby3} integrate Garbled Circuits (GC) for efficient Boolean operations and Secret Sharing (SS) for arithmetic operations, balancing efficiency in logic and arithmetic computations. Nevertheless, additive secret sharing (ASS) and replicated secret sharing (RSS) require conversions between various garbled circuits during large-scale matrix multiplications, introducing substantial communication overhead \cite{knott2021crypten}. Additionally, MSB truncation protocols \cite{wagh2019securenn} tend to accumulate errors when computing floating-point numbers outside the ring $Z_{2}^p$, limiting their applicability in non-linear and hybrid multiplication types (e.g., $\mathsection{\ref{S2PI}\sim \ref{S3PHM}}$). Frameworks relying on DP, while effectively concealing sensitive information through noise, can accumulate errors in matrix multiplications, potentially affecting computational accuracy in practical applications. Although OT-based frameworks \cite{rabin2005exchange,yadav2022survey} ensure high precision and secure data transfer during matrix multiplication, communication costs grow linearly with matrix size, restricting efficiency and scalability in complex computational scenarios.
Evidently, while these frameworks generally ensure security in semi-honest three-party settings, few solutions balance precision, complexity, and communication overhead while also supporting result verification, extensibility, and scalability for large matrices in practical applications.

\noindent Data disguising is a commonly used linear space random perturbation method, including linear transformation disguising, Z+V aggregation disguising, and polynomial mapping disguising \cite{du2002practical}. Based on real-number domains $\mathbb{R}$, it preserves matrix homogeneous by disguising original data, thus ensuring privacy protection for matrix input and output with minimal interaction and flexible asynchronous computation. This approach avoids the computational complexity of HE ciphertext and the significant communication overhead associated with GC circuit transformations, while also maintaining high precision and timeliness.
Existing research in this area primarily stems from the CS (Commodity Server) model built on Beaver’s \cite{beaver1997commodity,beaver1998server} multiplication triples, introducing a Third Trust Party (TTP) not involved in the actual computation to ensure process security. However, prior work \cite{du2001privacy,atallah2001secure} often remains at the theoretical level, focusing on 2-party vector or array operations and linear equation solutions without rigorous security proofs or practical performance evaluations (running time, precision, overhead).
To further apply data disguising techniques to various scientific computing problems and more complex three-party practical applications, this paper presents a secure 3-party computation framework under a semi-honest setting with verifiable results. The main contributions of this work are as follows:
\begin{itemize}
    \item A secure 3-party computing (S3PC) framework EVA-S3PC under the semi-honest adversary model is proposed with five elementary protocols: Secure 2-Party Multiplication (S2PM), Secure 3-Party Multiplication (S3PM), Secure 2-Party Inversion (S2PI), Secure 2-Party Hybrid Multiplication (S2PHM), Secure 3-Party Hybrid Multiplication Protocol(S3PHM), using data disguising technique. Rigorous security proofs are provided  based on computation indistinguishability theory, and correctness is also proved in semi-honest environment.
    \item A secure and efficient result validation protocol using Monte Carlo method is proposed to detect abnormality in the result produced by computation made of aforementioned elementary protocols.
    \item A typical linear regression model with data features and labels spited on 3 nodes was constructed using the elementary protocols under the framework, including the algorithm for Secure 3-Party Linear Regression Training (S3PLRT) and Secure 3-Party Linear Regression Prediction (S3PLRP) respectively.
    \item Theoretical and practical computational complexity, communication overhead, computing precision and prediction accuracy are analyzed and compared with representative SMPC models.
\end{itemize}
\noindent\textbf{Organizations:}
The remainder of this paper proceeds as follows. Section \ref{Related Work} presents recent advancements in secure multi-party computation for matrix multiplication, inversion, and regression analysis. Section \ref{Preliminaries} introduces the framework of S3PC and review some essential preliminaries. In Section \ref{Proposed-Work}, we describe the proposed elementary protocols S2PM, S3PM, S2PI, S2PHM, S3PHM with security proof. Section \ref{Applications} constructs S3PLRT and S3PLRP using building blocks from Section \ref{Proposed-Work}. Section \ref{Theoretical} provides theoretical analysis of computational and communication complexity, followed by experimental comparison of performance and precision with various secure computing schemes in Sections \ref{Experiments}. Finally, some conclusions are drawn in Section \ref{Conclusion}.
\section{Related Work}\label{Related Work}
\subsection{Secure Matrix Multiplication}
S2PM and S3PM are the foundational linear computation protocols within our framework, from which all other sub-protocols can be derived. Existing research on secure matrix multiplication primarily utilizes SMPC and data disguising techniques. Frameworks such as Sharemind \cite{du2001privacy,bogdanov2008sharemind} achieve secure matrix multiplication by decomposing matrices into vector dot products $\alpha_i \cdot \beta_i$. Each entry $\alpha_i \cdot \beta_i$ is computed locally, and re-sharing in the ring $\mathbb{Z}_{2^{32}}$ is completed through six Du-Atallah protocols. Other studies, including \cite{braun2022motion,wagh2019securenn,tan2021cryptgpu}, reduce the linear complexity of matrix multiplication by precomputing random triples $\langle a \rangle^s, \langle b \rangle^s, \langle c \rangle^s$ (where $S$ denotes additive secret sharing over $\mathbb{Z}_{2^l}$). 
In \cite{miller2021simple,furukawa2017high,guo2020efficient}, optimizations to the underlying ZeroShare protocol employ AES as a PRNG in ECB mode to perform tensor multiplication, represented as \textbf{$\langle z \rangle = \langle x \cdot y \rangle$}. The DeepSecure and XOR-GC frameworks \cite{10.1145/3195970.3196023,kolesnikov2008improved} utilize custom libraries and standard logic synthesis tools to parallelize matrix multiplication using GC in logic gate operations, enhancing computational efficiency. ABY3\cite{mohassel2018aby3} combines GC and SS methods, introducing a technique for rapid conversion between arithmetic, binary, and Yao's 3PC representations, achieving up to a $50\times$  reduction in communication for matrix multiplications.
Other approaches, including \cite{benaissa2021tenseal}, use CKKS encryption for vector-matrix multiplication, expanding ciphertext slots by replicating input vectors to accommodate matrix operations.
LibOTe \cite{libOTe} implements a highly efficient 1-out-of-n OT by adjusting the Diffie-Hellman key-exchange and optimizing matrix linear operations with a $64 \times 64 \rightarrow 128$-bit serial multiplier in $\mathbb{F}_{2^{255}-19}$ for precision. The SPDZ framework and its upgrades \cite{keller2018overdrive,baum2019using} enhance efficiency and provable security by integrating hidden key generation protocols for BGV public keys, combining the strengths of HE, OT, and SS. Frameworks like SecretFlow, Secure ML, Chameleon, and Delphi \cite{mishra2020delphi,riazi2018chameleon,ma2023secretflow,7958569} integrate sequential interactive GMW, fixed-point ASS, precomputed OT, and an optimized STP-based vector dot product protocol for matrix multiplication, achieving significant improvements in communication overhead and efficiency. In earlier work, Atallah \cite{atallah2010securely} proposed secure vector multiplication methods using data disguising for statistical analysis and computational geometry, while Du and Dallen \cite{van2023privacy} introduced 2-party and n-party diagonal matrix multiplication within the CS model for multivariate data mining. Clifton and Zhan \cite{du2002practical,vaidya2002privacy} further optimized these algorithms to improve time complexity and extend applicability in n-party computing scenarios.
\subsection{Secure Inverse and Regression Analysis}
\noindent Nikolaenko \cite{nikolaenko2013privacy} proposes a server-based privacy-preserving linear regression protocol for horizontally partitioned data, combining linearly homomorphic encryption (LHE) and GC. 
For vertically partitioned data, Giacomelli and Gasc{\'o}n \cite{gascon2016privacy,giacomelli2018privacy} utilize Yao's circuit protocol and LHE, incurring high overhead and limited non-linear computation precision. 
In contrast, ABY3 \cite{mohassel2018aby3} reduces regression communication complexity using delayed re-sharing techniques. Building on this, Mohassel further improves linear regression accuracy with an approximate fixed-point multiplication method that avoids Boolean truncation \cite{mohassel2017secureml}.
Gilad \cite{gilad2019secure} presents the first three-server linear regression model, yet heavy GC use limits scalability due to high communication costs.
Liu \cite{liu2024pencil} combines DP with HE and SS to support linear regression across vertically and horizontally partitioned models, protecting model parameter privacy.
Rathee and Tan \cite{tan2021cryptgpu,rathee2020cryptflow2} leverage GPU acceleration and fixed-point arithmetic over shares, using 2-out-of-3 replicated secret sharing to support secure regression across three-party servers.
Ma \cite{ma2023secretflow} introduces the first SPU-based, MPC-enabled PPML framework, with compiler optimizations that substantially enhance training efficiency and usability in secure regression.

\section{Framework and PRELIMINARIES}\label{Preliminaries}
\noindent In this section, we introduce a efficient, verifiable and accurate S3PC (EVA-S3PC) framework (shown in \autoref{fig:PPS3PC-Framework}) with the security model and data disguising technique used.
\subsection{EVA-S3PC Framework}
\noindent A S3PC call from a client requests the computation of a function $f(A,B,C)$ on data $A$, $B$, $C$ supplied by isolated data owners without disclosing any information about individual data to anyone else except for the owner. In this paper, we assume that the operands are organized in the form of matrices. Computation proceeds by invoking the basic protocols in the Secure Protocol Library (SPL). Each basic protocol consists of Commodity-Server (CS) pre-processing, Online Computing and Result Verification.
CS, initially introduced by Beaver\cite{beaver1997commodity} as a semi-trusted third party, is widely utilized to address various S2PC problems. It simply generates offline random matrices $(R_a,r_a),(R_b,r_b)$ and $(R_c,r_c)$ to be used by the subsequent online computing stage to disguise original data, and is prohibited from conspiring with any other participants. Its simplicity feature makes it easy to be built in real world applications.
\begin{figure}[ht]
  \centering
  \includegraphics[width=1.0\hsize]{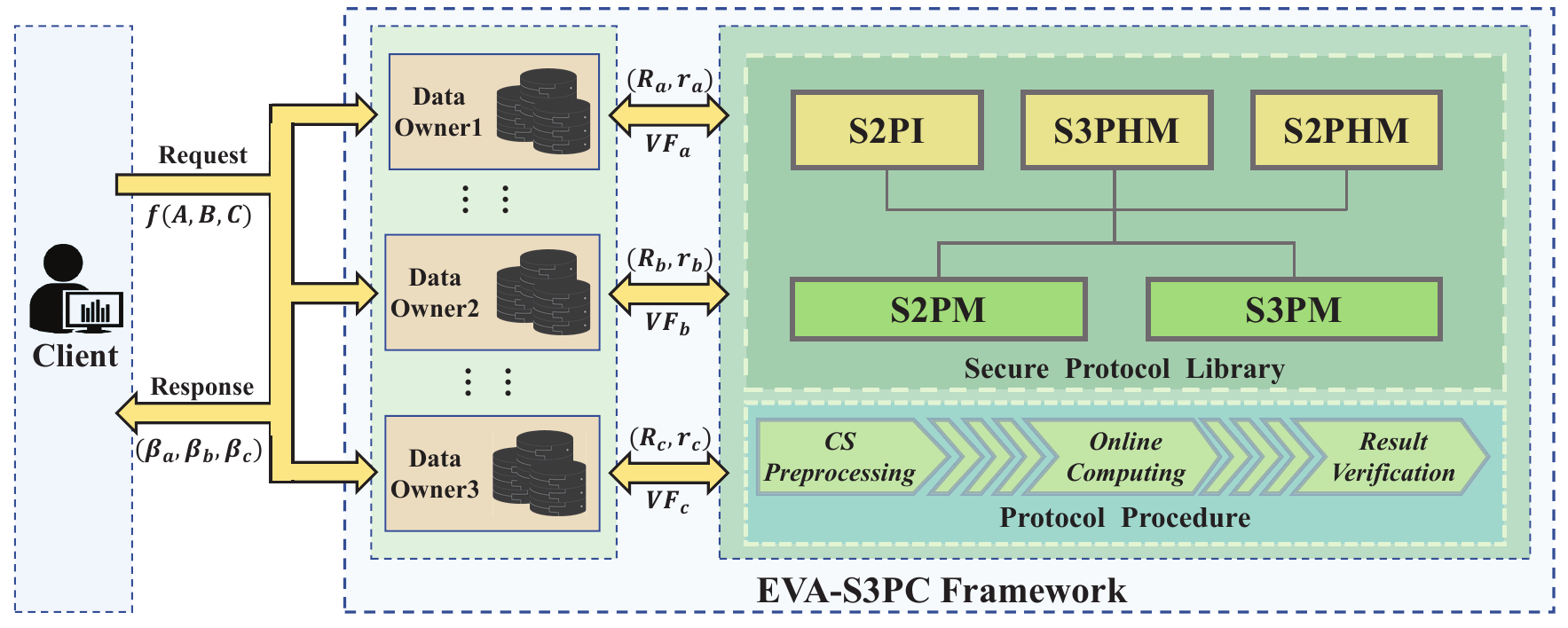}
  \vspace{-0.6cm}
  \caption{Framework of EVA-S3PC}
  \label{fig:PPS3PC-Framework}
  \Description[<short description>]{<long description>}
\end{figure}
\noindent The online computing stage following CS pre-processing computes $f(A,B,C)$ by sequentially executing the fundamental protocols (S2PM, S3PM, S2PI, S2PHM, S3PHM) among data owners who can only access partial outputs $(\beta_a,\beta_b,\beta_c)$ with which no information about the input and output data can be inferred. 
Because the request client normally does not participate in the calculation of $f(A,B,C)$, there ought to be a mechanism to check the reliability of the output to ensure that its computation follows the exact protocol. In the result verification stage, participants collaborate to produces check matrices ($VF_a$, $VF_b$, $VF_c$) respectively and can independently use them to check the reliability of the output result.
S3PC protocols design should satisfy the following properties for the requirement of security, correctness, verifiability, precision and efficiency:
\begin{itemize}
\item \textbf{Security}
Input data ($A,B,C$) cannot be inferred by any participant node, except for the data owner itself, by using their local input, output and intermediate results produced by the protocol.
\item \textbf{Correctness}.
The request client can aggregate the outputs of all participants to produce correct result if the protocol is strictly followed by them.
\item \textbf{Verifiability}.
Each participants can independently and robustly verify the correctness of the result using intermediate results produced by the protocol with negligible probability of error.  
\item \textbf{Precision}.
Computation of Float64 double-precision matrix of various dimension and distribution should have at least $8\sim 10$ decimal bit accurate significant.
\item \textbf{Efficiency}.
The protocol should have minimum computational and communication complexity.
\end{itemize}
\subsection{Security Model}
\noindent For the definition of security, we follow a semi-honest model of S2PC using the criteria of computational indistinguishability between the view of ideal-world and the simulated views of real-world on a finite filed, and extended it to the scenario of a real number filed.

\begin{definition}[Semi-honest adversaries model \cite{evans2018pragmatic}]\label{def1}
    In a semi-honest adversary model, it is hypothesized that all participants follow the exact protocol during computation but may use input and intermediate results of their own to infer others’ original data.
\end{definition}

\begin{definition}[Computational Indistinguishability \cite{goldreich2004foundations}]\label{def2}
     A probability ensemble $X = \{X(a, n)\}_{a\in\{0,1\}^{*};n\in \mathbb{N}}$ is an infinite sequence of random variables indexed by $a\in \{0, 1\}^{*}$ and $n\in \mathbb{N}$. In the context of secure computation, the $a$ represents the parties’ inputs and $n$ denotes the security parameter. Two probability ensembles $X = \{X(a, n)\}_{a\in\{0,1\}^{*};n\in \mathbb{N}}$ and $Y = \{Y(a, n)\}_{a\in \{0,1\}^{*};n\in \mathbb{N}}$ are said to be computationally indistinguishable, denoted by $X\overset{c}{\equiv}Y$, if for every non-uniform polynomial-time algorithm D there exists a negligible function $\mu(\cdot)$ such that for every $a\in \{0, 1\}^{*}$ and every $n\in \mathbb{N}$,
\begin{align}
    |Pr[D(X(a,n))=1]-Pr[D(Y(a,n))=1]|\leq \mu(n)  
\end{align}
\end{definition}

\begin{definition}[Privacy in Semi-honest 2-Party Computation \cite{lindell2017simulate}]\label{def3}
    Let $f:\{0,1\}^{*}\times \{0,1\}^{*}\mapsto \{0,1\}^{*}\times\{0,1\}^{*}$ be a functionality, where $f_1(x,y)$ (resp.,$f_2(x,y)$) denotes the first(resp.,second) element of $f(x,y)$ and $\pi$ be a two-party protocol for computing $f$. The view of the first(resp., second) party during an execution of $\pi$ on $(x,y)$, denoted $VIEW^{\pi}_{1}(x, y)$ (resp., $VIEW^{\pi}_{2}(x, y)$), is $(x,r^1,m^1_1,\cdots,m^1_t)$ (resp.,$(y,r^2,m^2_1,\cdots ,m^2_t )$), where $r^1$ (resp.,$r^2$) represents the outcome of the first (resp.,second) party’s internal coin tosses, and $m^1_i$(resp.,$m^2_i$) represents the $i^{th}$ message it has received. The output of the first (resp.,second) party during an execution of $\pi$ on $(x,y)$, denoted $OUTPUT^{\pi}_1(x,y)$ (resp.,$OUTPUT^{\pi}_2(x,y)$), is implicit in the party’s view of the execution.
We say that $\pi$ privately computes $f(x,y)$ if there exist polynomial time algorithms, denoted $S_1$ and $S_2$ such that:
\begin{align}
 &\{(S_1(x,f_1(x, y)),f_2(x,y))\}_{x,y\in \{0,1\}^{*}}\nonumber \\
 &\overset{c}{\equiv} \{(VIEW^{\pi}_1(x, y), OUTPUT^{\pi}_2(x, y))\}_{x,y\in \{0,1\}^{*}} \nonumber
 \end{align}  
 \begin{align}
 &\{(f_1(x, y), S_2(y, f_2(x, y)))\}_{x,y\in \{0,1\}^{*}}\nonumber \\
 &\overset{c}{\equiv} \{(OUTPUT^{\pi}_1(x, y), VIEW^{\pi}_2(x, y))\}_{x,y\in \{0,1\}^{*}}
 \end{align}
 where $\overset{c}{\equiv}$ denotes computational indistinguishability and $|x|=|y|$. We stress that above $VIEW^{\pi}_{1}(x,y)$ and $VIEW^{\pi}_{2}(x,y)$, $OUTPUT^{\pi}_1(x,y)$ and $OUTPUT^{\pi}_2(x, y)$ are related random variables, defined as a function of the same random execution.
\end{definition}

\begin{definition}[Privacy in Semi-honest 3-party computation]\label{def4}
Let $f=(f_1,f_2,f_3)$ be a functionality. We say that $\pi$ privately computes $f(x,y,z)$ if there exist polynomial time algorithms, denoted $S_1$, $S_2$ and $S_3$ such that:
\begin{align}
 & \{(S_1(x,f_1(x, y, z)),f_2(x,y,z),f_3(x,y,z))\}_{x,y,z\in \{0,1\}^{*}}\nonumber \\
 & \overset{c}{\equiv} \{(VIEW^{\pi}_1(x, y, z), OUTPUT^{\pi}_2(x, y, z),\nonumber \\
 & OUTPUT^{\pi}_3(x, y, z)\}_{x,y,z\in \{0,1\}^{*}} \nonumber
 \end{align}  
 \begin{align}
 & \{(f_1(x, y, z),S_2(y,f_2(x,y,z)),f_3(x,y,z))\}_{x,y,z\in \{0,1\}^{*}}\nonumber \\
 & \overset{c}{\equiv} \{(OUTPUT^{\pi}_1(x, y, z), VIEW^{\pi}_2(x, y, z),\nonumber \\
 & OUTPUT^{\pi}_3(x, y, z)\}_{x,y,z\in \{0,1\}^{*}} \nonumber
 \end{align}  
 \begin{align}
 & \{(f_1(x, y, z),f_2(x,y,z),S_3(z,f_3(x,y,z)))\}_{x,y,z\in \{0,1\}^*{}}\nonumber \\
 & \overset{c}{\equiv} \{(OUTPUT^{\pi}_1(x, y, z), OUTPUT^{\pi}_2(x, y, z)\nonumber \\
 & VIEW^{\pi}_3(x, y, z)\}_{x,y,z\in \{0,1\}^{*}} 
 \end{align}
where, again, $\overset{c}{\equiv}$ denotes computational indistinguishability and $|x|=|y|=|z|$. $VIEW^{\pi}_{1}(x,y,z)$. $VIEW^{\pi}_{2}(x,y,z)$ and $VIEW^{\pi}_{3}(x,y,z)$, $OUTPUT^{\pi}_1(x,y,z)$, $OUTPUT^{\pi}_2(x, y, z)$ and $OUTPUT^{\pi}_3(x, y, z)$ are related random variables, defined as a function of the same random execution.
\end{definition}

\noindent This definition is for the general case of the real-ideal security paradigm defined in a formal language and for deterministic functions, as long as they can ensure that the messages $\{S_i(x,f_i(x)), (i\in 1,2...)\}$ generated by the simulator in the ideal-world are distinguishable form $\{view_i^\pi(x), (i\in 1,2...)\}$ in the real-world, then it can be shown that a protocol privately computes $f$ in a finite field. Furthermore, a heuristic model defined on a real number field is introduced as follows \cite{du2004privacy}:
\begin{definition}[Security Model in field of real number]\label{def5}
    All inputs in this model are in the real number field $\mathbb{R}$. Let $I_A$ and $I_B$ represent Alice's and Bob's private inputs, and $O_A$ and $O_B$ represent Alice's and Bob's outputs, respectively. Let $\pi$ denote the two-party computation involving Alice and Bob, where $(O_A, O_B)=\pi(I_A, I_B)$. Protocol $\pi$ is considered secure against dishonest Bob if there is an infinite number of $(I^{*}_A, O^{*}_A)$ pairs in $(\mathbb{R},\mathbb{R})$ such that $(O^{*}_A,O_B)=\pi(I^{*}_A,I_B)$. A protocol $\pi$ is considered secure against dishonest Alice if there is an infinite number of $(I^{*}_B, O^{*}_B)$ pairs in $(\mathbb{R},\mathbb{R})$ such that $(O_A, O^{*}_B)=\pi(I_A, I^{*}_B)$.
\end{definition}

\noindent A protocol is considered secure in the field of real numbers if, for any input/output combination $(I,O)$ from one party, there are an infinite number of alternative inputs in $\mathbb{R}$ from the second party that will result in $O$ from the first party's perspective given its own input $I$. From the adversary's point of view, this infinite number of the other party’s input/output represents a kind of stochastic indistinguishability in real number field, which is similar to computational indistinguishability in the real-ideal paradigm. Moreover, a simulator in the ideal world is limited to merely accessing the corrupted parties' input and output. In other words, the protocol $\pi$ is said to securely compute $f$ in the field of real numbers if, and only if, computational indistinguishability is achieved with any inputs from non-adversaries over a real number field, and the final outputs generated by the simulator are constant and independent from all inputs except for the adversaries.
\section{PROPOSED WORK}\label{Proposed-Work}
\noindent EVA-S3PC consists of sub-protocols that can be hierarchically constructed from elementary sub-protocols  including S2PM and S3PM, and derived sub-protocols including S2PI, S2PHM and S3PHM. We will then introduce each sub-protocols, and privacy exposure state analyses and provide formal security proofs under the semi-honest model.
\subsection{S2PM}\label{S2PM}
\noindent The problem of S2PM is defined as: 
\begin{problem}[Secure 2-Party Matrix Multiplication]\label{Problem-S2PM}
    Alice has an $n\times s$ matrix $A$ and Bob has an $s\times m$ matrix $B$. They want to conduct the multiplication, such that Alice gets $V_a$ and Bob gets $V_b$, where $V_a+V_b=A\times B$.
\end{problem}
\subsubsection{Description of S2PM}
The proposed S2PM includes three phase: CS pre-processing phase  in Algorithm \ref{alg:S2PM-Preprocessing}, online computation phase in Algorithm \ref{alg:S2PM-Computing}, and result verification phase in Algorithm \ref{alg:S2PM-Verification}.

\noindent \textbf{Pre-processing Phase.}
In Algorithm \ref{alg:S2PM-Preprocessing}, CS generates a set of random private matrices $(R_a,r_a)$ for Alice and $(R_b,r_b)$ for Bob to disguise their input matrices $A$ and $B$. Moreover, for the purpose of result verification to be given in Algorithm \ref{alg:S2PM-Verification}, the standard matrix $S_t = R_a \cdot R_b$ is also sent to both Alice and Bob. This raises the problem of leaking $R_a$ (resp., $R_b$) to Bob (resp., Alice) when it is a non-singular matrix. Therefore, to protect the privacy of $R_b$, the matrix $R_a$ must satisfy the constraint $rank(R_a) < s$ according to lemma 1 ( resp., $rank(R_b) < s$ for the protection of $R_a$). 
\renewcommand{\thealgorithm}{1} 
    \begin{breakablealgorithm}
        \caption{S2PM CS Pre-processing Phase}
        \label{alg:S2PM-Preprocessing}
        \begin{algorithmic}[1] 
            \Require The dimension $(n,s)$ of Alice's private matrix $A$ and the dimension $(s,m)$ of Bob's private matrix $B$.
            \Ensure Alice gets private matrices $(R_a,r_a,S_t)$ and Bob gets private matrices $(R_b,r_b,S_t)$. 
            \State Generate two random matrices $R_a \in \mathbb{R}^{n \times s}$ and $R_b \in \mathbb{R}^{s \times m}$, such that $rank(R_a)=min(n,s)-1, rank(R_b)=min(s,m)-1$;
            \State Generate two random matrices $r_a, r_b$ with constraint $r_a+r_b = R_a\cdot R_b$ and let $S_t=R_a \cdot R_b$, where $r_a,r_b \in \mathbb{R}^{n \times m}$;
            \State Send $(R_a, r_a)$ and $(R_b, r_b)$ to Alice and Bob, respectively, denoted as $(R_a,r_a,S_t) \Rightarrow Alice$ and $(R_b,r_b,S_t) \Rightarrow Bob$.
            \State \Return
        \end{algorithmic}
    \end{breakablealgorithm}   
Also note that the CS, as a semi-honest third party, will not participate in any computation or communication after delieverying $R_a, R_b, r_a, r_b, S_t$ in the pre-processing stage. These random matrices are completely independent of $A$ and $B$ except for the size parameters of them. The protection of size information is feasible but is out of the scope of this paper.

\noindent\textbf{Online Phase.}
The online phase following CS pre-processing is made up of consecutive matrix calculation in the order as shown in Algorithm \ref{alg:S2PM-Computing}. 
Note that the final product of $A\times B$ is disguised by $V_a$ and $V_b$ to prevent Alice or Bob from knowing the it.
The correctness of the results can be easily verified with $V_a+V_b = [(\hat{A}\times B+(r_a-V_b))+r_a-(R_a\times \hat{B})]+V_b = [A\times B-V_b+(r_a+r_b-R_a\times R_b)]+V_b = A\times B$.
\renewcommand{\thealgorithm}{2} 
    \begin{breakablealgorithm}
        \caption{S2PM Online Computing Phase}
        \label{alg:S2PM-Computing}
        \begin{algorithmic}[1] 
            \Require Alice holds a private matrix  $A \in \mathbb{R}^ {n \times s}$,  Bob holds a private matrix $B \in \mathbb{R}^{s\times m}$.
            \Ensure Alice gets private matrices $(V_a,VF_a)$ and Bob gets private matrices $(V_b,VF_b)$, where $V_a, V_b, VF_a, VF_b \in\mathbb{R}^ {n \times m}$.
            \State Upon receiving the random matrices $(R_a, r_a, S_t)$, Alice computes $\hat{A} = A + R_a$ and sends $\hat{A} \Rightarrow \text{Bob}$
            \State Upon receiving the random matrices  $(R_b, r_b, S_t)$ , Bob computes $\hat{B} = B + R_b$ and sends  $\hat{B} \Rightarrow \text{Alice}$;
            \State  Bob generates a random matrix $V_b \in R^{n\times m}$, computes  $VF_b=V_b-\hat{A}\cdot B$, $T = r_b-VF_b$, then sends  $(VF_b,T)\Rightarrow \text{Alice}$;
            \State  Alice locally computes the matrix $V_a = T + r_a - (R_a \cdot \hat{B})$, $VF_a=V_a+R_a\cdot \hat{B}$, and sends $VF_a\Rightarrow \text{Bob}$;
            \State \Return $(V_a,VF_a),(V_b,VF_b)$ 
        \end{algorithmic}
    \end{breakablealgorithm}
\noindent\textbf{Verification Phase.}
A practical result verification solution is proposed based on Freivalds' work\cite{motwani1995randomized,freivalds1979fast}. The nature of the algorithm is a True-biased Monte-Carol verification process with one-side error $\sigma=1/2$, which corresponds to the steps $2\sim 6$ of Algorithm \ref{alg:S2PM-Verification}.
\renewcommand{\thealgorithm}{3} 
    \begin{breakablealgorithm}
        \caption{S2PM Result Verification Phase}
        \label{alg:S2PM-Verification}
        \begin{algorithmic}[1]
            \Require Alice holds verified matrices  $(VF_a,S_t) \in \mathbb{R}^ {n \times m}$ and Bob holds verified matrices $(VF_b,S_t) \in \mathbb{R}^ {n \times m}$.
            \Ensure Accept  or Reject $V_a$ and $V_b$.
            \For{$i=1:l$}
            \State Alice generates an vector $\hat{\delta_a} \in \mathbb{R}^{m\times 1}$ whose elements are all randomly composed of 0/1;
            \State  Alice then computes $E_r=(VF_a+VF_b-S_t)\times \hat{\delta_a}$;  
            \If{$E_r\neq (0,0,\cdots,0)^T$}
            \State \Return Rejected; 
            \EndIf
            \EndFor
            \State  Bob repeats the same verification procedure as Alice;
            \State \Return Accepted
        \end{algorithmic}
    \end{breakablealgorithm}

\begin{lemma}\label{lemma1}
    Given the linear system $A \cdot X=B$ and the augmented matrix $(A|B)$, n represents the number of rows in $X$. If $rank(A)=rank(A|B)<n$, then the system has a infinite solutions\cite{bellman1997introduction}.
\end{lemma}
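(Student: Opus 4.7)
The plan is to invoke the classical Rouché--Capelli theorem together with the rank--nullity theorem, which together give exactly the infinitude claim. Observe first that the hypothesis $\operatorname{rank}(A) = \operatorname{rank}(A \mid B)$ is precisely the consistency criterion: it guarantees that $B$ lies in the column space of $A$, so at least one particular solution $X_0$ with $A \cdot X_0 = B$ exists. I would state this step explicitly and cite it as the standard consistency test, since the proof in the paper only needs to quote it.

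Next I would analyze the solution set. The general solution of $A \cdot X = B$ is the affine set $X_0 + \ker(A)$, where $\ker(A) = \{Y : A \cdot Y = 0\}$. By the rank--nullity theorem applied to $A$ viewed as a linear map on $\mathbb{R}^n$ (with $n$ the number of rows of $X$, equivalently the number of columns of $A$), we have
\begin{equation*}
\dim \ker(A) = n - \operatorname{rank}(A) \geq 1,
\end{equation*}
since by hypothesis $\operatorname{rank}(A) < n$. Thus $\ker(A)$ contains a nonzero vector $Y_0$, and because the underlying field is $\mathbb{R}$, the family $\{X_0 + t Y_0 : t \in \mathbb{R}\}$ consists of infinitely many distinct solutions.

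I anticipate no real obstacle: the lemma is a textbook fact, and the only point worth pinning down carefully is the dimensional bookkeeping, namely that the $n$ in the statement refers to the number of unknowns (rows of $X$), which matches the column count of $A$ so that the rank--nullity argument applies directly. The proof can therefore be kept to a few lines, with citations to standard linear algebra references sufficient to justify both the Rouché--Capelli consistency step and the rank--nullity dimension count.
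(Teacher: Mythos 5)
Your proof is correct: the Rouch\'e--Capelli consistency criterion gives a particular solution $X_0$, rank--nullity gives $\dim\ker(A) = n - \operatorname{rank}(A) \geq 1$, and the affine family $X_0 + tY_0$ over $\mathbb{R}$ yields infinitely many solutions. The paper itself supplies no proof of this lemma --- it simply cites Bellman's textbook --- and your argument is precisely the standard justification that citation points to, including the one detail worth being careful about (that $n$, the number of rows of $X$, equals the number of columns of $A$, so rank--nullity applies).
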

\begin{theorem}\label{theorem1}
    The S2PM result verification phase satisfies robust abnormality detection.
\end{theorem}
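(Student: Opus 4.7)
My plan is to first unwind the definitions so that the quantity $M := VF_a + VF_b - S_t$ tested in Algorithm \ref{alg:S2PM-Verification} reduces cleanly to the actual output error. Substituting $VF_a = V_a + R_a \cdot \hat{B}$, $VF_b = V_b - \hat{A}\cdot B$, $\hat{A} = A + R_a$, $\hat{B} = B + R_b$ and $S_t = R_a \cdot R_b$, the cross terms $R_a B$ and $R_a R_b$ cancel, leaving $M = V_a + V_b - A\cdot B$. Thus checking whether $M\hat{\delta}_a = 0$ for random 0/1 vectors is precisely a test of whether the claimed shares sum to the true product; this step is pure algebra and should occupy only a few lines.

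Next I separate the two directions of robustness. If all participants followed the protocol honestly, $M$ is the zero matrix, so $E_r = 0$ deterministically in every iteration and both parties accept with probability $1$: no false rejection. The nontrivial case is when $V_a + V_b \neq A\cdot B$, i.e.\ $M \neq 0$, and I need to bound the probability that a random check vector fails to expose $M$.

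For this bound I will invoke the standard Freivalds argument explicitly. Fix any nonzero entry $M_{ij}$; the $i$-th coordinate of $E_r$ is $\sum_k M_{ik}(\hat{\delta}_a)_k$. Conditioning on all entries of $\hat{\delta}_a$ other than the $j$-th, the value of $(E_r)_i$ is an affine function of $(\hat{\delta}_a)_j \in \{0,1\}$ with nonzero slope $M_{ij}$, so at most one of the two choices makes $(E_r)_i$ vanish. Hence $\Pr[E_r = 0] \leq \Pr[(E_r)_i = 0] \leq 1/2$ per iteration. Over $l$ independent iterations the miss probability is at most $2^{-l}$, and since Alice and Bob verify independently, the overall failure probability of the verification phase is at most $2^{-2l}$, which is negligible for modest $l$. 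This yields the robust abnormality detection claim.

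I expect the only real obstacle to be presenting the Freivalds one-step bound cleanly in the matrix setting used here; the surrounding algebra and the union bound over iterations and over the two parties are routine. I will also briefly note that independence of $R_a, R_b, r_a, r_b$ from $A, B$ (guaranteed by the pre-processing phase) means the adversarial choice of how to deviate cannot be correlated with the verification randomness $\hat{\delta}_a$, so the $1/2$ bound applies unconditionally.
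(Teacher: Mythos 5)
Your proposal is correct and follows essentially the same route as the paper's proof: the same algebraic cancellation showing $VF_a+VF_b-S_t=V_a+V_b-A\cdot B$, the same Freivalds single-coordinate bound of $1/2$ per round (your conditioning-on-the-other-coordinates phrasing is just a cleaner statement of the paper's $w=0$ versus $w\neq 0$ case split), and the same amplification to $2^{-l}$ per party and $4^{-l}$ over both parties.
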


\begin{proof}
    When $V_a$ or $V_b$ are computed correctly such that $V_a + V_b = A \cdot B$, $E_r$ in step 3 is always \textbf{0} regardless of $\hat{\delta_a}$. Because according to Algorithm \ref{alg:S2PM-Computing}:
\begin{flalign}
     VF_a + VF_b - S_t &= V_a + V_b + R_a \cdot \hat{B} - \hat{A} \cdot B - R_a \cdot R_b \nonumber \\
                       &= (V_a + V_b - A \cdot B) + R_a \cdot R_b - R_a \cdot R_b \nonumber \\
                       &= V_a + V_b - A \cdot B &
\end{flalign}

\noindent When the computation of $V_a$ or $V_b$ is abnormal such that $V_a + V_b \neq A \cdot B$. Let $ProA_l$ denote the probability that Alice fails to detect the error after $l$ rounds of Algorithm \ref{alg:S2PM-Verification}. We will show that $ProA_l$ is upper-bounded by the exponential of $l$ and is negligible with sufficiently large $l$. 
Let $H = VF_a + VF_b - S_t$, and $E_r = H \times \hat{\delta_a} = (p_1, \cdots, p_n)^T$. There should be at least one non-zero element in $H$, say $h_{ij} \neq 0$ and its corresponding $p_i$ in $E_r$, with the relationship $p_i = h_{ij} \delta_j + w$, where $w = \sum_{k=1}^m h_{ik} \delta_k - h_{ij} \delta_j$. The marginal probability $Pr(p_i = 0)$ can be calculated as:
\begin{align}
    & Pr(p_i = 0) = Pr(p_i = 0 | w = 0)Pr(w = 0) \; + \nonumber \\
    & Pr(p_i = 0 | w \neq 0)Pr(w \neq 0)
\end{align}
Substituting the posteriors
\begin{align}
Pr(p_i = 0 | w = 0) = Pr(\delta_j = 0) = 1/2 \nonumber \\
Pr(p_i = 0 | w \neq 0) \leq Pr(\delta_j = 1) = 1/2    
\end{align}
gives
\begin{equation}
    Pr(p_i = 0) \leq (1/2) Pr(w = 0) + (1/2) Pr(w \neq 0)    
\end{equation}
Because $Pr(w \neq 0) = 1 - Pr(w = 0)$, we have
\begin{equation}
    Pr(p_i = 0) \leq 1/2
\end{equation}
The probability of single check failure $ProA_1$ ($l=1$) satisfies
\begin{equation}
    ProA_1 = Pr(E_r = (0, \cdots, 0)^T) \leq Pr(p_i = 0) \leq 1/2
\end{equation}
And when such random check is repeated $l$ times, $ProA_l$ satisfies
\begin{equation}
    ProA_l = ProA_1^l \leq \frac{1}{2^l}
\end{equation}
Since the verification is independently carried out by each participant, and fails when all of them fail. The probability of failing a 2-party verification:
\begin{equation}
    {Pf}_{S2PM} = (ProA_l)^2 \leq \frac{1}{4^l}
\end{equation}
The proof is now completed.
\end{proof}

\noindent In practice, $l$ is set to $20$ as a balance between the verification cost and a negligible upper bound probability of $(\frac{1}{4})^{20} \approx 9.09\times 10^{-13}$ for failing to detect any abnormality in the result.

\subsubsection{Security of S2PM}
\noindent An intuitive analysis of the S2PM protocol's security concerning potential compromise of data privacy followed by a formal proof of security under a semi-honest model in real number field will be provided.\\
\textbf{Data Privacy Compromise Analysis of S2PM.}
\begin{table}[tp]
  \centering
  \caption{Data Privacy Compromise Analysis of S2PM within the Semi-Honest Model.}
  \resizebox{\linewidth}{!}{
    \begin{tabular}{cccc}
    \toprule
    \multirow{2}[4]{*}{\textbf{Participants}} & \multicolumn{3}{c}{\textbf{Honest-but-curious Model with Two Party}} \\
\cmidrule{2-4}    \multicolumn{1}{c}{} & \multicolumn{1}{c}{\textbf{Private Data }} & \multicolumn{1}{c}{\textbf{Privacy Inference Analysis}} & \multicolumn{1}{c}{\textbf{Security constraints.}} \\
    \midrule
    \textbf{Alice} & $R_a,r_a,\hat{A},\hat{B},T,VF_a, V_a, S_t$ & $\psi(m_A)\Rightarrow B|m_A=\{S_t,R_a,\hat{B}\}$ &  $rank(R_a)<s$\\
    \midrule
    \textbf{Bob} &$R_b,r_b,\hat{A},\hat{B},T,VF_b, V_b, S_t$ & $\psi(m_B)\Rightarrow A|m_B=\{ S_t,R_b,\hat{A}\}$ &  $rank(R_b)<s$\\
    \bottomrule
    \end{tabular}}
  \label{tab:S2PM_Security}
\end{table}%
The private data including input, output and intermediate result of each participant are listed in Table 4. For each party in the computation, $m_A$ or $m_B$ is the maximum set of key messages it can make use of to infer and disclose the data of others. 
For instance, the key-message set $m_A = \{S_t, R_a, \hat{B}\}$ held by Alice obviously compromises the privacy of Bob's original data $B$, because Alice might deduce $R_b$ from the relationship $S_t = R_a \cdot R_b$ and subsequently infer $B$ using $\hat{B} = B + R_b$. To prevent this from happening, security constraint $rank(R_a) < s$ must be applied so that $S_t = R_a \cdot X$ has infinite number of solutions, effectively preventing Bob’s private data $B$. Similar security analysis equally applies to Bob.\\
\textbf{Security Proof of S2PM}
Following the definition of 2-party secure computation in Section 2.2, let $f=(f_1,f_2)$ be a probabilistic polynomial-time function and let $\pi$ be a 2-party protocol for computing $f$.To prove the security of computation on a real number field, it is necessary to construct two simulators, $S_1$ and $S_2$ in the ideal-world, such that the following relations hold simultaneously:
\begin{align}
    S_1(x, f_1(x,y)) \overset{c}{\equiv} view_1^{\pi}(x,y) \nonumber \\
    S_2(y, f_2(x,y)) \overset{c}{\equiv} view_2^{\pi}(x,y)\label{12}
\end{align}
together with two additional constraints to ensure the constancy of one party's outputs with arbitrary inputs from another party:
\begin{align}
    f_1(x,y) \equiv f_1(x,y^*) \nonumber \\
    f_2(x,y) \equiv f_2(x^*,y)\label{13}
\end{align}
so that the input of the attacked party cannot be deduced using the output from the other one. The security proof under the semi-honest (also known as 'honest-but-curious') adversary model is given below.
\begin{theorem}\label{theorem2}
    The S2PM protocol, denoted by $f(A,B)=A\cdot B$, is secure in the honest-but-curious model.
\end{theorem}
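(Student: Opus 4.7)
The plan is to prove Theorem 2 by instantiating the real-number-field security definition of Definition 5 via a pair of polynomial-time simulators $S_1, S_2$ satisfying the computational indistinguishability relations, and additionally verifying the two constancy constraints on the parties' outputs. Because the protocol is symmetric up to relabeling of the two parties, I would carry out the argument for the simulator $S_1$ emulating Alice's view in detail, and note that $S_2$ follows by a mirror image.

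First I would enumerate Alice's full view: her input $A$, the preprocessing matrices $(R_a, r_a, S_t)$ supplied by the CS, the messages $\hat{B}, T, VF_b$ received from Bob during the online phase, together with her output share $V_a$. The simulator $S_1$, given only $(A, V_a)$, proceeds as follows: (i) sample $R_a^{*}$ and $R_b^{*}$ respecting the rank constraints of Algorithm 1, set $S_t^{*} = R_a^{*} R_b^{*}$, draw $r_a^{*}$ from the appropriate distribution, and put $r_b^{*} = S_t^{*} - r_a^{*}$; (ii) sample $\hat{B}^{*}$ from the distribution of $B^{*} + R_b^{*}$ for an arbitrary placeholder $B^{*}$, which, since $R_b^{*}$ is random, is effectively a fresh random matrix; (iii) back-solve the deterministic relations of Algorithm 2 by setting $T^{*} = V_a - r_a^{*} + R_a^{*} \hat{B}^{*}$ and $VF_b^{*} = r_b^{*} - T^{*}$, forcing the simulated transcript to be consistent with the prescribed output $V_a$. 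The symmetric simulator $S_2$ for Bob uses $(B, V_b)$ and exploits the fact that $V_b$ is freshly sampled in step 3 of Algorithm 2 to recreate an identically distributed joint law.

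The indistinguishability argument rests on two observations. First, the CS-generated quantities $(R_a, r_a, S_t)$ in the real execution are statistically independent of $B$ by construction, so their simulated counterparts share the same marginal law. Second, and critically, conditioned on Alice's transcript the true $B$ is under-determined: inferring $B$ from $\hat{B} = B + R_b$ requires recovering $R_b$ from $S_t = R_a R_b$, and the linear system $R_a X = S_t$ admits an infinite family of solutions by Lemma 1 because $\mathrm{rank}(R_a) < s$. Each such solution yields a distinct consistent pair $(R_b^{*}, B^{*})$, realizing the stochastic indistinguishability demanded by Definition 5. The constancy constraints then follow because $V_b$ is chosen independently in Algorithm 2, so $V_a = A B - V_b$ can be matched to any target value by an appropriate choice of $V_b$ irrespective of $B$, and symmetrically for $V_b$ with respect to $A$.

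The main obstacle I anticipate is the careful treatment of probability distributions over $\mathbb{R}$: standard computational indistinguishability is formulated over finite fields or bit strings, whereas our matrices live in a continuous real-number domain where no uniform measure exists. I would handle this either by restricting $R_a, R_b, r_a$ to a sufficiently large bounded subset of $\mathbb{R}$ and invoking an $\varepsilon$-statistical indistinguishability argument, or, more cleanly, by appealing directly to Definition 5, which recasts security as the existence of an infinite family of alternative input-output pairs consistent with each party's view; under this reading, the rank-deficiency conditions on $R_a$ and $R_b$, combined with Lemma 1, become the decisive technical ingredient of the proof.
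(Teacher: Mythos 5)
Your proposal is correct and follows essentially the same route as the paper: construct simulators $S_1,S_2$ from each party's input--output pair, argue that the disguised messages $\hat{B},T,VF_b$ (resp.\ $\hat{A},VF_a$) are distributionally reproducible from fresh randomness consistent with the fixed output share, verify output constancy via the free choice of $V_b$ (resp.\ $V_a$), and ground the real-number-field security in the rank constraint on $R_a,R_b$ together with Lemma~1. The only cosmetic difference is that you back-solve the transcript from $V_a$ (setting $T^{*}=V_a-r_a^{*}+R_a^{*}\hat{B}^{*}$) whereas the paper samples a consistent fake input $(B',V_b')$ and derives the transcript forward, showing $V_a'=V_a$; the two constructions yield the same simulated distribution.
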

\begin{proof}
    We will prove this theorem by illustrating two simulators $S_1$ and $S_2$. Without the loss of generality, we assume that all data are defined on a real number field.\\     
    \noindent\textbf{(Privacy against Alice):}
    Firstly, we present $S_1$ for simulating ${view}_1^\pi (A,B)$ such that $S_1(A,V_a)$ is indistinguishable from ${view}_1^\pi (A,B)$. $S_1$ receives $(A,V_a)$ as input and output of Alice and proceeds as follows:
    \begin{enumerate}
        \item $S_1$ computes $\hat{A} = A + R_a$ and then chooses two pairs of random matrices $({B}', {V_b}'),({R_b}', {r_b}')$ by solving $A\cdot B' = V_a + {V_b}'$ and $R_a\cdot {R_b}' = r_a + {r_b}'$;
        \item $S_1$ performs the following calculations $\hat{B}' = {B}' + {R_b}'$, $T' = \hat{A}\cdot{B}' + {r_b}' - {V_b}'$, ${VF_b}' = {V_b}' - \hat{A}\cdot {B}'$, ${VF_a}' = {V_a}' + R_a \cdot \hat{B}'$;
        \item Finally, $S_1$ outputs the result ${V_a}' = T' + r_a - R_a \cdot \hat{B}'$.
    \end{enumerate}
    Here, $S_1$ simulates the message list of Alice as $S_1 (A,V_a )=\{A,R_a,r_a,\hat{B}',T',{VF_b}',{V_a}'\}$, and the view of Alice in the real-world is ${view}_1^\pi (A,B)=\{A, R_a, r_a, \hat{B}, T, VF_b, V_a\}$. 
    Due to the choices of $B', {R_b}', {r_b}', {V_b}'$ are random and simulatable (denoted as $\{B', {R_b}', {r_b}', {V_b}'\} \overset{c}{\equiv} \{B, R_b, r_b, V_b\}$), it follows that $\{\hat{B}', T', {VF_b}' \} \overset{c}{\equiv} \{\hat{B}, T, VF_b \}$. 
    Besides, $V_a'$ computed by these indistinguishable intermediate variables can be expanded as $V_a' = T' + r_a - R_a \cdot \hat{B}' = \hat{A}\cdot{B}' + {r_b}' - {V_b}'+ r_a - R_a \cdot\hat{B}' = A\cdot B' + R_a \cdot B' + {r_b}' - {V_b}' + r_a - R_a\cdot B' - R_a \cdot{R_b}'=(A\cdot B' - {V_b}') + (r_a + {r_b}' - R_a \cdot{R_b}')$. 
    Replacing $A\cdot B'$ and $R_a \cdot {R_b}'$ in step (1) gives ${V_a}' = V_a$, and therefore ${V_a}' \overset{c}{\equiv} V_a$ and $S_1 (A,V_a) \overset{c}{\equiv} {view}_1^{\pi} (A,B)\nonumber$.    
    Moreover, for the proof of output constancy, because Bob's input $B'$ is a random matrix, and Alice's output $V_a'=f_1(A,B')$ is identically equal to $V_a=f_1(A,B)$, we have $f_1(A,B')\equiv f_1(A,B)$ for any arbitrary input from Bob.

    \noindent\textbf{(Privacy against Bob):}
    $S_2$ simulates ${view}_2^\pi (A,B)$ such that $S_2 (B,V_b)$ is indistinguishable from ${view}_2^\pi (A,B)$. It receives $(B,V_b)$ as input and output of Bob and proceeds as follows:
    \begin{enumerate}
        \item $S_2$ computes $\hat{A}' = A' + {R_a}'$ and then choose two pairs of random matrices $(A', {R_a}')$ by solving $A'\cdot B = {V_a}' + V_b$ and ${R_a}' \cdot R_b = {r_a}' + r_b$;
        \item $S_2$ performs the following calculations $\hat{B} = B + R_b$, $T' = \hat{A}'\cdot B + r_b - V_b$, ${VF_b}' = V_b - \hat{A}' \cdot B$, ${VF_a}' = {V_a}' + {R_a}'\cdot \hat{B}$;
        \item Finally, $S_2$ outputs the result $V_b$;
    \end{enumerate}
    
    \noindent Here, $S_2$ simulates the message list of Bob as $S_2(B,V_b)=\{B,R_b,r_b,\hat{A}', {VF_a}',V_b\}$. Note that in the real world the view of Bob is ${view}_2^\pi(A,B)=\{B, R_b, r_b,  \hat{A}, VF_a, V_b\}$. 
    Due to the choices of $A', {R_a}', {r_a}', {V_a}'$ are random and simulatable ($\{A', {R_a}', {r_a}', {V_a}'\} \overset{c}{\equiv} \{A, R_a, r_a, V_a\}$), it follow that $\{\hat{A}', {VF_a}' \} \overset{c}{\equiv} \{\hat{A}, VF_a \}$, and therefore $S_2 (B,V_b) \overset{c}{\equiv} {view}_2^{\pi} (A,B)$.
    Meanwhile, Alice's input $A'$ is a random matrix in the field of real numbers, and Bob's output $V_b'=f_2(A',B)$ is identically equal to $V_b=f_2(A,B)$, so $f_2(A',B)\equiv f_2(A,B)$ for any arbitrary input from Alice.
    Consequently, both simulators $S_1$ and $S_2$ fulfill the criteria of computational indistinguishability and output constancy, as outlined in equations (\ref{12}) and (\ref{13}). Protocol $f(A,B)=A\cdot B$ is secure in the honest-but-curious model.
\end{proof}

\subsection{S3PM}\label{S3PM}
\noindent The problem of S3PM is defined as:
\begin{problem}[Secure 3-Party Matrix Multiplication Problem]\label{problem-S3PM}
    Alice has an $n\times s$ matrix $A$, Bob has an $s\times t$ matrix $B,$ and Carol has an $t\times m$ matrix $C$. They want to conduct the multiplication  $f(A,B,C)=A\cdot B\cdot C$, in the way that Alice outputs $V_a$, Bob outputs $V_b$, Carol output $V_c$ and $V_a+V_b+V_c=A\cdot B\cdot C$.
\end{problem}
\subsubsection{Description of S3PM}
\noindent The proposed S3PM includes three similar phases as in S2PM, and is demonstrated in Algorithms \ref{alg:S3PM-Preprocessing}$\sim$\ref{alg:S3PM-Verification}.

\noindent \textbf{Preprocessing Phase.}
CS generates a set of random private matrices $(R_a,r_a)$ for Alice, $(R_b,r_b)$ for Bob and $(R_c,r_c)$ to disguise their input matrices $A, B, C$. Moreover, for the purpose of result verification in Algorithm \ref{alg:S3PM-Verification}, the standard matrix $S_t = R_a \cdot R_b \cdot R_c$ is also sent to both Alice, Bob and Carol. The difference from Algorithm \ref{alg:S2PM-Preprocessing} is that the private matrices $R_a,R_b,R_c$ can be arbitrarily random without any rank constraint, because equation $R_a \cdot R_b \cdot R_c = S_t$ is unsolvable when Alice, Bob, or Carol has only one private random matrix.
\renewcommand{\thealgorithm}{4} 
    \begin{breakablealgorithm}
        \caption{S3PM CS Preprocessing Phase}
        \label{alg:S3PM-Preprocessing}
        \begin{algorithmic}[1] 
            \Require The dimension $(n,s)$ of Alice's private matrix $A$, the dimension $(s,t)$ of Bob's private matrix $B$, and the dimension $(t,m)$ of Carol's private matrix $C$.
            \Ensure Alice gets private matrices $(R_a,r_a,S_t)$, Bob gets private matrices $(R_b,r_b,S_t)$,and Carol gets private matrices $(R_c,r_c,S_t)$. 
            \State Generate three groups of random matrices: $(R_a, r_a)$, $(R_b, r_b)$, $(R_c, r_c)$ with constraint $r_a + r_b + r_c = R_a \cdot R_b \cdot R_c$, and let $S_t=R_a \cdot R_b \cdot R_c$,  where $R_a \in \mathbb{R}^{n \times s}$, $R_b \in \mathbb{R}^{s \times t}$, $R_c \in \mathbb{R}^{t \times m}$, and $r_a, r_b, r_c, S_t \in \mathbb{R}^{n \times m}$;
            \State Send $(R_a,r_a,S_t) \Rightarrow Alice$,  $(R_b,r_b,S_t) \Rightarrow Bob$, $(R_c,r_c,S_t) \Rightarrow Carol$.
            \State \Return $(R_a, r_a, S_t), (R_b, r_b, S_t), (R_c, r_c, S_t)$ 
        \end{algorithmic}
    \end{breakablealgorithm}
\noindent\textbf{Online Phase.}
The online phase following CS pre-processing is made up of consecutive matrix calculation in the order as shown in Algorithm \ref{alg:S3PM-Computing}.
The output is also disguised into $V_a$, $V_b$ and $V_c$ to prevent Alice, Bob or Carol from knowing it.
\renewcommand{\thealgorithm}{5} 
    \begin{breakablealgorithm}
        \caption{S3PM Online Computing Phase}
        \label{alg:S3PM-Computing}
        \begin{algorithmic}[1]
            \Require Alice holds a private matrix  $A \in \mathbb{R}^ {n \times s}$, Bob holds a private matrix $B \in \mathbb{R}^{s\times t}$, and Carol holds a private matrix $C \in \mathbb{R}^{t\times m}$; 
            \Ensure Alice outputs private matrices $(V_a,VF_a)$, Bob outputs private matrices $(V_b,VF_b)$ and Carol gets private matrices $(V_c,VF_c)$, where $V_a,V_b,V_c,VF_a,VF_b,VF_c \in\mathbb{R}^ {n \times m}$;    
            \State Upon receiving random matrices $(R_a, r_a, S_t)$, Alice computes $\hat{A} = A + R_a$ and sends $\hat{A} \Rightarrow Bob$;
            \State Upon receiving random matrices $(R_c, r_c, S_t)$, Carol computes $\hat{C} = C + R_c$ and sends  $\hat{C} \Rightarrow Bob$;
            \State Upon receiving random matrices $(R_b, r_b, S_t)$, Bob computes $\hat{B} = B + R_b$, $\psi_1=\hat{A} \cdot \hat{B}$, $\gamma_1=\hat{A} \cdot R_b$,  $\psi_2=\hat{B} \cdot \hat{C}$, $\gamma_2=R_b \cdot \hat{C}$, $M_b=\hat{A} \cdot R_b \cdot \hat{C}$ and then sends $(\psi_1,\gamma_1) \Rightarrow Carol$,$(\psi_2,\gamma_2) \Rightarrow Alice$;
            \State Alice computes $S_a = R_a \cdot \gamma_2$, $M_a = A \cdot \psi_2$;
            \State Carol computes $S_c = \gamma_1 \cdot R_c$, $M_c = \psi_1 \cdot R_c$;
            \State Bob performs a full-rank decomposition (FRD) on matrix $\hat{B}$ with Gauss elimination, splitting it into a column full-rank (CFR) matrix $B_1\in \mathbb{R}^{s \times r}$ and a row full-rank (RFR) matrix $B_2 \in \mathbb{R}^{r \times t}$ with $\hat{B}=B_1\cdot B_2$, and sends  ${B_1} \Rightarrow Alice, {B_2} \Rightarrow Carol$;
            \State Alice generates a random matrix $V_a \in \mathbb{R}^{n \times m}$, and  computes $VF_a=M_a+S_a-V_a$, $T_a = VF_a-r_a$ and $t_1 = R_a \cdot B_1$, and sends $(T_a,VF_a) \Rightarrow Bob$, $(t_1,VF_a) \Rightarrow Carol$;
            \State Bob generates a random matrix $V_b \in \mathbb{R}^{n \times m}$ and computes $VF_b=-M_b-V_b$, $T_b =T_a+VF_b-r_b$, and sends  $VF_b \Rightarrow Alice$ and $(T_b,VF_b) \Rightarrow Carol$;
            \State Carol computes $t_2=B_2 \cdot R_c$, $S_b = t_1 \cdot t_2$,  $V_c =T_b - M_c + S_b + S_c - r_c$and $VF_c=-M_c+S_c+S_b-V_c$, and sends $VF_c \Rightarrow Alice$, $VF_c \Rightarrow Bob$
            \State \Return $(V_a,VF_a),(V_b,VF_b),(V_c,VF_c)$ 
        \end{algorithmic}
    \end{breakablealgorithm}
We can easily verified the correctness of the algorithm by following $V_a + V_b + V_c = V_a + V_b + (T_b - M_c + S_b + S_c - r_c) = V_a + V_b + [((M_a + S_a - V_a - r_a)- M_b - V_b - r_b) - M_c + S_b + S_c - r_c]  = (M_a - M_b - M_c) + (S_a + S_b + S_c) - (r_a + r_b + r_c)$. Replacing the variables $\{M_a, M_b, M_c, S_a, S_b, S_c\}$ in step 3$\sim$5 gives $V_a + V_b + V_c = (A\cdot B\cdot C + A\cdot R_b\cdot R_c + A\cdot R_b\cdot C + A\cdot B\cdot R_c) - (A\cdot R_b\cdot C + A\cdot R_b\cdot R_c + R_a\cdot R_b\cdot C + R_a\cdot R_b\cdot R_c)-(A\cdot B\cdot R_c + R_a\cdot R_b\cdot R_c + R_a\cdot B\cdot R_c + A\cdot R_b\cdot R_c) + (R_a\cdot R_b\cdot C + R_a\cdot R_b\cdot R_c) + (R_a\cdot R_b\cdot R_c + R_a\cdot B\cdot R_c) + (R_a\cdot R_b\cdot R_c + A\cdot R_b\cdot R_c)- R_a\cdot R_b\cdot R_c = A\times B\times C$.

\noindent \textbf{Verification Phase.}
Similar to the verification for \textit{S2PM}, we present a brief overview and proof for the result verification phase of \textit{S3PM}.
\renewcommand{\thealgorithm}{6} 
    \begin{breakablealgorithm}
        \caption{S3PM Result Verification Phase}
        \label{alg:S3PM-Verification}
        \begin{algorithmic}[1] 
            \Require Alice, Bob and Carol hold verified matrices $(VF_a,S_t) \in \mathbb{R}^ {n \times m}$, $(VF_b,S_t) \in \mathbb{R}^ {n \times m}$ , $(VF_c,S_t) \in \mathbb{R}^ {n \times m}$ respectively.
            \Ensure Accept or reject  $V_a$, $V_b$ and $V_c$ as correct result.
            \For{$i=1:l$}
            \State Alice generates an vector $\hat{\delta_a} \in \mathbb{R}^{m\times 1}$ whose elements are all randomly composed of 0/1;
            \State  Alice then computes $E_r=(VF_a+VF_b+VF_c-S_t)\times \hat{\delta_a}$;  
            \If{$E_r\neq (0,0,\cdots,0)^T$}
            \State \Return Rejected;
            \EndIf
            \EndFor
            \State  Bob and Carol repeat the same verification procedure as Alice;
            \State \Return Accepted;
        \end{algorithmic}
    \end{breakablealgorithm}
\begin{theorem}\label{theorem3}
    The S3PM result verification phase satisfies robust abnormal detection.
\end{theorem}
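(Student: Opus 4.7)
The plan is to mirror the proof of Theorem \ref{theorem1} almost line-for-line, with the only substantive change being the accounting for a third participant. First, I would establish the ``no false reject'' direction: under an honest execution, $V_a+V_b+V_c=A\cdot B\cdot C$ by the correctness expansion already verified after Algorithm \ref{alg:S3PM-Computing}. Substituting the definitions $VF_a=M_a+S_a-V_a$, $VF_b=-M_b-V_b$, and $VF_c=-M_c+S_c+S_b-V_c$, I would show that
\begin{equation}
VF_a+VF_b+VF_c = (M_a-M_b-M_c)+(S_a+S_b+S_c)-(V_a+V_b+V_c),
\end{equation}
and then use the same cancellation pattern from the correctness argument to conclude $VF_a+VF_b+VF_c-S_t = V_a+V_b+V_c-A\cdot B\cdot C$. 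In the honest case this is the zero matrix, so $E_r=(VF_a+VF_b+VF_c-S_t)\cdot\hat{\delta_a}=\mathbf{0}$ regardless of the random vector, and no participant ever rejects.

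Next, I would handle the ``abnormal detection'' direction via the one-sided Freivalds argument already deployed in Theorem \ref{theorem1}. Set $H=VF_a+VF_b+VF_c-S_t$ and assume $H\neq 0$, so that some entry $h_{ij}\neq 0$. Writing the $i$-th coordinate of $E_r$ as $p_i = h_{ij}\delta_j + w$ with $w=\sum_{k\neq j}h_{ik}\delta_k$ independent of $\delta_j$, the conditioning argument from Theorem \ref{theorem1} gives $\Pr[p_i=0]\leq \tfrac{1}{2}$. Repeating the check $l$ times independently yields a per-party failure probability at most $2^{-l}$. Because the verification is carried out independently by all three participants and a protocol-level failure requires all three to fail simultaneously, the overall abnormality-miss probability satisfies
\begin{equation}
Pf_{S3PM}\leq\frac{1}{2^{3l}}=\frac{1}{8^l},
\end{equation}
which is negligible for moderate $l$ (e.g.\ $l=20$ gives $\sim 10^{-19}$).

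The only real obstacle is the algebraic bookkeeping in the first step: Algorithm \ref{alg:S3PM-Computing} threads the quantities $M_a,M_b,M_c,S_a,S_b,S_c$ and the CS-generated masks $r_a,r_b,r_c$ through many intermediate messages, so I would need to check carefully that the sum $VF_a+VF_b+VF_c$ telescopes to $V_a+V_b+V_c-A\cdot B\cdot C+R_a\cdot R_b\cdot R_c$ before the $S_t$ subtraction. Once that identity is in hand, the probabilistic portion is a verbatim application of the Theorem \ref{theorem1} argument, with the exponent upgraded from $2l$ to $3l$ to reflect three independent verifiers rather than two.
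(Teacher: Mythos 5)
Your proposal follows essentially the same route as the paper: the paper's proof of Theorem~\ref{theorem3} likewise reduces to the identity $VF_a+VF_b+VF_c-S_t=A\cdot B\cdot C-(V_a+V_b+V_c)$ (obtained from Algorithm~\ref{alg:S3PM-Computing} and the correctness expansion) and then reuses the Freivalds-style argument of Theorem~\ref{theorem1} verbatim, raising the per-party bound $2^{-l}$ to the third power to get $Pf_{S3PM}\leq 8^{-l}$. The only discrepancy is an immaterial sign flip in your intermediate telescoping claim (the sum is $A\cdot B\cdot C+R_aR_bR_c-(V_a+V_b+V_c)$, not $V_a+V_b+V_c-A\cdot B\cdot C+R_aR_bR_c$), which does not affect the zero test or the probability bound.
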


\begin{proof}
    This proof is similar to that of theorem \ref{theorem1}. When $V_a$, $V_b$ or $V_c$ are computed correctly such that $V_a + V_b +V_c= A \cdot B \cdot C$, $E_r$ in step 3 is always \textbf{0} regardless of $\hat{\delta_a}$. Because $VF_a + VF_b + VF_C - S_t = A\cdot B \cdot C -(V_a + V_b +V_c)$ according to Algorithm \ref{alg:S3PM-Computing}.

\noindent When the computation of $V_a$, $V_b$, or $V_c$ is abnormal, following the same process outlined in equations $(6)\sim (11)$, we can calculate the joint probability of failing a 3-party verification ${Pf}_{S3PM} = (ProA_l)^3 \leq (\frac{1}{8})^l$. The proof is now completed.
\end{proof}

\noindent Practically, we set $l=20$ so the failure probability does not exceed $(\frac{1}{8})^{20} \approx 8.67\times 10^{-19}$.
\subsubsection{Security of S3PM}
\noindent Similar to S2PM, a analysis of S3PM protocol's security concerning potential compromise of data privacy under a semi-honest model in the real number field will be provided.

\noindent \textbf{Data Privacy Compromise Analysis of S3PM.}
The private data include each party's inputs, outputs, intermediate results and are listed in Table \ref{tab:S3PM_Security}. For instance, we can intuitively observe that any combination of Alice's private data cannot yield valid information about $R_b, R_c, \hat{B}, \hat{C}$ (denoted as $m_A = \{\varnothing\}$), so the privacy of other participants is not compromised without further constraint. Similar security analysis equally applies to Bob and Carol.
\begin{table}[htbp]
  \centering
  \caption{Private Data Compromise Analysis of S3PM within the Semi-Honest Model.}
  \resizebox{\linewidth}{!}{
    \begin{tabular}{cccc}
    \toprule
    \multirow{2}[4]{*}{\textbf{Participants}} & \multicolumn{3}{c}{\textbf{Honest-but-curious Model with Three Party}} \\
\cmidrule{2-4}    \multicolumn{1}{c}{} & \multicolumn{1}{c}{\textbf{Private Data }} & \multicolumn{1}{c}{\textbf{Privacy Inference Analysis}} & \multicolumn{1}{c}{\textbf{Security constraints.}} \\
    \midrule
    \textbf{Alice} & $R_a,r_a,\hat{A},\hat{B}\hat{C},R_b\hat{C},B_1,T_a,VF_a,V_a,S_t$ & $\psi(m_A)\nRightarrow B, \psi(m_A)\nRightarrow C|m_A=\{\varnothing\}$ &  \textbf{/}\\
    \midrule
    \textbf{Bob} &$R_b,r_b,\hat{A},\hat{B},\hat{C},B_1,B_2,T_b,VF_b,V_b,S_t$ & $\psi(m_B)\nRightarrow A, \psi(m_B)\nRightarrow C|m_B=\{\varnothing\}$  &  \textbf{/}\\
    \midrule
    \textbf{Carol} & $R_c,r_c,\hat{C},\hat{A}\hat{B},\hat{A}R_b,B_2,R_aB_1,VF_c,V_c,S_t$ &$\psi(m_C)\nRightarrow A, \psi(m_C)\nRightarrow B|m_C=\{\varnothing\}$  &  \textbf{/}\\
    \bottomrule
    \end{tabular}}%
  \label{tab:S3PM_Security}%
\end{table}%

\noindent \textbf{Security Proof of S3PM.}
Following the definition of three-party secure computing in Section 2.2, let $f=(f_1,f_2,f_3)$ be a probabilistic polynomial-time function and let $\pi$ be a 3-party protocol for computing $f$.
To prove the security of computation based on a real number field, it is necessary to construct three simulators, $S_1$, $S_2$ and $S_3$ in the ideal-world, such that the following relations hold simultaneously:
\begin{align}\label{14}
    S_1(x, f_1(x,y,z)) \overset{c}{\equiv} view_1^{\pi}(x,y,z)\nonumber\\
    S_2(y, f_2(x,y,z)) \overset{c}{\equiv} view_2^{\pi}(x,y,z)\nonumber\\
    S_3(z, f_3(x,y,z)) \overset{c}{\equiv} view_3^{\pi}(x,y,z)
\end{align}
together with three additional constraints to ensure the constancy of one party's outputs with arbitrary inputs from another:
\begin{align}\label{15}
    f_1(x,y,z) \equiv f_1(x,y^*,z^*)\nonumber\\
    f_2(x,y,z) \equiv f_2(x^*,y,z^*)\nonumber\\
    f_3(x,y,z) \equiv f_3(x^*,y^*,z)
\end{align}
so that the input of the attacked party cannot be deduced using the output from another one. The security proof under the honest-but-curious adversary model is given below.
\begin{theorem}
    The protocol S3PM, denoted by $f(A,B,C)=A\cdot B\cdot C$, is secure in the honest-but-curious model.
\end{theorem}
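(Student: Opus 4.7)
The plan is to mirror the template of Theorem~\ref{theorem2} and construct three polynomial-time simulators $S_1,S_2,S_3$ in the ideal world, one per party, each taking only its own input and output and producing a transcript that is computationally indistinguishable from the corresponding real view under Definition~\ref{def4}. Because the privacy analysis in Table~\ref{tab:S3PM_Security} establishes $m_A=m_B=m_C=\{\varnothing\}$, no rank constraint on $R_a,R_b,R_c$ is required, which will simplify the surrogate construction relative to S2PM.

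First I would treat Alice, whose real-world view ${view}_1^\pi(A,B,C)$ contains $\{A,R_a,r_a,S_t,\psi_2,\gamma_2,B_1,VF_b,VF_c\}$ together with her own internal computations. The simulator $S_1$, given $(A,V_a)$, samples random surrogates $B',C',R_b',R_c',r_b',r_c',V_b',V_c'$ subject to the consistency conditions $A\cdot B'\cdot C'=V_a+V_b'+V_c'$ and $R_a\cdot R_b'\cdot R_c'=r_a+r_b'+r_c'=S_t$; both systems are underdetermined over $\mathbb{R}$ and admit infinitely many solutions. $S_1$ then symbolically re-executes Algorithm~\ref{alg:S3PM-Computing} on these surrogates, producing $\psi_2'=\hat{B}'\hat{C}'$, $\gamma_2'=R_b'\hat{C}'$, the Gauss-elimination factor $B_1'$ of $\hat{B}'$, and the check matrices $VF_b',VF_c'$ through the prescribed formulas. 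Since each surrogate is drawn from the same real-number field with the same distribution as its true counterpart, the simulated messages are distributed identically, yielding $S_1(A,V_a)\overset{c}{\equiv}{view}_1^\pi(A,B,C)$.

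The simulators $S_2$ and $S_3$ are built symmetrically: $S_2$ receives $(B,V_b)$ and samples consistent $(A',C',R_a',R_c',r_a',r_c',V_a',V_c')$ before synthesizing $\hat{A}',\hat{C}',T_a',VF_a',VF_c'$; $S_3$ receives $(C,V_c)$ and samples $(A',B',R_a',R_b',r_a',r_b',V_a',V_b')$ before synthesizing $\psi_1',\gamma_1',B_2',t_1',VF_a',T_b',VF_b'$. In each case the underdetermined surrogate system provides enough freedom to reproduce the real-world distribution of every received message. The output-constancy requirements in equation~(\ref{15}) then follow from the same surrogate-infinity argument: for any alternative pair $(B^*,C^*)$ there exist compatible random masks delivering the same $V_a$, so $f_1(A,B,C)\equiv f_1(A,B^*,C^*)$, and analogously for $f_2$ and $f_3$.

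The hardest part will be the full-rank decomposition in step~6 of Algorithm~\ref{alg:S3PM-Computing}, because $B_1$ and $B_2$ are deterministic functions of $\hat{B}$ that appear in two different parties' views (Alice receives $B_1$, Carol receives $B_2$). The argument must establish that $\hat{B}=B+R_b$ and its surrogate $\hat{B}'=B'+R_b'$ are both uniformly distributed matrices of the same shape, so applying the same deterministic Gauss-elimination rule yields identically distributed factors with $B_1'\overset{c}{\equiv}B_1$ and $B_2'\overset{c}{\equiv}B_2$. A careful joint check is also needed to confirm that the cross-messages exchanged during the protocol (notably $\psi_1,\gamma_1,t_1,t_2$) do not allow any single simulator to recover an unmasked copy of $R_b$; this is where the three-party structure of the CS preprocessing becomes essential, since every product involving $R_b$ is sandwiched between a fresh mask contributed by another party.
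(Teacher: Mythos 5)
Your proposal matches the paper's proof essentially step for step: three ideal-world simulators built by sampling surrogates from the underdetermined real-number systems $A\cdot B'\cdot C'=V_a+V_b'+V_c'$ and $R_a\cdot R_b'\cdot R_c'=r_a+r_b'+r_c'$, symbolic re-execution of Algorithm~\ref{alg:S3PM-Computing} on those surrogates, indistinguishability of every derived message, and the output-constancy conditions of equation~(\ref{15}), with $S_2,S_3$ treated symmetrically exactly as the paper does. The only cosmetic difference is the justification for the full-rank decomposition step --- you argue that $\hat{B}'$ and $\hat{B}$ are identically distributed so the deterministic factorization transfers, while the paper invokes the non-uniqueness $\hat{B}'=(B_1'\cdot P)\cdot(P^{-1}\cdot B_2')$ --- but both routes reach the same conclusion $\{B_1',B_2'\}\overset{c}{\equiv}\{B_1,B_2\}$.
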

\begin{proof}
    The construction of simulators of the proof is similar to that of theorem \ref{theorem2}. Without loss of generality, we assume all data are defined on real number field and illustrate three simulators $S_1$, $S_2$ and $S_3$.    
    \noindent\textbf{(Privacy against Alice):}
    $S_1$ simulates ${view}_1^\pi (A,B,C)$ and we prove that $S_1(A,V_a)$ is indistinguishable from ${view}_1^\pi (A,B,C)$. $S_1$ receives $(A,V_a)$ (input and output of Alice) as input and proceeds as follows:
    \begin{enumerate}
        \item $S_1$ computes $\hat{A} = A + R_a$, and then chooses two pairs of random matrices $(B', {V_b}',C',{V_c}')$ and $({R_b}',{r_b}',{R_c}',{r_c}')$ by solving $A\cdot B'\cdot C' = V_a + {V_b}'+{V_c}'$ and $R_a\cdot {R_b}'\cdot {R_c}' = r_a + {r_b}' + {r_c}'$;
        \item $S_1$ performs the following calculations $\hat{C}' = {C}' + {R_c}'$, $\hat{B}' = {B}' + {R_b}'$, ${\psi_1}'=\hat{A}\cdot {\hat{B}}'$, ${\gamma_1}'=\hat{A}\cdot {R_b}'$, ${\psi_2}'={\hat{B}}'\cdot {\hat{C}}'$, ${\gamma_2}'={R_b}'\cdot {\hat{C}}'$, ${M_b}'=\hat{A}\cdot {R_b}'\cdot {\hat{C}}'$,        ${S_a}' = R_a \cdot {R_b}'\cdot {\hat{C}}'$, ${M_a}' = A\cdot {\hat{B}}'\cdot {\hat{C}}'$, ${S_c}' = \hat{A}\cdot {R_b}'\cdot {R_c}'$, ${M_c}' = \hat{A}\cdot {\hat{B}}'\cdot {R_c}'$;
        \item $S_1$ decomposes matrix ${\hat{B}}'$ into a CFR matrix ${B_1}'$ and RFR matrix ${B_2}'$ such that ${\hat{B}}'={B_1}'\cdot {B_2}'$;
        \item $S_1$ computes ${VF_a}'= {M_a}'+{S_a}'-{V_a}$, ${T_a}' = {VF_a}'-{r_a}'$, ${t_1}' = {R_a}'\cdot {B_1}'$, ${VF_b}'=-{M_b}'-{V_b}'$, ${T_b}' ={T_a}'-{VF_b}'-{r_b}'$, ${t_2}'={B_2}'\cdot {R_c}', {S_b}' =  R_a\cdot {\hat{B}}'\cdot {R_c}'$, ${V_c}' ={T_b}' - {M_c}' + {S_b}' + {S_c}' - {r_c}'$ and ${VF_c}'=-{M_c}'+{S_c}'+{S_b}'-{V_c}'$;
        \item Finally, $S_1$ outputs $V_a$;
    \end{enumerate}
    \noindent$S_1$ simulates the message list of Alice as $S_1 (A,V_a)=\{A,R_a,r_a,{\psi_2}',{r_2}',{B_1}',{VF_b}',{VF_c}',V_a\}$, and the view of Alice in the real-world is ${view}_1^\pi (A,B,C) = \{A,R_a,r_a,\psi_2,r_2,B_1,VF_b,VF_c,V_a\}$. Because the choices of $\{B', C', {R_b}', {R_c}', {r_b}',{r_c}', {V_b}', {V_c}'\}$ are random and simulatable ($\{B', {R_b}', C', {R_c}', {V_b}', {V_c}'\} \overset{c}{\equiv} \{B, R_b, C, R_c, V_b,V_c\}$), $\{{\hat{B}}', {\hat{C}}'\} \overset{c}{\equiv} \{\hat{B},\hat{C}\}$ can be derived. Substituting these random variable in step (2) leads to $\{{\psi_2}', {r_2}', {M_a}', {M_b}', {M_c}', {S_a}', {S_b}', {S_c}'\} \overset{c}{\equiv} \{{\psi_2}, {r_2}, {M_a}, {M_b}, {M_c}, {S_a}, {S_b}, {S_c}\}$. Besides, there are infinite factorization options ${B_1}'$ and ${B_2}'$ satisfying full-rank decomposition of ${\hat{B}}'$ (where $\hat{B}'=B_1'\cdot B_2'=(B_1'\cdot P)\cdot (P^{-1}\cdot B_2')$ and P can be any non-singular matrix) so $\{{B_1}', {B_2}'\} \overset{c}{\equiv} \{B_1, B_2\}$.    
    Note that ${V_c}'$ can be expanded as ${V_c}' ={T_b}' - {M_c}' + {S_b}' + {S_c}' - {r_c}'= ({T_a}'-{VF_b}'-{r_b}') - {M_c}' + {S_b}' + {S_c}' - {r_c}'= {VF_a}'- {VF_b}'- {M_c}' + {S_b}' + {S_c}'-{r_a}'- {r_b}' - {r_c}'$. Also because ${VF_a}'$ and ${VF_b}'$ are deduced from the indistinguishable intermediate variables $\{{M_a}',{M_b}',{S_a}',{V_b}'\}$ in step (2), $\{{VF_a}',{VF_b}'\} \overset{c}{\equiv} \{{VF_a},{VF_b}\}$. Finally, ${VF_c}' \overset{c}{\equiv} {VF_c}$ can be derived from $\{{VF_a}', {VF_b}', {M_c}', {S_b}', {S_c}', {r_a}', {r_b}', {r_c}'\} \overset{c}{\equiv} \{{VF_a}, {VF_b}, {M_c}, {S_b}, {S_c}, {r_a}, {r_b}, {r_c}\}$, and therefore  $S_1 (A,V_a) \overset{c}{\equiv} {view}_1^\pi (A,B,C)$.
    Moreover, to prove output constancy, since the inputs $B'$ and $C'$ from Bob and Carol are both random matrices, and Alice's output $V_a'=f_1(A,B',C')$ is identically equal to $V_a=f_1(A,B,C)$, it follows that $f_1(A,B',C') \equiv f_1(A,B,C)$ for any arbitrary input from others' inputs.

    \noindent\textbf{(Privacy against Bob or Carol):}
    Similarly, we can construct simulators $S_2$ and $S_3$ and prove
    \begin{align}\label{22}
    &S_2(y, f_2(x,y,z)) \overset{c}{\equiv} view_2^{\pi}(x,y,z)\nonumber\\
    &S_3(z, f_3(x,y,z)) \overset{c}{\equiv} view_3^{\pi}(x,y,z)\nonumber\\
    &f_2(x,y,z) \equiv f_2(x^*,y,z^*)\nonumber\\
    &f_3(x,y,z) \equiv f_3(x^*,y^*,z)
    \end{align} 
    
    \noindent Consequently, simulators $S_1$, $S_2$ and $S_3$ fulfill the criteria of computational indistinguishability and output constancy, as outlined in equations (\ref{14}) and (\ref{15}). Protocol $f(A,B,C)=A\cdot B\cdot C$ is secure in the honest-but-curious model.
\end{proof}
    
\subsection{S2PI}\label{S2PI}
\noindent The problem of S2PI is defined as:
\begin{problem}[Secure 2-Party Matrix Inverse Problem]
    Alice has an $n\times n$ matrix $A$ and Bob has an $n\times n$ matrix $B$. They want to conduct the inverse computation $f(A,B)=(A + B)^{-1}$ in the way that Alice outputs $V_a$ and Bob outputs $V_b$, where $V_a + V_b = (A + B)^{-1}$.
\end{problem}
\subsubsection{Description of S2PI}
\noindent The proposed S2PI is made of three consecutive call of S2PM as illustrated in Algorithm \ref{alg:S2PI}. The first two calls compute $P\cdot A\cdot Q$ and $P\cdot B\cdot Q$ in steps (1)$\sim$(4) with the private random matrices $P$ and $Q$ of Alice and Bob respectively. The third one in steps (5)$\sim$(7) removes $Q^{-1}$ and $P^{-1}$ in the inverse $T^{-1}=P^{-1}\cdot (A+B)^{-1}\cdot Q^{-1}$ and output $U_a$ and $U_b$ such that $U_a + U_b = (A + B)^{-1}$. 
S2PM is used as a sub-protocol in S2PI and strictly follows the preprocessing, online computing, verification phases. For the analysis of the check failure probability of S2PI ${Pf}_{S2PI}$, note that S2PI detects the failure when any one of the three S2PMs detects failure, so ${Pf}_{S2PI} = ({Pf}_{S2PM})^{3} \leq (\frac{1}{4^l})^3 \approx 7.52\times 10^{-37} (l = 20)$.
\renewcommand{\thealgorithm}{7} 
    \begin{breakablealgorithm}
        \caption{S2PI}
        \label{alg:S2PI}
        \begin{algorithmic}[1]
            \Require Alice holds a private matrix  $A \in \mathbb{R}^ {n \times n}$, Bob holds a private matrix $B \in \mathbb{R}^{n\times n}$
            \Ensure Alice and Bob output $U_a \in\mathbb{R}^ {n \times n} $ and $U_b \in\mathbb{R}^ {n \times n}$ respectively.
            \State Alice generates a random non-singular matrix $P \in \mathbb{R}^{n \times n}$ and computes the matrix $I_A = P\cdot A$, where $I_A \in \mathbb{R}^{n\times n}$;
            \State Bob generates a random non-singular matrix $Q \in \mathbb{R}^{n \times n}$ and computes the matrix $I_B = B\cdot Q$, where $I_B \in \mathbb{R}^{n\times n}$;
            \State Alice and Bob jointly compute $I_A\times Q=V_{a1} + V_{b1} = (P\cdot A)\cdot Q$ with \textbf{S2PM} protocol, where $V_{a1}, V_{b1} \in \mathbb{R}^{n \times n}$;
            \State Alice and Bob jointly compute $P\times I_B=V_{a2} + V_{b2} = P\cdot (B\cdot Q)$ with \textbf{S2PM} protocol, where $V_{a2}, V_{b2} \in \mathbb{R}^{n \times n}$;
            \State Alice computes $V_a = V_{a1} + V_{a2}$ and sends $V_a \Rightarrow \text{Bob}$;
            \State Bob computes $V_b = V_{b1} + V_{b2}$, $T = (V_a + V_b)$, and $I_B^* = Q\cdot T^{-1}$, where $T,I_B^* \in \mathbb{R}^{n \times n}$;
            \State Alice and Bob jointly compute $I_B^*\times P=U_a + U_b = Q\cdot T^{-1} \cdot P$ with \textbf{S2PM} protocol, where $U_a, U_b \in \mathbb{R}^{n \times n}$;
            \State \Return $U_a$,$U_b$.
        \end{algorithmic}
    \end{breakablealgorithm}

\noindent Furthermore, the correctness of results can be easily verified with $U_a+U_b=Q\cdot T^{-1}\cdot P=Q\cdot (V_a+V_b)^{-1}\cdot P=Q\cdot [(V_{a1}+V_{a2})+(V_{b1}+V_{b2})]^{-1}\cdot P=Q\cdot [(V_{a1}+V_{b1})+(V_{a2}+V_{b2})]^{-1}\cdot P=Q\cdot [(P\cdot A)\cdot Q+P\cdot (B\cdot Q)]^{-1}\cdot P=Q\cdot [Q^{-1}\cdot (A+B)^{-1}\cdot P^{-1}]\cdot P=(A+B)^{-1}$.

\subsubsection{Security of S2PI}
\noindent We prove the security of S2PI using a universal composability framework (UC)\cite{10.5555/874063.875553} that is supported by an important property that a protocol is secure when all its sub-protocols, running either in parallel or sequentially, are secure. Therefore, to simplify the proof of S2PI security, the following Lemma is used\cite{bogdanov2008sharemind}. 

\begin{lemma}\label{lemma2}
    A protocol is perfectly simulatable if all the sub-protocols are perfectly simulatable.
\end{lemma}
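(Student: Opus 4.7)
The plan is to prove Lemma \ref{lemma2} by a sequential composition (hybrid) argument in the UC style. Write the protocol $\pi$ as an ordered invocation of its sub-protocols $\pi_1, \pi_2, \ldots, \pi_k$, where each $\pi_i$ admits a perfect simulator $S_i$ by hypothesis. Recall that \emph{perfect} simulatability means the simulator's output and the real-world view are identically distributed, not merely computationally close, so the composition will avoid any accumulating statistical slack and only requires a clean chaining of distributions.

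First I would construct the composite simulator $S$ as follows: given the corrupted party's global input and output for $\pi$, $S$ determines (inductively) the local input $x_i$ and local output $y_i$ that the corrupted party would supply to and receive from each sub-protocol $\pi_i$. Because the corrupted party's behaviour between sub-protocol calls is a deterministic function of its own view and coins, the pairs $(x_i, y_i)$ are well-defined once $S$ has generated the simulated transcripts for $\pi_1, \ldots, \pi_{i-1}$. $S$ then invokes $S_i$ on $(x_i, y_i)$ to produce a simulated view of $\pi_i$, and concatenates all simulated sub-views with the adversary's local computations to form the full view.

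To show that $S(x, f(x)) \equiv \mathrm{VIEW}^{\pi}(x)$, I would use hybrids $H_0, H_1, \ldots, H_k$, where $H_j$ is the distribution in which the first $j$ sub-protocol views are produced by $S_1, \ldots, S_j$ and the remaining $k-j$ are generated by real executions of $\pi_{j+1}, \ldots, \pi_k$. By the perfect simulatability of $\pi_{j}$, the transition $H_{j-1} \mapsto H_j$ replaces one component by an identically distributed one conditioned on the adversary's inputs and outputs for $\pi_j$, hence $H_{j-1} \equiv H_j$. Transitivity through all $k$ hybrids then yields $H_0 \equiv H_k$, which is exactly $\mathrm{VIEW}^{\pi}(x) \equiv S(x,f(x))$. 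The output-constancy constraint analogous to equations (\ref{13}) and (\ref{15}) is inherited because each $S_i$ already enforces it on the sub-protocol level and honest inputs at stage $i$ can be arbitrary.

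The main obstacle will be carefully handling the chaining of intermediate states between sub-protocols: the output of $\pi_i$ that feeds $\pi_{i+1}$ is generally not part of the corrupted party's final output of $\pi$, so $S$ must be able to produce a consistent $(x_{i+1}, y_{i+1})$ from the simulated view so far. This is resolved by the observation that in a semi-honest execution the corrupted party computes its next input as a deterministic function of its current view, and by noting that any fresh randomness it introduces can be sampled by $S$ itself; no additional information about the honest parties is required, so the chain of simulators closes correctly.
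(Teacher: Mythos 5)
The paper does not actually prove Lemma~\ref{lemma2}: it is imported by citation from the Sharemind paper and used as a black box in the UC-style argument for Theorem~\ref{theorem5}. Your hybrid argument is the standard route one would take to prove such a sequential-composition statement, and the hybrid chaining itself ($H_{j-1}\equiv H_j$ by perfect simulatability of $\pi_j$, hence $H_0\equiv H_k$ by transitivity, with no statistical slack because the simulations are perfect) is sound as far as it goes.

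However, there is a genuine gap in the construction of the composite simulator $S$, precisely at the point you flag as "the main obstacle" but do not actually resolve. To invoke $S_i$ you must feed it the pair $(x_i, y_i)$, where $y_i$ is the corrupted party's \emph{output} from sub-protocol $\pi_i$. Your resolution --- that the corrupted party's next input is a deterministic function of its view and coins --- explains how $x_{i+1}$ is derived from the view once $y_i$ is in hand, but it does not explain how $S$ obtains $y_i$ in the first place: $y_i$ is determined by running $\pi_i$ on $(x_i,\text{honest inputs})$, and $S$ has no access to the honest inputs, nor is $y_i$ in general recoverable from the corrupted party's final output $f(x)$. Indeed, the lemma as literally stated is false without a further hypothesis: take secure sub-protocols that reveal $x+y$ to both parties, composed into a protocol whose final output is the constant $0$; each sub-protocol is perfectly simulatable for its own functionality, yet the composition leaks $x+y$. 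The standard fix is the modular composition theorem, which additionally requires the outer protocol to be secure in the hybrid model where sub-protocol calls are replaced by ideal oracle calls --- the hybrid-model simulator is what produces the $y_i$'s. In this paper that extra hypothesis is supplied implicitly by the data-disguising structure: every intermediate output handed to a party is a fresh uniformly random share (e.g., $V_b$ chosen uniformly and $V_a$ determined by $V_a+V_b=A\cdot B$), so $S$ can sample each $y_i$ itself from a known distribution independent of the honest inputs. Your proof needs to either add the hybrid-model security hypothesis explicitly or invoke this structural property of the sub-protocols; as written, the chain of simulators does not close.
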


\begin{theorem}\label{theorem5}
    The S2PI protocol $f(A,B)=(A+B)^{-1}$ is secure in the honest-but-curious model.
\end{theorem}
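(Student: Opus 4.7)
The plan is to invoke the universal composability framework together with Lemma \ref{lemma2}, reducing the security of S2PI to that of its three S2PM sub-protocols already proved secure in Theorem \ref{theorem2}. The S2PI execution consists of three sequential S2PM invocations interleaved with purely local computations and one intermediate reveal in which Alice sends $V_a$ to Bob so that he can form $T = P\cdot(A+B)\cdot Q$ and invert it. By Lemma \ref{lemma2} it suffices to show that these local steps and the intermediate reveal do not leak information beyond what a simulator can produce from each party's input and output alone.

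First, I would enumerate the two views. Alice's view includes $\{A, P, V_{a1}, V_{a2}, U_a\}$ together with her S2PM transcripts from each of the three calls, while Bob's view includes $\{B, Q, V_{b1}, V_{b2}, V_a, T, T^{-1}, I_B^{\ast}, U_b\}$ together with his three S2PM transcripts. I would then construct simulators $S_1$ and $S_2$. $S_1(A,U_a)$ samples a random non-singular $P'$ and simulated shares $V_{a1}', V_{a2}'$, and for each of the three underlying multiplications invokes the S2PM simulator guaranteed by Theorem \ref{theorem2} to produce a transcript indistinguishable from Alice's real view. $S_2(B,U_b)$ symmetrically samples a random non-singular $Q'$ and a uniformly random matrix $T'$, sets $I_B^{\ast\prime}=Q'\cdot T'^{-1}$, and again invokes the three S2PM simulators. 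Provided the extra reveal of $T$ to Bob carries no additional information, Lemma \ref{lemma2} glues these pieces into valid simulators for the composed protocol.

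The main obstacle is precisely this last step: proving that $T = P\cdot(A+B)\cdot Q$, revealed to Bob during step 6, leaks nothing about $A$. Following the real-field security criterion of Definition \ref{def5}, for any alternative input $A^{\ast}$ with $A^{\ast}+B$ non-singular the equation $P^{\ast}\cdot (A^{\ast}+B)\cdot Q = T$ admits the explicit solution $P^{\ast} = T\cdot Q^{-1}\cdot(A^{\ast}+B)^{-1}$, so there exist infinitely many $(A^{\ast}, P^{\ast})$ pairs consistent with Bob's observation of $T$. Hence $T$ is stochastically indistinguishable from a freshly sampled non-singular matrix from Bob's perspective, and output constancy $f_2(A,B)\equiv f_2(A^{\ast},B)$ on the same shares holds because $U_b$ is determined by the last S2PM share, not by $A$ directly. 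A symmetric argument using Alice's hidden $Q$ handles privacy against Alice. Combining these two constant-output arguments with the S2PM simulators through Lemma \ref{lemma2} yields $S_i(\cdot) \overset{c}{\equiv} view_i^{\pi}(\cdot)$ for $i=1,2$, which is exactly the semi-honest security condition required by Theorem \ref{theorem5}.
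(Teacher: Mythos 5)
Your proposal is correct and follows essentially the same route as the paper: both reduce the security of S2PI to the sequential/parallel composition of the three S2PM invocations via the universal composability property of Lemma~\ref{lemma2} and the S2PM simulators from Theorem~\ref{theorem2}. The one place you go beyond the paper is in explicitly justifying the reveal of $T=P\cdot(A+B)\cdot Q$ to Bob via the Definition~\ref{def5} argument (infinitely many consistent $(A^{\ast},P^{\ast})$ pairs since $P$ stays hidden), a step the paper's proof passes over with the remark that the summation $T=V_a+V_b$ is ``easily proven'' indistinguishable; your treatment is the more rigorous of the two on that point.
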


\begin{proof}
    The protocol $f(A,B)=(A+B)^{-1}$ is implemented in a universal hybrid model in which the first step with two calls of S2PM running in parallel computes $T=P\cdot(A+B)\cdot Q=(P\cdot A)\cdot Q+P\cdot(B\cdot Q)=S2PM_1(P\cdot A,Q)+S2PM_2(P,B\cdot Q)$ and the second step invokes the third call of S2PM sequentially for the computation of $(A+B)^{-1}=S2PM_3({Q\cdot T}^{-1},P)$.
    According to Lemma \ref{lemma2}, the security of S2PI can be reduced to the security of the sequential composition of the above two steps respectively.
    Given that the outputs $(V_{a1,}V_{a2}), (V_{b1},V_{b2})$ from the first step with two calls of S2PM are individual to Alice and Bob respectively and the view in real-world is computationally indistinguishable from the simulated view in the ideal-world according to Theorem 4, the summation $T=V_a + V_b=(V_{a1}+V_{a2})+(V_{b1}+V_{b2})$ can be easily proven to be computational indistinguishable. The S2PM in second step can also be perfectly simulated in the ideal-world and is indistinguishable from the view in the real-world. Therefore, the S2PI protocol $f(A,B)=(A+B)^{-1}$ is secure in the honest-but curious model.
\end{proof}

\subsection{S2PHM}\label{S2PHM}
\noindent The problem of S2PHM is defined as:
\begin{problem}[Secure 2-Party Matrix Hybrid Multiplication Problem]
    Alice has private matrices $(A_1,A_2)$ and Bob has private matrices $(B_1,B_2)$, where $(A_1,B_1) \in \mathbb{R}^{n\times s}$,$(A_2,B_2) \in \mathbb{R}^{s\times m}$. They want to conduct the hybrid multiplication $f[(A_1, A_2), (B_1, B_2)] = (A_1 + B_1) \cdot (A_2+B_2)$ in which Alice gets $V_a$ and Bob gets $V_b$ such that $V_a + V_b =  (A_1 + B_1) \cdot (A_2+B_2)$.
\end{problem}
\subsubsection{Description of S2PHM}
\noindent S2PHM protocol is made of two calls of S2PM protocol as shown in Algorithm \ref{alg:S2PHM}.  
The probability of failing verification ${Pf}_{S2PHM} = {{Pf}_{S2PM}}^2 \leq (\frac{1}{4^l})^2 \approx 8.27\times 10^{-25w} ($l = 20$)$. And then, the correctness of results can be easily verified with $V_a+V_b=(V_{a0}+V_{a1}+V_{a2})+(V_{b0}+V_{b1}+V_{b2})=(V_{a0}+V_{b0})+(V_{a1}+V_{b1})+(V_{a2}+V_{b2})=A_1\cdot A_2+B_1\cdot B_2+A_1\cdot B_2+B_1\cdot A_2=(A_1+B_1)\cdot(A_2+B_2)$.
\renewcommand{\thealgorithm}{8} %
    \begin{breakablealgorithm}
        \caption{S2PHM}%
        \label{alg:S2PHM}
        \begin{algorithmic}[1] 
            \Require Alice holds private matrices ($A_1,A_2$), Bob holds private matrices ($B_1,B_2$), where $(A_1,B_1) \in \mathbb{R}^{n\times s}$ ,$(A_2,B_2) \in \mathbb{R}^{s\times m}$
            \Ensure Alice and Bob output $V_a, V_b \in\mathbb{R}^ {n \times m}$, respectively. 
            \State Alice computes $V_{a0} = A_1 \times A_2 \in \mathbb{R}^{n \times m}$;
            \State Bob computes $V_{b0} = B_1 \times B_2 \in \mathbb{R}^{n \times m}$;
            \State Alice and Bob jointly compute $V_{a1} + V_{b1} = A_1 \times B_2$ with \textbf{S2PM}, where $V_{a1}, V_{b1} \in \mathbb{R}^{n \times m}$;
            \State Alice and Bob jointly compute $V_{a2} + V_{b2} = B_1 \times A_2$ with \textbf{S2PM}, where $V_{a2}$, $V_{b2} \in \mathbb{R}^{n \times m}$;
            \State Alice sums $V_a = V_{a0} + V_{a1} + V_{a2}$, where $V_a \in \mathbb{R}^{n \times m}$;
            \State Bob sums $V_b = V_{b0} + V_{b1} + V_{b2}$, where $V_b \in \mathbb{R}^{n \times m}$;
            \State \Return $V_a,V_b$ 
        \end{algorithmic}
    \end{breakablealgorithm}
\noindent The following theorem about the security of S2PHM can be intuitively proved using the universal composability property in a similar way as S2PI.

\begin{theorem}\label{theorem7}
    The protocol S2PHM, which can be represented by $f[(A_1, A_2), (B_1, B_2)] = (A_1 + B_1) \cdot (A_2+B_2)$, is secure in the honest-but-curious model.
\end{theorem}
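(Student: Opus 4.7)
The plan is to mirror the Universal Composability style argument used for Theorem \ref{theorem5} and reduce the security of S2PHM to that of its two constituent S2PM invocations via Lemma \ref{lemma2}. Observe that steps (1), (2), (5), (6) of Algorithm \ref{alg:S2PHM} are purely local and involve no inter-party communication, so they contribute nothing to either party's view beyond what they already hold. The only interaction happens in steps (3) and (4), each of which is an independent call to S2PM, whose honest-but-curious security was established in Theorem \ref{theorem2}. By Lemma \ref{lemma2}, showing that each sub-protocol is perfectly simulatable suffices to conclude that the whole composition is perfectly simulatable.

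Concretely, I would construct a simulator $S_1$ for Alice as follows. On input $(A_1, A_2, V_a)$, $S_1$ first computes $V_{a0} = A_1 \cdot A_2$ locally, then invokes the S2PM simulator from Theorem \ref{theorem2} to produce a simulated transcript for the call $A_1 \times B_2$ with simulated output $V_{a1}'$, and analogously a simulated transcript for $B_1 \times A_2$ with simulated output $V_{a2}'$ subject to the constraint $V_{a2}' = V_a - V_{a0} - V_{a1}'$. This constraint can always be met because the S2PM simulator enjoys output constancy (equation (\ref{13})): for any choice of Bob's simulated inputs $B_1', B_2'$ the resulting $V_{a2}'$ can take any value consistent with Alice's final summation, since the S2PM construction allows infinitely many $(B_1', V_{b2}')$ pairs producing the same $V_{a2}'$. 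The simulator $S_2$ for Bob is defined symmetrically.

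Computational indistinguishability of $S_1(A_1, A_2, V_a)$ from $\text{VIEW}_1^{\pi}(A_1, A_2, B_1, B_2)$ then follows from the indistinguishability of each component S2PM transcript together with the fact that the local values $V_{a0}, V_a$ are deterministic functions of Alice's own input and output. The output constancy requirement $f_1[(A_1,A_2),(B_1',B_2')] \equiv f_1[(A_1,A_2),(B_1,B_2)]$ also inherits from the corresponding property of S2PM: since the two S2PM simulators can accommodate arbitrary Bob-side inputs while fixing Alice's output, the composed protocol can as well.

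The main obstacle I anticipate is not in any single sub-protocol call but in arguing that the two S2PM invocations can be simulated jointly without correlations leaking information. This is handled by noting that the two S2PM calls operate on disjoint inputs ($A_1, B_2$ versus $B_1, A_2$) and use independent CS-supplied randomness, so their real-world transcripts are statistically independent conditioned on the parties' inputs; consequently the simulators can be run independently and their outputs concatenated. Combined with Lemma \ref{lemma2}, this establishes that S2PHM satisfies the semi-honest security definition of Section \ref{Preliminaries}.
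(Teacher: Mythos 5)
Your proposal is correct and follows essentially the same route as the paper, which itself only remarks that Theorem \ref{theorem7} ``can be intuitively proved using the universal composability property in a similar way as S2PI,'' i.e.\ by reducing to the two S2PM calls via Lemma \ref{lemma2} and Theorem \ref{theorem2}. In fact you supply more detail than the paper does --- in particular the explicit simulator construction with the constraint $V_{a2}' = V_a - V_{a0} - V_{a1}'$ and the observation that the two S2PM instances use independent CS randomness --- all of which is consistent with the paper's intended argument.
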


\subsection{S3PHM}\label{S3PHM}
\noindent The problem of S3PHM is defined as:
\begin{problem}[Secure 3-Party Matrix Hybrid Multiplication Problem]
    Alice has $n\times s$ matrices $(A_1,A_2)$, Bob has $s\times t$ matrices $(B_1,B_2)$ and Carol has a $t\times m$ matrix C. They want to conduct the multiplication computation $f[(A_1,A_2),(B_1,B_2),C]=(A_1 + B_1)\cdot (A_2+B_2)\cdot C$ in which Alice gets $V_a$, Bob gets $V_b$ and Carol gets $V_c$ such that $V_a + V_b +V_c = (A_1 + B_1)\cdot (A_2 + B_2)\cdot C$.
\end{problem}
\subsubsection{Description of S3PHM}
\noindent S3PHM protocol can be achieved invoking twice S2PM protocols and trice run in parallel, and input from each sub-protocol is totally individual without any interactive.  
All sub-protocols strictly follow the same procedures as S2PM  or S3PM, and an upper bound probability of jointly verification can be intuitively illustrate as ${Pf}_{S3PHM}={{Pf}_{S2PM}}^2\cdot {{Pf}_{S3PM}}^2 \leq (\frac{1}{4^l})^2 \times (\frac{1}{8^l})^2 \approx 6.22\times 10^{-61}$ ($l = 20$) in the end.
\renewcommand{\thealgorithm}{9} %
    \begin{breakablealgorithm}
        \caption{S3PHM}%
        \label{alg:S3PHM}
        \begin{algorithmic}[1] %
            \Require Alice, Bob and Carol hold private matrices ($A_1,A_2$), ($B_1,B_2$), C where $(A_1,B_1) \in \mathbb{R}^{n\times s}$ ,$(A_2,B_2) \in \mathbb{R}^{s\times t}$ and $C\in \mathbb{R}^{t\times m}$, respectively.
            \Ensure Alice, Bob and Carol get $V_a, V_b, V_c \in\mathbb{R}^ {n \times m}$, respectively. 
            \State Alice computes $A^* = A_1 \times A_2 \in \mathbb{R}^{n \times t}$.
            \State Bob computes $B^* = B_1 \times B_2 \in \mathbb{R}^{n \times t}$;
            \State Alice and Carol jointly compute $V_{a0} + V_{c0} = A^*\times C$ with \textbf{S2PM}, where $V_{a0}, V_{c0} \in \mathbb{R}^{n \times m}$;
            \State Bob and Carol jointly compute $V_{b3} + V_{c3} = B^*\times C$ with \textbf{S2PM}, where $V_{b3}, V_{c3} \in \mathbb{R}^{n \times m}$;
            \State Alice, Bob, and Carol jointly compute \textbf{S3PM} $V_{a1} + V_{b1} + V_{c1} = A_1 \times B_2 \times C$, where $V_{a1}, V_{b1}, V_{c1} \in \mathbb{R}^{n \times m}$;
            \State Alice, Bob, and Carol jointly compute \textbf{S3PM} $V_{a2} + V_{b2} + V_{c2} = B_1 \times A_2 \times C$, where $V_{a2}, V_{b2}, V_{c2} \in \mathbb{R}^{n \times m}$;
            \State  Alice sums $V_a = V_{a0} + V_{a1} + V_{a2}$, where $V_a \in \mathbb{R}^{n \times m}$;
            \State Bob sums $V_b = V_{b1} + V_{b2} + V_{b3}$, where $V_b \in \mathbb{R}^{n \times m}$;
            \State Carol sums $V_c = V_{c0} + V_{c1} + V_{c2} + V_{c3}$, where $V_c \in \mathbb{R}^{n \times m}$.
            \State \Return $V_a,V_b,V_c$ 
        \end{algorithmic}
    \end{breakablealgorithm}
 
 And then, the correctness of results can be easily verified with $V_a+V_b+V_c=(V_{a0}+V_{a1}+V_{a2})+(V_{b1}+V_{b2}+V_{b3})+(V_{c0}+V_{c1}+V_{c2}+V_{c3})=(V_{a0}+V_{c0})+(V_{a1}+V_{b1}+V_{c1})+(V_{a2}+V_{b2}+V_{c2})+(V_{b3}+V_{c3})=A_1\cdot A_2\cdot C+A_1\cdot B_2\cdot C+B_1\cdot A_2\cdot C+B_1\cdot B_2\cdot C=(A_1+B_1)\cdot (A_2+B_2)\cdot C$.
 
The following theorem about the security of S3PHM can be intuitively proved using the universal composability property in a similar way as S2PI.

\begin{theorem}\label{theorem8}
    The protocol S3PHM, which can be represented by $f[(A_1,A_2),(B_1,B_2),C]=(A_1 + B_1)\cdot (A_2+B_2)\cdot C$, is secure in the honest-but-curious model.
\end{theorem}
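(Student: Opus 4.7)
The plan is to follow the same Universal Composability strategy used for the proofs of Theorem \ref{theorem5} (S2PI) and Theorem \ref{theorem7} (S2PHM), reducing the security of S3PHM via Lemma \ref{lemma2} to the security of its constituent sub-protocols and the observation that all non-interactive steps reveal nothing beyond what each party already holds.

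First, I would decompose Algorithm \ref{alg:S3PHM} into four interactive sub-protocol invocations plus purely local arithmetic. The local operations are the two products $A^{*}=A_{1}\cdot A_{2}$ computed by Alice, $B^{*}=B_{1}\cdot B_{2}$ computed by Bob, and the three final additive summations yielding $V_{a}, V_{b}, V_{c}$; none of these involve communication, so they contribute nothing to any party's real-world view beyond that party's own input and its output share. The interactive sub-protocols are two parallel calls of S2PM --- between Alice and Carol on $(A^{*},C)$ and between Bob and Carol on $(B^{*},C)$ --- and two parallel calls of S3PM computing $A_{1}\cdot B_{2}\cdot C$ and $B_{1}\cdot A_{2}\cdot C$ among all three parties.

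Next, I would appeal to Theorem \ref{theorem2} and the S3PM security theorem to obtain simulators for each S2PM and S3PM invocation that reproduce the corresponding party's view from only that party's input and its additive share of the sub-protocol's output. By Lemma \ref{lemma2} the parallel/sequential composition of perfectly simulatable sub-protocols is itself perfectly simulatable. Concretely, for each participant I would build a composite simulator that sequentially runs the sub-protocol simulators it belongs to, feeding each one the party's own inputs together with freshly sampled random shares of the four intermediate products; these shares are chosen so as to sum, together with the local contribution $V_{a0}$ or $V_{b0}$ where appropriate, to the party's prescribed output share $V_{a}$, $V_{b}$, or $V_{c}$. The output-constancy requirements analogous to equations (\ref{13}) and (\ref{15}) then carry over directly, since any one adversary's share can be matched by infinitely many alternative honest inputs that produce the same final product.

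The main obstacle will be verifying that the intermediate shares produced by different sub-protocol invocations are \emph{jointly} simulatable, i.e.\ that they remain indistinguishable when presented together rather than in isolation. This follows because in every S2PM/S3PM call the masking matrices $(R_{a},R_{b},R_{c},r_{a},r_{b},r_{c})$ are generated independently by the Commodity Server for that specific invocation; hence the shares $(V_{a0},V_{a1},V_{a2})$, $(V_{b1},V_{b2},V_{b3})$, and $(V_{c0},V_{c1},V_{c2},V_{c3})$ form mutually independent additive masks that collectively decompose the true product $(A_{1}+B_{1})\cdot(A_{2}+B_{2})\cdot C$. With joint computational indistinguishability and output constancy thus established for each composite simulator, S3PHM is secure in the honest-but-curious model.
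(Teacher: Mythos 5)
Your proposal is correct and follows essentially the same route the paper intends: the paper gives no explicit proof of Theorem~\ref{theorem8}, stating only that it "can be intuitively proved using the universal composability property in a similar way as S2PI," i.e.\ exactly the Lemma~\ref{lemma2}-based decomposition into two S2PM and two S3PM invocations plus local arithmetic that you carry out (and you additionally address joint simulatability and output constancy, which the paper leaves implicit). One small nit: in Algorithm~\ref{alg:S3PHM}, $V_{a0}$ is Alice's share from the S2PM with Carol rather than a purely local contribution (unlike in S2PHM), but this does not affect your argument.
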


\section{APPLICATIONS IN LINEAR REGRESSION}\label{Applications}
In this section, we will use the basic protocols above to build a secure 3-party LR (S3PLR) algorithm applied on distributed databases for joint multivariate analysis. The algorithm consists of two parts: the secure 3-party LR training (S3PLRT) and the secure 3-party LR prediction (S3PLRP).

\subsection{Heterogeneous Distributed Database}
Firstly, let $M=\{X:Y\}$ be a dataset consists of $n$ samples with $m$ features, denoted as $X_i=[x_{i1},x_{i2},\cdots ,x_{im}](1\leq i\leq n)$ and labels $Y=[y_1,y_2,\cdots,y_n]$. We presume a generic scenario in which the columns of dataset $M$ is vertically divided into $t$ databases $\{M_1:M_2:\cdots :M_t\}$ that belong to owners $P_1,P_2,\cdots ,P_t$ where $M_i\in\mathbb{R}^{n \times m_i},m+1=\Sigma_{i=1}^{t} {m_i}(1\leq i\leq t)$. These databases form a heterogeneous distributed
database when the following two constraints are satisfied (1) $ \forall i \neq j, M_i\cap M_j=\varnothing (1\leq i,j \leq t)$ and (2) $M_1\cup M_2\cup \cdots M_t = M$. Notably, the heterogeneous database can also denoted as $M=\{M_1:\textbf{0}\} + \cdots + \{\textbf{0}:M_i:\textbf{0}\}+\cdots +\{\textbf{0}:M_t\}=X_1+X_2+\cdots X_t$, where $X_i\in \mathbb{R}^{n\times (m+1)}(1\leq i\leq t)$.

\subsection{S3PLR}\label{S3PLR}
In a traditional linear regression $Y=X \cdot \beta$, the least square estimator of the weight parameter is $\beta=(X^T X)^{-1}(X^{T})Y$. With a test dataset $X^*$, the prediction is made as $\hat{Y}=X^*\cdot {\beta}$.
\begin{figure}[ht]
  \centering
  \includegraphics[width=1.0\hsize]{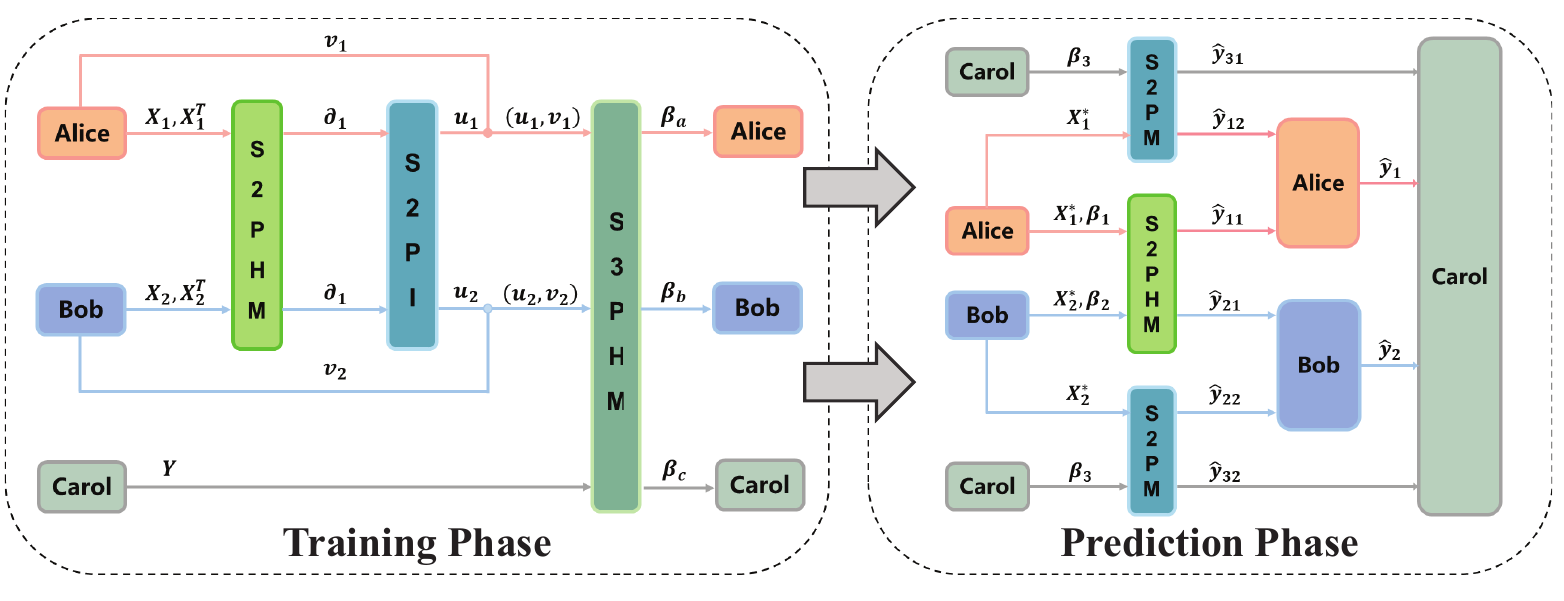}
  \vspace{-0.3cm}
  \caption{S3PLR Training and Prediction Based on Basic Secure Protocol Library}
  \label{fig:S3PLR}
\Description[<short description>]{<long description>}
\end{figure}
\subsubsection{Description of S3PLR}
\subsubsection{Definition of S3PLR Problem} 
In a 3-party heterogeneous database, the training and test datasets are divided into three parties $M=\{A:B:Y\}$ and $M^*=\{A^*:B^*:Y^*\}$. Using the convention above $X=(A:B)=X_1+X_2$ and $X^*=(A^*:B^*)=X_1^*+X_2^*$, the training and prediction problem of S3PLR is defined as: 
\begin{problem}[Secure 3-Party Linear Regression Training Problem]    
    Alice has a $n_1\times m$ training set $X_1$, Bob has a $n_1\times m$ training set $X_2$, and Carol has a $n_1\times 1$ label $Y$. They want to find out a solution $\beta \in \mathbb{R}^{m \times 1}$ satisfying $\beta=f(X_1,X_2,Y)=[(X_1+X_2)^T(X_1+X_2)]^{-1}\cdot (X_1^T+X_2^T)\cdot Y$, in which Alice gets $\beta_1$, Bob gets $\beta_2$, and Carol gets $\beta_3$ such that $\beta_1+\beta_2+\beta_3=\beta$.
\end{problem}
\begin{problem}[Secure 3-Party Linear Regression Prediction Problem]  
    Alice has a $n_2\times m$ test set $X_1^*$ with regression coefficient $\beta_1$, Bob has a $n_2\times m$ test set $X_2^*$ with regression coefficient $\beta_2$, and Carol has a regression coefficient $\beta_3$. They want to infer the prediction value of $\hat{Y}$ as $\hat{Y}=f((X_1^*,\beta_1),(X_2^*,\beta_2),\beta_3)=X^*\beta=(X_1^*+X_2^*)(\beta_1+\beta_2+\beta_3)$, where Alice gets $\hat{Y_1}$, Bob gets $\hat{Y_2}$, and Carol gets $\hat{Y_3}$ such that  $\hat{Y}=\hat{Y_1}+\hat{Y_2}+\hat{Y_3}$.
\end{problem}
\noindent The implementation of training and prediction of S3PLR is shown in figure \ref{fig:S3PLR}. The training stage can be straightforwardly reduced to a S2PHM problem, a S2PI problem and a S3PHM problem proposed in section \ref{Proposed-Work}. In the test stage, the solution $\hat{Y}=X^*\beta=(X_1^*+X_2^*)(\beta_1+\beta_2+\beta_3)=(X_1^*+X_2^*)(\beta_1+\beta_2)+(X_1^*\cdot \beta_3)+(X_2^*\cdot \beta_3)$ can also be reduced to combination of S2PM problem and S2PHM problem.

\noindent \textbf{Training Phase.}
The proposed training phase of S3PLR (S3PLRT) is made of three consecutive call of S2PHM, S2PI, and S3PHM as illustrated in Algorithm \ref{alg:S3PLRT}. The probability of failing the verification ${Pf}_{S3PLRT} = {{Pf}_{S2PM}}\cdot{{Pf}_{S2PI}}\cdot{{Pf}_{S3PHM}} \leq (\frac{1}{4^l})^7\times (\frac{1}{8^l})^2 \approx 3.87\times 10^{-121} ($l = 20$)$. Moreover, the correctness can be verified with $\beta_1+\beta_2+\beta_3=(u_1+u_2)(v_1+v_2)Y=(\partial_1+\partial_2)^{-1}(v_1+v_2)Y=[({X_1}^T+{X_2}^T)(X_1+X_2)]^{-1}({X_1}^T+{X_2}^T)Y=(X^TX)^{-1}X^TY=\beta$.

\renewcommand{\thealgorithm}{7} 
    \begin{breakablealgorithm}
        \caption{S3PLRT}
        \label{alg:S3PLRT}
        \begin{algorithmic}[1] 
            \Require Alice has a training set $X_1 \in \mathbb{R}^ {n_1 \times m}$ , Bob has a training set $X_2 \in \mathbb{R}^{n_1 \times m}$ , Carol has the label $Y\in \mathbb{R}^{n_1\times1}$ 
            \Ensure Alice,Bob and Carol get the model parameters $\beta_a, \beta_b, \beta_c \in\mathbb{R}^ {m \times 1}$, respectively.
            \State Alice and Bob execute \textbf{S2PHM} with their private matrices $X_1$ and $X_2 $ and get matrices $\partial_1, \partial_2 \in \mathbb{R}^{m \times m}$ respectively such that $\partial_1 + \partial_2 = (X_1^T + X_2^T) \cdot (X_1 + X_2)$.
            \State Alice and Bob use $\partial_1, \partial_2$ as input for a \textbf{S2PI} and produce the matrices $u_1, u_2 \in \mathbb{R}^{m \times m}$ at the two parties respectively, such that $u_1 + u_2 = (\partial_1 + \partial_2)^{-1}$;
            \State Alice computes $v_1=X_1^T$ and Bob computes $v_2=X_2^T$.
            \State Alice, Bob, and Carol collaboratively execute \textbf{S3PHM} with their respective matrix pairs $(u_1, v_1)$, $(u_2, v_2)$, and $Y$. The result is randomly split into matrices $\beta_1, \beta_2, \beta_3 \in \mathbb{R}^{m \times 1}$ at the three parties respectively, such that $\beta_1 + \beta_2 + \beta_3=(u_1 + u_2) \cdot (v_1 + v_2) \cdot Y$.
            \State \Return $\beta_1,\beta_2,\beta_3$ 
        \end{algorithmic}
    \end{breakablealgorithm}

\noindent \textbf{Prediction Phase.}
The proposed prediction phase of S3PLR (S3PLRP) is made of three parallel calls of a S2PHM and two S2PM as illustrated in Algorithm \ref{alg:S3PLRP}. The probability of failing a verification ${Pf}_{S3PLRP} = {{Pf}_{S2PM}}^2\cdot{{Pf}_{S2PHM}} \leq (\frac{1}{4^l})^4 \approx 6.84\times 10^{-49} ($l = 20$)$. Moreover, the correctness can be verified with $\hat{Y_1}+\hat{Y_2}+\hat{Y_3}=\widehat{y}_{11}+\widehat{y}_{12}+\widehat{y}_{21}+\widehat{y}_{22}+\widehat{y}_{31}+\widehat{y}_{32}=(\widehat{y}_{11}+\widehat{y}_{21})+(\widehat{y}_{12}+\widehat{y}_{31})+(\widehat{y}_{22}+\widehat{y}_{32})= (X_1^* + X_2^*) \cdot (\beta_1 + \beta_2)+ X_1^* \times \beta_3+X_2^* \times \beta_3=(X_1^*+X_2^*)(\beta_1+\beta_2+\beta_3)$.
\renewcommand{\thealgorithm}{8} 
    \begin{breakablealgorithm}
        \caption{S3PLRP}
        \label{alg:S3PLRP}
        \begin{algorithmic}[1] 
            \Require Alice has a test set  $X_1^* \in \mathbb{R}^{n_2\times m}$ with  model parameter fragment $\beta_1$, Bob has a test set $X_2^* \in \mathbb{R}^{n_2\times m}$with model parameter fragment $\beta_2$, Carol has model parameter fragment $\beta_3$, where $\beta_1,\beta_2,\beta_3 \in \mathbb{R}^{m\times 1}$
            \Ensure Alice, Bob and Carol get the model prediction  $\hat{Y_1},\hat{Y_2},\hat{Y_3} \in\mathbb{R}^ {n_2 \times 1}$,  respectively.  
            \State Alice and Bob execute \textbf{S2PHM} with their private matrices $(X_1^*,\beta_1)$ and $(X_2^*,\beta_2)$ and produce $\widehat{y}_{11}, \widehat{y}_{21} \in \mathbb{R}^{n_2 \times 1}$ at the two parties respectively, such that $\widehat{y}_{11} + \widehat{y}_{21} = (X_1^* + X_2^*) \cdot (\beta_1 + \beta_2)$;
            \State Alice and Carol execute \textbf{S2PM} with their private matrices $X_1^*$ and $\beta_3$ and produce $\widehat{y}_{12}, \widehat{y}_{31} \in \mathbb{R}^{n_2 \times 1}$ at the two parties respectively, such that $\widehat{y}_{12} + \widehat{y}_{31} = X_1^* \times \beta_3$;
            \State Bob and Carol execute \textbf{S2PM} with their private matrices $X_2^*$ and $\beta_3$ and produce $\widehat{y}_{22}, \widehat{y}_{32} \in \mathbb{R}^{n_2 \times 1}$ at the two parties respectively, such that $\widehat{y}_{22} + \widehat{y}_{32} = X_2^* \times \beta_3$;
            \State Alice computes $\hat{Y_1}=\widehat{y}_{11}+\widehat{y}_{12}$, Bob computes $\hat{Y_2}=\widehat{y}_{21}+\widehat{y}_{22}$ and Carol computes $\hat{Y_3}=\widehat{y}_{31}+\widehat{y}_{32}$;
            \State \Return $\hat{Y_1},\hat{Y_2},\hat{Y_3}$. 
        \end{algorithmic}
    \end{breakablealgorithm}
    
The following theorems \ref{theorem9} and \ref{theorem10} about the security of S3PLRP and S3PLRT can be intuitively proved using the universal composability property.
\begin{theorem}\label{theorem9}
    The protocol S3PLRT, which can be represented by $f((X_1^*,\beta_1),(X_2^*,\beta_2),\beta_3)=(X_1^*+X_2^*)(\beta_1+\beta_2+\beta_3)$, is secure in the honest-but-curious model.
\end{theorem}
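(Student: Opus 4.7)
The plan is to prove Theorem \ref{theorem9} by invoking the universal composability property together with Lemma \ref{lemma2}, in exactly the same pattern that was used for S2PI, S2PHM and S3PHM. First I would identify the decomposition of the protocol into its sub-protocol calls: according to Algorithm \ref{alg:S3PLRT} (respectively Algorithm \ref{alg:S3PLRP}), the computation is realized as a sequential/parallel composition of one S2PHM, one S2PI and one S3PHM (respectively one S2PHM and two parallel S2PM invocations), interleaved only with local summations of shares and local transpositions. Because Theorems \ref{theorem2}, \ref{theorem5}, \ref{theorem7} and \ref{theorem8} have already established that each of these sub-protocols is perfectly simulatable in the semi-honest real-number model, Lemma \ref{lemma2} immediately lifts perfect simulatability to their composition.

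Next I would construct the three simulators $S_1, S_2, S_3$ for Alice, Bob and Carol explicitly. Each $S_i$ receives only party $i$'s input and final output share ($X_i^\ast,\beta_i,\hat Y_i$ or the analogous training inputs), samples random matrices to play the role of the honest parties' inputs consistently with the hybrid model, and then invokes the simulators guaranteed by the sub-protocol security theorems on each call in the order prescribed by the algorithm. The outputs produced at each step are additive shares in $\mathbb R^{n\times m}$, so concatenating them with the simulated views gives a view that is computationally indistinguishable from the real-world view component-by-component; the final local sum (steps like $\hat Y_1=\widehat y_{11}+\widehat y_{12}$) is a deterministic function of already-indistinguishable shares and therefore preserves indistinguishability.

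For output constancy, I would argue as in the proofs of Theorems \ref{theorem2} and \ref{theorem5} that, because each party's final output is a uniformly random share of the joint result subject only to the reconstruction constraint, for any fixed input and output share of one party there exist infinitely many choices of the other parties' inputs in $\mathbb R$ consistent with the transcript. Concretely, fixing Alice's $(X_1^\ast,\beta_1,\hat Y_1)$ leaves $(X_2^\ast,\beta_2,\beta_3)$ and the other parties' shares free on an affine subspace of positive dimension, so the real-number security criterion of Definition \ref{def5} is satisfied.

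The main obstacle I anticipate is not any single sub-protocol (their security is already in hand) but the bookkeeping of the hybrid argument: one must verify that the intermediate quantities exchanged between sub-protocols — in particular the re-used share pairs $(\partial_1,\partial_2)$ feeding S2PI and $(u_1,v_1),(u_2,v_2)$ feeding S3PHM — do not create cross-protocol correlations that a distinguisher could exploit. I would handle this by noting that every share handed from one sub-protocol to the next is, from the recipient's view, already a uniformly random matrix masking the true value (this is exactly what each sub-protocol's simulator produces), so the composed simulator can sample them independently in the ideal world without losing indistinguishability, completing the reduction.
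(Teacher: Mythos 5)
Your proposal is correct and takes essentially the same route as the paper, which gives no explicit proof but simply asserts that Theorems \ref{theorem9} and \ref{theorem10} follow from the universal composability property (i.e.\ Lemma \ref{lemma2}) applied to the composition of the sub-protocols whose security was already established in Theorems \ref{theorem2}, \ref{theorem5}, \ref{theorem7} and \ref{theorem8}. Your write-up is in fact more detailed than the paper's one-sentence justification, particularly in spelling out the simulator construction, the output-constancy argument under Definition \ref{def5}, and the bookkeeping for shares handed between consecutive sub-protocol calls.
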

\begin{theorem}\label{theorem10}
    The protocol S3PLRP, which can be represented by $=f((X_1^*,\beta_1),(X_2^*,\beta_2),\beta_3)=X^*\beta=(X_1^*+X_2^*)(\beta_1+\beta_2+\beta_3)$, is secure in the honest-but-curious model.
\end{theorem}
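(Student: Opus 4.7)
The plan is to invoke Lemma~\ref{lemma2} and reduce the security of S3PLRP to the already-established security of its sub-protocols, following the same universal composability template used for Theorems~\ref{theorem5}, \ref{theorem7}, and \ref{theorem8}. Viewing Algorithm~\ref{alg:S3PLRP} in a hybrid model, S3PLRP is a parallel composition of one S2PHM call (between Alice and Bob on $(X_1^*,\beta_1),(X_2^*,\beta_2)$) and two S2PM calls (Alice--Carol on $(X_1^*,\beta_3)$ and Bob--Carol on $(X_2^*,\beta_3)$), followed only by local additions $\hat{Y_i}=\widehat{y}_{i1}+\widehat{y}_{i2}$ that involve no further communication.

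For each party I would assemble a simulator by concatenating the sub-protocol simulators. For example, against Alice I split her final output $\hat{Y_1}$ into two uniformly random shares $(\widehat{y}_{11},\widehat{y}_{12})$ with $\widehat{y}_{11}+\widehat{y}_{12}=\hat{Y_1}$, then invoke the S2PHM simulator from Theorem~\ref{theorem7} on the pair $((X_1^*,\beta_1),\widehat{y}_{11})$ together with the S2PM simulator $S_1$ from Theorem~\ref{theorem2} on $(X_1^*,\widehat{y}_{12})$. Because the real-world partial outputs $\widehat{y}_{11}$ and $\widehat{y}_{12}$ are themselves uniformly random shares by the CS pre-processing design, the split is legitimate; concatenating the two simulated views yields a transcript that is computationally indistinguishable from $\mathit{VIEW}_1^\pi$ by the indistinguishability clauses of the sub-protocol theorems. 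Analogous constructions $S_2^{LRP}$ and $S_3^{LRP}$ work for Bob and Carol by pairing their respective S2PHM and S2PM simulators. Output constancy transfers in the same way: each sub-protocol already guarantees that one party's output is unaffected by alternative non-adversarial inputs once the sub-protocol output is held fixed, so the sum $\hat{Y_i}$ inherits the same property.

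The hard part will be justifying the parallel composition given that a single party's input is reused across multiple sub-protocol invocations: Alice's $X_1^*$ appears in both the S2PHM with Bob and the S2PM with Carol, Bob's $X_2^*$ in both of its own calls, and Carol's $\beta_3$ in both S2PMs. I would argue that the CS pre-processing draws independent fresh disguising matrices $(R_a,r_a,R_b,r_b)$ for every sub-protocol call, so the masked values such as $X_1^*+R_a$ sent in one call are statistically independent of those sent in another. Consequently, the joint view factors as a product of sub-protocol views, no algebraic combination across calls cancels any mask, and Lemma~\ref{lemma2} applies directly. This independence, together with the already proved sub-protocol simulatability, closes the universal composability argument and establishes that $f((X_1^*,\beta_1),(X_2^*,\beta_2),\beta_3)=(X_1^*+X_2^*)(\beta_1+\beta_2+\beta_3)$ is secure in the honest-but-curious model.
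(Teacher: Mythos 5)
Your proposal is correct and takes essentially the same route as the paper: the paper offers no written proof of Theorem~\ref{theorem10} at all, stating only that it ``can be intuitively proved using the universal composability property,'' and your argument is precisely the intended instantiation of that claim via Lemma~\ref{lemma2} --- decomposing Algorithm~\ref{alg:S3PLRP} into one S2PHM and two S2PM calls, concatenating the sub-protocol simulators, and checking output constancy. Your treatment of the input-reuse issue (fresh, independent disguising matrices per call) is a genuine addition the paper glosses over entirely; the only small slip is attributing the uniform randomness of the output shares $\widehat{y}_{11},\widehat{y}_{12}$ to the CS pre-processing, when it actually comes from the counterparty's random choice of $V_b$ in the online phase of each S2PM call.
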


\section{THEORETICAL ANALYSIS}\label{Theoretical}
In this section, theoretical analysis about the computational complexity and communication cost of the protocols will be provided. The comparison is carried out on S2PM, S3PM, S2PI, S2PHM, S3PHM, S3PLRT and S3PLRP implemented with the method described in last section and with Du's work [15] respectively. Since Du's work (including S2PM, S2PI, and S2PLR with disclosed label) is limited to matrix operation between two parties, the three-party computation (S3PM, S3PHM, and S3PLR) is extended by consecutively invoking S2PM. The detailed comparison is summarized in Table \ref{tab:Complexity Analysis}.

\subsection{Computation Complexity}
The computation complexity is compared at preprocessing, online computing and verification stages for each protocol respectively.
The S2PM implementation is identical to Du's work except for the newly introduced verification module, which essentially is a multiplication between the $n\times m$ standard matrix $S_t$ and the $m\times 1$ vector $\hat{\delta_a}$ with the complexity of $\mathcal{O}(nm)$. As Du's scheme does not include a verification phase, the complexity of it is denoted as '$\backslash$' in Table \ref{tab:Complexity Analysis}.
For S3PM, a baseline protocol is constructed with Du's S2PM by conducting $\{f(A,B)=V_a+V_b=A\cdot B|A\in \mathbf{R}^{n\times s},B\in \mathbf{R}^{s\times t}\}$ and then $\{f(A,B,C)=(V_a+V_b)C=V_a\cdot C + V_b\cdot C|V_a,V_b\in \mathbf{R}^{n\times t},C\in \mathbf{R}^{t\times m}\}$, and therefore has a complexity of $\mathcal{O}(nst+ntm)$, in comparison to the complexity of $\mathcal{O}(ntm+nst+stm+nsm)$ for the S3PM protocol we proposed in this study. Because the protocols for other matrix operations are built on the combination of S2PM and S3PM, it is straightforward to analysis their complexity by following the corresponding protocols. 
Table \ref{tab:Complexity Analysis} summaries the complexity analysis. For easy of analysis, in the S3PLRT and S3PLRP, it is assumed that sample sizes satisfy $n_1=n_2=n$. Although the verification phase introduces extra cost in complexity, the total complexity of the protocols remain the cubic order of input dimensions that is identical to Du's work.

\begin{table}[htbp]
  \centering 
  \caption{Comparison of The Computation Complexity Between Our Proposed Scheme and Du's Scheme} 
    \resizebox{\linewidth}{!}{
    \begin{tabular}{ccccccccc}
    \toprule
    \multicolumn{2}{c}{\multirow{2}[4]{*}{\textbf{Protocol}}} & \multicolumn{5}{c}{\textbf{Phase Complexity}} & \multicolumn{2}{c}{\multirow{2}[4]{*}{\textbf{Total Complexity}}} \\
    \cmidrule{3-7}    
    \multicolumn{2}{c}{} & \multicolumn{2}{c}{\textbf{Preprocessing}} & \multicolumn{2}{c}{\textbf{Online Computing}} & \textbf{Verification} & \multicolumn{2}{c}{} \\
    \midrule
    \multirow{2}[1]{*}{\textbf{S2PM}} & \textbf{Our Scheme} & \multicolumn{2}{c}{$\mathcal{O}(nsm)$} & \multicolumn{2}{c}{$\mathcal{O}(nsm)$} & $\mathcal{O}(nm)$ & \multicolumn{2}{c}{$\mathcal{O}(nsm)$} \\
          & \textbf{Du's Scheme} & \multicolumn{2}{c}{$\mathcal{O}(nsm)$} & \multicolumn{2}{c}{$\mathcal{O}(nsm)$} & $\backslash$ & \multicolumn{2}{c}{$\mathcal{O}(nsm)$} \\
    \cmidrule{2-9}
    \multirow{2}[0]{*}{\textbf{S3PM}} & \textbf{Our Scheme} & \multicolumn{2}{c}{$\mathcal{O}(nst+ntm)$} & \multicolumn{2}{c}{$\mathcal{O}(ntm+nst+stm+nsm)$} & $\mathcal{O}(nm)$ & \multicolumn{2}{c}{$\mathcal{O}(ntm+nst+stm+nsm)$} \\
          & \textbf{Du's Scheme} & \multicolumn{2}{c}{$\mathcal{O}(nst+ntm)$} & \multicolumn{2}{c}{$\mathcal{O}(nst+ntm)$} & $\backslash$ & \multicolumn{2}{c}{$\mathcal{O}(nst+ntm)$} \\
    \cmidrule{2-9}
    \multirow{2}[0]{*}{\textbf{S2PI}} & \textbf{Our Scheme} & \multicolumn{2}{c}{$\mathcal{O}(n^3)$} & \multicolumn{2}{c}{$\mathcal{O}(n^3)$} & $\mathcal{O}(n^2)$ & \multicolumn{2}{c}{$\mathcal{O}(n^3)$} \\
          & \textbf{Du's Scheme} & \multicolumn{2}{c}{$\mathcal{O}(n^3)$} & \multicolumn{2}{c}{$\mathcal{O}(n^3)$} & $\backslash$ & \multicolumn{2}{c}{$\mathcal{O}(n^3)$} \\
    \cmidrule{2-9}
    \multirow{2}[0]{*}{\textbf{S2PHM}} & \textbf{Our Scheme} & \multicolumn{2}{c}{$\mathcal{O}(nsm)$} & \multicolumn{2}{c}{$\mathcal{O}(nsm)$} & $\mathcal{O}(nm)$ & \multicolumn{2}{c}{$\mathcal{O}(nsm)$} \\
          & \textbf{Du's Scheme} & \multicolumn{2}{c}{$\mathcal{O}(nsm)$} & \multicolumn{2}{c}{$\mathcal{O}(nsm)$} & $\backslash$ & \multicolumn{2}{c}{$\mathcal{O}(nsm)$} \\
    \cmidrule{2-9}
    \multirow{2}[0]{*}{\textbf{S3PHM}} & \textbf{Our Scheme} & \multicolumn{2}{c}{$\mathcal{O}(ntm+nst)$} & \multicolumn{2}{c}{$\mathcal{O}(ntm+nst+stm+nsm)$} & $\mathcal{O}(nm)$ & \multicolumn{2}{c}{$\mathcal{O}(ntm+nst+stm+nsm)$} \\
          & \textbf{Du's Scheme} & \multicolumn{2}{c}{$\mathcal{O}(ntm+nst)$} & \multicolumn{2}{c}{$\mathcal{O}(ntm+nst)$} & $\backslash$ & \multicolumn{2}{c}{$\mathcal{O}(ntm+nst)$} \\
    \cmidrule{2-9}
    \multirow{2}[0]{*}{\textbf{S3PLRT}} & \textbf{Our Scheme} & \multicolumn{2}{c}{$\mathcal{O}(m^3+m^2 n)$} & \multicolumn{2}{c}{$\mathcal{O}(m^3+m^2 n)$} & $\mathcal{O}(m^2)$ & \multicolumn{2}{c}{$\mathcal{O}(m^3+m^2 n)$} \\
          & \textbf{Du's Scheme} & \multicolumn{2}{c}{$\mathcal{O}(m^3+m^2 n)$} & \multicolumn{2}{c}{$\mathcal{O}(m^3+m^2 n)$} & $\backslash$ & \multicolumn{2}{c}{$\mathcal{O}(m^3+m^2 n)$} \\
    \cmidrule{2-9}
    \multirow{2}[1]{*}{\textbf{S3PLRP}} & \textbf{Our Scheme} & \multicolumn{2}{c}{$\mathcal{O}(nm)$} & \multicolumn{2}{c}{$\mathcal{O}(nm)$} & $\mathcal{O}(n)$ & \multicolumn{2}{c}{$\mathcal{O}(nm)$} \\
          & \textbf{Du's Scheme} & \multicolumn{2}{c}{$\mathcal{O}(nm)$} & \multicolumn{2}{c}{$\mathcal{O}(nm)$} & $\backslash$ & \multicolumn{2}{c}{$\mathcal{O}(nm)$} \\
    \bottomrule
    \end{tabular}}
  \label{tab:Complexity Analysis}
\end{table}%

\subsection{Communication Cost}
Compared with the local computation, network communication is the main bottleneck of interactive protocols, so they are in favor of method with lower commutations cost and smaller rounds of interaction. For ease of analysis, we assume that all matrices are square with dimension of $n\times n$ and each element is encoded in length $\ell$.The comparison is summerised in Table \ref{tab:Communication Analysis}.
For S2PM, the computation of $VF_a,VF_b,VF_c,S_t$ for verification introduces slightly more communication cost and rounds compared with the original Du's work. For S3PM, because the baseline method uses 3 calls of S2PM to accomplish the computation, the communication cost and rounds are higher compared with the proposed S3PM protocol. Intuitively, in other protocols made of S2PM and S3PM, the more S3PM they use, the lower communication cost and rounds they will have. Therefore, S3PHM, S3PLRT demonstrates lower communication cost, while S2PHM and S3PLRP are identical because they are purely based on S2PM. For S2PI, due to the optimization of separately stored $P$ and $Q$, the communication cost is reduced to $75\%$ of baseline with 4 calls of S2PM.

\begin{table}[htbp]
  \centering 
  \caption{Comparison of Communication Cost and Rounds Between Our Proposed Scheme and Du's Scheme} 
    \resizebox{\linewidth}{!}{
   \begin{tabular}{ccccccccccc}
    \toprule
    \multicolumn{2}{c}{\multirow{2}[4]{*}{\textbf{Protocol}}} & \multicolumn{4}{c}{\textbf{Communication Cost [\#bits]}} &       & \multicolumn{4}{c}{\textbf{\#Rounds}} \\
\cmidrule{3-6}\cmidrule{8-11}    \multicolumn{2}{c}{} & \multicolumn{2}{c}{\textbf{Our Scheme}} & \multicolumn{2}{c}{\textbf{Du's Scheme}} &       & \multicolumn{2}{c}{\textbf{Our Scheme}} & \multicolumn{2}{c}{\textbf{Du's Scheme}} \\
    \midrule
    \multicolumn{2}{c}{\textbf{S2PM}} & \multicolumn{2}{c}{$(11n^2)\ell$} & \multicolumn{2}{c}{$(7n^2)\ell$} 
    &       & \multicolumn{2}{c}{6} & \multicolumn{2}{c}{5} \\
    \multicolumn{2}{c}{\textbf{S3PM}} & \multicolumn{2}{c}{$(26n^2)\ell$} & \multicolumn{2}{c}{$(33n^2)\ell$} &       & \multicolumn{2}{c}{15} & \multicolumn{2}{c}{18} \\
    \multicolumn{2}{c}{\textbf{S2PI}} & \multicolumn{2}{c}{$(34n^2)\ell$} & \multicolumn{2}{c}{$(45n^2)\ell$} &       & \multicolumn{2}{c}{19} & \multicolumn{2}{c}{25} \\
    \multicolumn{2}{c}{\textbf{S2PHM}} & \multicolumn{2}{c}{$(22n^2)\ell$} & \multicolumn{2}{c}{$(22n^2)\ell$} &       & \multicolumn{2}{c}{12} & \multicolumn{2}{c}{12} \\
    \multicolumn{2}{c}{\textbf{S3PHM}} & \multicolumn{2}{c}{$(74n^2)\ell$} & \multicolumn{2}{c}{$(88n^2)\ell$} &       & \multicolumn{2}{c}{42} & \multicolumn{2}{c}{48} \\
    \multicolumn{2}{c}{\textbf{S3PLRT}} & \multicolumn{2}{c}{$(76n^2+54n)\ell$} & \multicolumn{2}{c}{$(101n^2+54n)\ell$} &       & \multicolumn{2}{c}{73} & \multicolumn{2}{c}{85} \\
    \multicolumn{2}{c}{\textbf{S3PLRP}} & \multicolumn{2}{c}{$(8n^2+36n)\ell$} & \multicolumn{2}{c}{$(8n^2+36n)\ell$} &       & \multicolumn{2}{c}{24} & \multicolumn{2}{c}{24} \\
    \bottomrule
    \end{tabular}}
  \label{tab:Communication Analysis}
\end{table}%

\section{PERFORMANCE EVALUATION}\label{Experiments}
\noindent This section evaluates the performance of five sub-protocols (S2PM, S3PM, S2PI, S2PHM, S3PHM) and their application in three-party linear regression.

\noindent\textbf{Experimental Setup:}
All EVA-S3PC protocols were implemented in Python as separate modules. Performance experiments for the 3-party computation were conducted on a cloud machine with 32 vCPUs (Intel Xeon Platinum 8358), 96 GB RAM, and Ubuntu 20.04 LTS.
To mitigate inconsistencies in communication protocols (gRPC vs. Socket) that can amplify performance differences between WAN and LAN, and to avoid excessive communication overhead masking protocol computation, we evaluated the EVA-S3PC framework in a LAN environment with 10.1 Gbit/s bandwidth and 0.1 ms latency, simulating optimal communication conditions.
The server infrastructure simulates five computing nodes via Docker. One acts as a semi-honest third party (CS server) sending random matrices during preparation and is not involved in further computation. Another serves as the requester, receiving only the final result. This exploratory study assumes an ideal semi-honest environment, excluding collusion.

\subsection{Evaluation of Sub-Protocols}\label{settings}
\noindent We benchmark the running time, communication overhead, and precision of five different sub-protocols for elementary operators in EVA-S3PC, and compare them with four previous SMPC schemes: LibOTe (OT) \cite{libOTe}, ABY3 (GC) \cite{mohassel2018aby3}, CryptGPU (SS) \cite{tan2021cryptgpu}, Tenseal (HE) \cite{benaissa2021tenseal}, and in a semi-honest environment. Since Du's work is purely theoretical without experimental implementation, we implemented their framework, named S2PM-Based, in Python for comparison in our experiments.

\noindent\textbf{Choice of Parameters:}
Although the proposed protocols are theoretically over the infinite field $\mathbb{R}$, real-world computers handle data in finite fields with limited precision length. Prior work \cite{wagh2019securenn} used truncation protocols in $\mathbb{Z}_{2^l}$ for fixed-point numbers to preserve precision, but this limits the range of floating-point computations. For fair comparison, input matrices for running time testing in 2-party and 3-party protocols are Float 64 (ring size $l=64$), with elements sampled from $x \in [10^{-4}, 10^{4}]$ and represented by a 15-bit significant figure "$1.a_1a_2\cdots a_{15}\times 10^\delta (\delta\in \mathbb{Z}=[-4,4]])$". Precision tests are divided into 6 ranges, from $[E0, E0]$ to $[E-10, E+10]$, with exponents incremented by $\Delta=2$.

\subsubsection{Efficiency and Overhead of Sub-protocols}
\noindent \autoref{tab:Runtime of various SMPC} and \autoref{fig: Overhead of Different SMPC Framework} show the average running time (including both computation and communication time) of five sub-protocols implemented by six representative schemes, along with the communication overhead (in KB) for performing 1000 repeated computations with matrix dimension of $N=10, 20, 30, 40, 50$. \autoref{fig: Allocation of S2PM/S3PM} illustrates the overhead of different modules in the S2PM and S3PM protocols, which are the foundational protocols used in EVA-S3PC. \autoref{fig: Overhead and Big Matrix} demonstrates the lightweight nature of the verification module and the efficiency of our EVA-S3PC protocols when applied to large matrix computations (dimension $N\geq1000$).
\begin{table}[htbp]
  \centering
  \caption{Comparison of total running time using various frameworks implementing five sub-protocols with matrix dimension of $N=10, 20, 30, 40, 50$. The speedup ratio of EVA-S3PC is relative to the S2PM based protocols which is in general better than other existing methods. $N/A$ indicates that HE does not support the S2PI protocol.}
    \resizebox{\linewidth}{!}{
    \begin{tabular}{cccccc>{\columncolor{gray!15}}c>{\columncolor{gray!30}}cc}
    \toprule
\multirow{2}[4]{*}{\textbf{Protocol}} & \multirow{2}[4]{*}{\textbf{Dimension}} & \multicolumn{6}{c}{\textbf{Running time (s)}}         & \multicolumn{1}{c}{\multirow{2}[4]{*}{\textbf{Speedup Ratio}}} \\
\cmidrule{3-8}          & \textbf{} & \textbf{HE} & \textbf{OT} & \textbf{GC} & {\textbf{SS}} & {\textbf{S2PM-Based}} & \textbf{EVA-S3PC} \\
    \midrule
    \multicolumn{1}{c}{\multirow{5}[2]{*}{\textbf{S2PM}}} & \textbf{10} & 3.11E+00 & 2.66E+00 & 3.75E-02 & 2.22E-03 & 3.47E-04 & \textbf{3.57E-04} & -2.84\% \\
          & \textbf{20} & 2.73E+01 & 1.04E+01 & 5.85E-02 & 2.84E-03 & 4.14E-04 & \textbf{4.26E-04} & -2.92\% \\
          & \textbf{30} & 9.10E+01 & 2.09E+01 & 7.95E-02 & 4.99E-03 & 5.35E-04 & \textbf{5.51E-04} & -2.88\% \\
          & \textbf{40} & 2.12E+02 & 3.85E+01 & 1.21E-01 & 8.44E-03 & 7.19E-04 & \textbf{7.38E-04} & -2.65\% \\
          & \textbf{50} & 4.09E+02 & 5.78E+01 & 1.88E-01 & 1.03E-02 & 8.70E-04 & \textbf{8.94E-04} & -2.81\% \\
    \midrule
    \multirow{5}[2]{*}{\textbf{S3PM}} & \textbf{10} & 4.33E+00 & 3.36E+00 & 5.23E-02 & 7.29E-03 & 1.01E-03 & \textbf{6.88E-04} & 31.62\% \\
          & \textbf{20} & 3.11E+01 & 1.41E+01 & 6.90E-02 & 8.17E-03 & 1.25E-03 & \textbf{8.95E-04} & 28.46\% \\
          & \textbf{30} & 9.95E+01 & 3.42E+01 & 1.04E-01 & 1.17E-02 & 1.83E-03 & \textbf{1.26E-03} & 30.86\% \\
          & \textbf{40} & 2.15E+02 & 6.43E+01 & 1.52E-01 & 1.91E-02 & 2.47E-03 & \textbf{1.69E-03} & 31.54\% \\
          & \textbf{50} & 4.56E+03 & 9.58E+01 & 2.45E-01 & 2.37E-02 & 3.05E-03 & \textbf{2.15E-03} & 29.43\% \\
    \midrule
    \multirow{5}[2]{*}{\textbf{S2PI}} & \textbf{10} & N\textbackslash{}A & 1.16E+01 & 1.15E-01 & 1.24E-02 & 7.65E-03 & \textbf{5.50E-03} & 28.07\% \\
          & \textbf{20} & N\textbackslash{}A & 4.85E+01 & 1.32E-01 & 2.23E-02 & 1.19E-02 & \textbf{8.14E-03} & 31.89\% \\
          & \textbf{30} & N\textbackslash{}A & 1.32E+02 & 1.45E-01 & 4.09E-02 & 1.63E-02 & \textbf{1.14E-02} & 29.85\% \\
          & \textbf{40} & N\textbackslash{}A & 2.17E+02 & 1.56E-01 & 5.51E-02 & 1.92E-02 & \textbf{1.30E-02} & 32.21\% \\
          & \textbf{50} & N\textbackslash{}A & 3.27E+02 & 1.62E-01 & 6.34E-02 & 2.47E-02 & \textbf{1.70E-02} & 31.12\% \\
    \midrule
    \multirow{5}[2]{*}{\textbf{S2PHM}} & \textbf{10} & 7.25E+01 & 6.30E+01 & 1.95E-01 & 7.31E-02 & 1.35E-03 & \textbf{1.42E-03} & -5.23\% \\
          & \textbf{20} & 8.13E+01 & 8.85E+01 & 2.35E-01 & 1.71E-01 & 1.51E-03 & \textbf{1.60E-03} & -5.93\% \\
          & \textbf{30} & 9.99E+01 & 1.14E+02 & 3.40E-01 & 2.90E-01 & 1.71E-03 & \textbf{1.81E-03} & -6.21\% \\
          & \textbf{40} & 2.11E+02 & 2.07E+02 & 5.06E-01 & 4.92E-01 & 2.05E-03 & \textbf{2.16E-03} & -5.07\% \\
          & \textbf{50} & 4.09E+02 & 3.36E+02 & 7.51E-01 & 7.44E-01 & 2.48E-03 & \textbf{2.60E-03} & -4.51\% \\
    \midrule
    \multirow{5}[2]{*}{\textbf{S3PHM}} & \textbf{10} & 7.95E+01 & 3.81E+01 & 3.28E-01 & 1.37E-01 & 5.95E-03 & \textbf{4.18E-03} & 29.63\% \\
          & \textbf{20} & 2.97E+02 & 9.37E+01 & 3.85E-01 & 2.44E-01 & 6.80E-03 & \textbf{4.87E-03} & 28.49\% \\
          & \textbf{30} & 5.57E+02 & 2.50E+02 & 5.23E-01 & 4.37E-01 & 7.04E-03 & \textbf{4.86E-03} & 31.01\% \\
          & \textbf{40} & 6.01E+02 & 4.92E+02 & 7.31E-01 & 6.82E-01 & 7.63E-03 & \textbf{5.40E-03} & 29.18\% \\
          & \textbf{50} & 8.89E+02 & 7.13E+02 & 1.13E+00 & 1.02E+00 & 9.00E-03 & \textbf{6.25E-03} & 30.51\% \\
    \bottomrule
    \end{tabular}}
  \label{tab:Runtime of various SMPC} 
\end{table}

\noindent\textbf{Comparison to Prior Work.}
As shown in \autoref{fig: Overhead of Different SMPC Framework}, as matrix dimensions increase, the communication overhead of HE and OT schemes significantly exceeds other protocols, while EVA-S3PM scheme incurs the smallest overhead (8\%-25\% of HE), with SS, GC, and S2PM-based method being slightly higher. \autoref{tab:Runtime of various SMPC} shows notable running time differences, with EVA-S3PC consistently outperforming others, achieving at least 28\% efficiency gains over S2PM based S3PM, S2PI, and S3PHM. In S2PM and S2PHM, our performance is slightly lower (2.65\%-6.21\%) due to the additional overhead from the security verification stage in the protocol.
HE performs poorly in all operators due to high communication costs and complexity and has difficulty in forming non-linear operations like matrix inversion S2PI. OT is less efficient in multiplication due to repeated mask generation and communication, and GC is limited by gate encryption/decryption in matrix multiplication. SS performs better than HE, OT and GC, thanks to its optimized local computation.
Both EVA-S3PC and Du’s work use random matrix-based disguising with natural advantages in linear operations by avoiding secret keys. As a result, EVA-S3PC outperforms others protocols in both communication overhead and running time.
\begin{figure}[hbtp]
    \centering
    \hspace{-0.5cm}
    \subfigure[S2PM Comm. Overhead]{\includegraphics[width=0.54\linewidth]{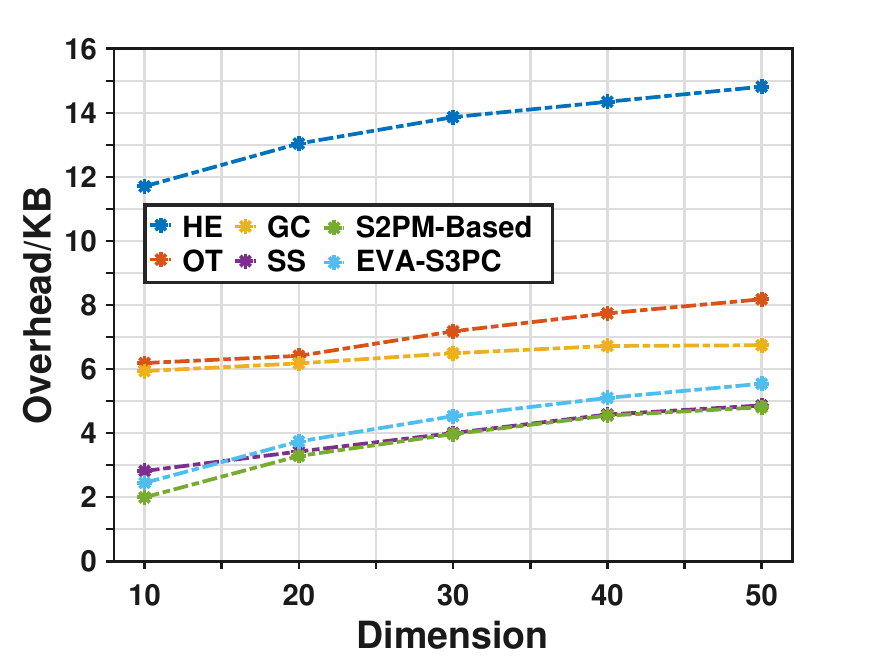}
    \label{fig: S2PM Comm.Overhead}}
    \hspace{-0.6cm}
    \subfigure[S3PM Comm. Overhead]{\includegraphics[width=0.54\linewidth]{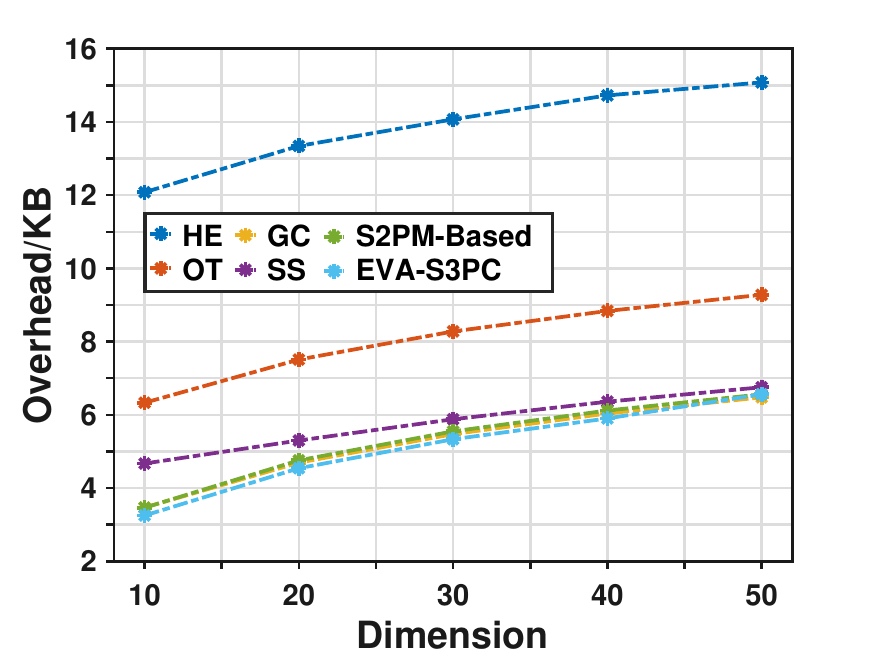}
    \label{fig: S3PM Comm.Overhead}}
    \hspace{-0.7cm}
    \vspace{-0.3cm}

    \centering
    \hspace{-0.5cm}
    \subfigure[S2PI Comm. Overhead]{\includegraphics[width=0.54\linewidth]{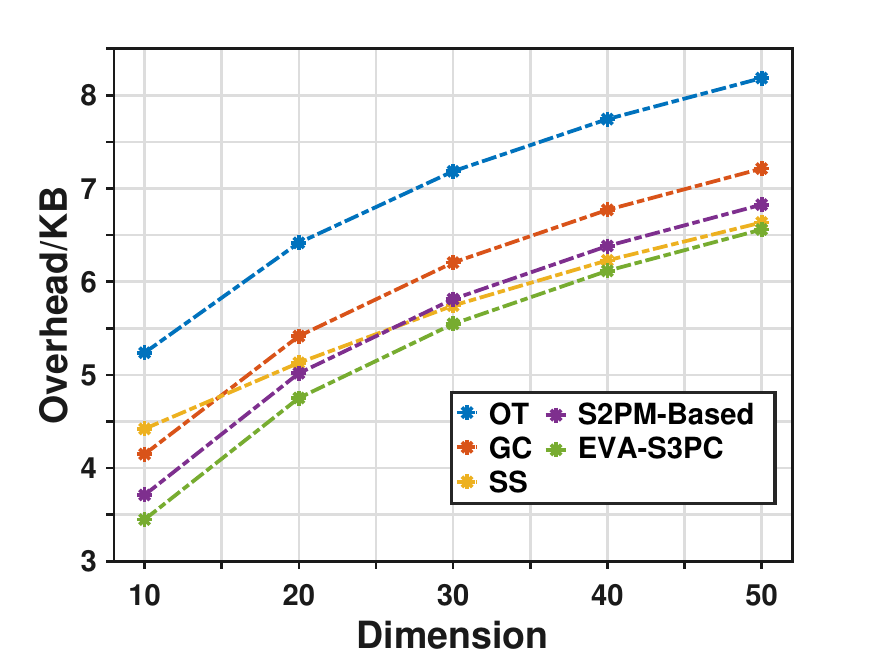}
    \label{fig: S2PI Comm.Overhead}}
    \hspace{-0.6cm}
    \subfigure[S3PHM Comm. Overhead]{\includegraphics[width=0.54\linewidth]{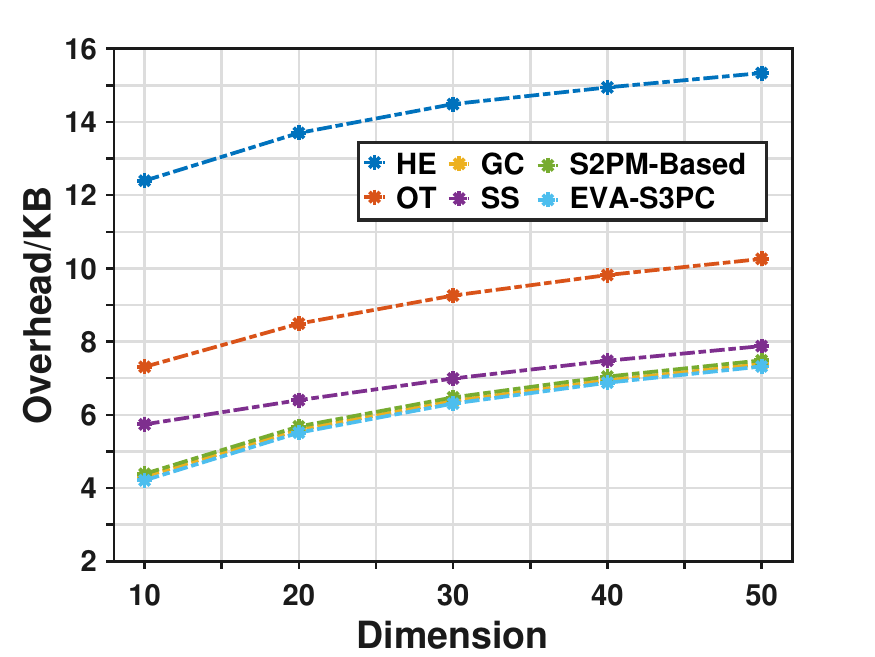}
    \label{fig: S3PHM Comm.Overhead}}
    \hspace{-0.7cm}
    \caption{Communication overhead (/KB) for classical SMPC frameworks in S2PM(a), S3PM(b), S2PI(c), and S3PHM(d) with matrix dimensions $N$ from 10 to 50 and numerical distribution $\delta$ in $[E-4, E+4]$ in $\mathbb{R}$. The HE method is excluded from S2PI due to lack of support for matrix inversion. As S2PHM involves two executions of S2PM, its communication overhead is similar and therefore not shown.}
    \label{fig: Overhead of Different SMPC Framework}
    \Description[<short description>]{<long description>}
\end{figure}

\begin{figure}[!htp]
    \centering
    \hspace{-0.5cm}
    \subfigure[S2PM Stage Time]{\includegraphics[width=0.54\linewidth]{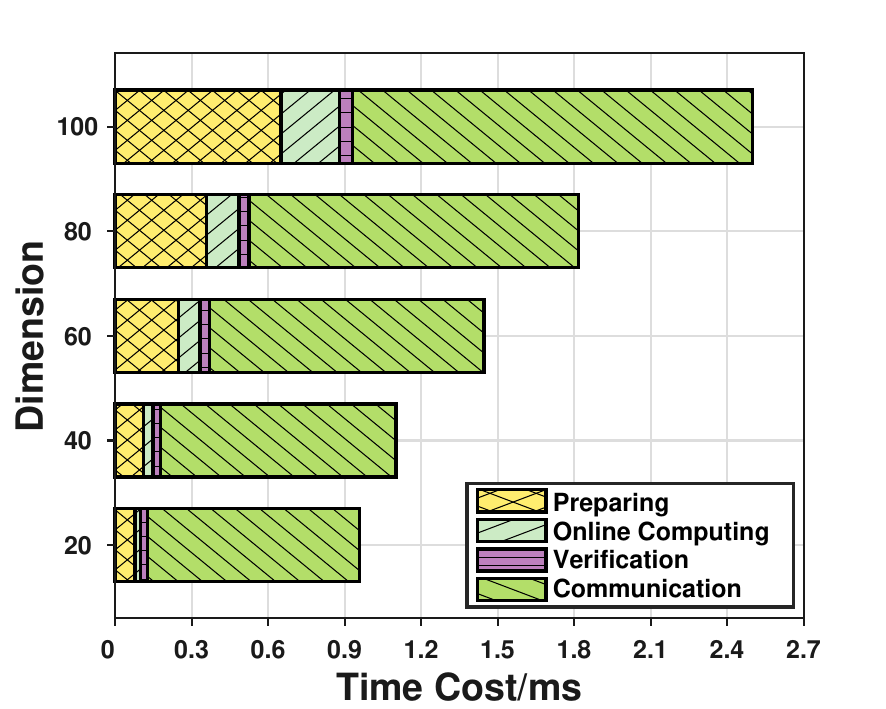}
    \label{fig: Allocation of S2PM}}
    \hspace{-0.6cm}
    \subfigure[S3PM Stage Time]{\includegraphics[width=0.54\linewidth]{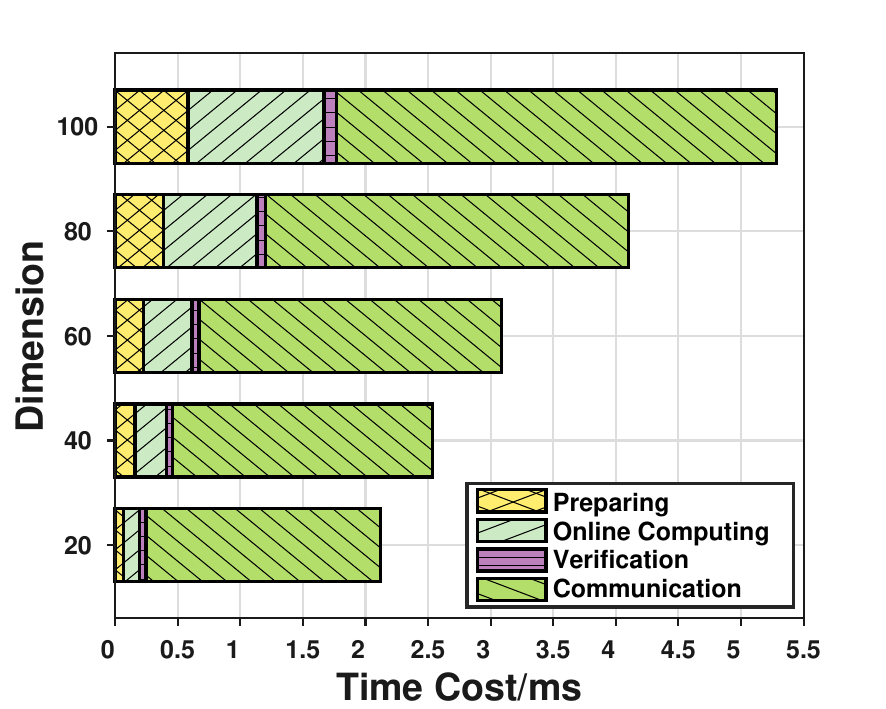}
    \label{fig: Allocation of S3PM}}
    \hspace{-0.7cm}
    \caption{Runtime allocation of different stage in S2PM (a) and S3PM (b), including preparation, online computation, verification, and communication costs, tested with parameters $N=10, 20, 30, 40, 50$.}
    \label{fig: Allocation of S2PM/S3PM}
    \Description[<short description>]{<long description>}
\end{figure}
\noindent\textbf{Protocol stage analysis and big matrix performance of EVA-S3PC.}
\autoref{fig: Allocation of S2PM/S3PM} illustrates the time consumed by computation in different stage and overall communication of S2PM and S3PM with various matrix dimensions. As matrix size doubles, the proportion of time spent in the Preparing and Online phases increases by $8\times$ to $10\times$, while communication overhead grows at a steadier rate of $1.8\times$ to $1.9\times$. This suggests that our protocol effectively controls communication overhead that is the common bottleneck for other schemes, making it a superior option to deal with large matrices. The growth in Preparing and Online Computing phases also suggests potential optimization opportunities for these stages in future work. 

\noindent \autoref{fig: Overhead and Big Matrix}(a) uses the more complex S3PHM protocol to show that as matrix scales from $N=10$ to $N=2000$, with verification rounds of $L=20, 50, 80$, the proportion of verification overhead gradually decreases. Notably, with $L=20$, the probability of a check failure is $Prob_f\leq \frac{1}{2^{200}} \approx 6.22\times 10^{-61}$, and the verification overhead ratio drops below 10\%. This demonstrates the lightweight nature of the verification. In practice, $L=20$ is sufficient to balance verification strength and its overhead for most applications. 
\autoref{fig: Overhead and Big Matrix}(b) shows that for large matrix with dimension between $1000\sim5000$, the running time of the five protocols using EVA-S3PC is well under 10s and demonstrates nearly linear growth. This indicates that EVA-S3PC scales efficiently for large matrix. Also the effective combination of S2PM and S3PM further reduces unnecessary communication, significantly improving overall efficiency.
\begin{figure}[!htb]
    \centering
    \hspace{-0.1cm}
   \subfigure[Proportion of Verification Overhead]{\includegraphics[width=0.49\linewidth]{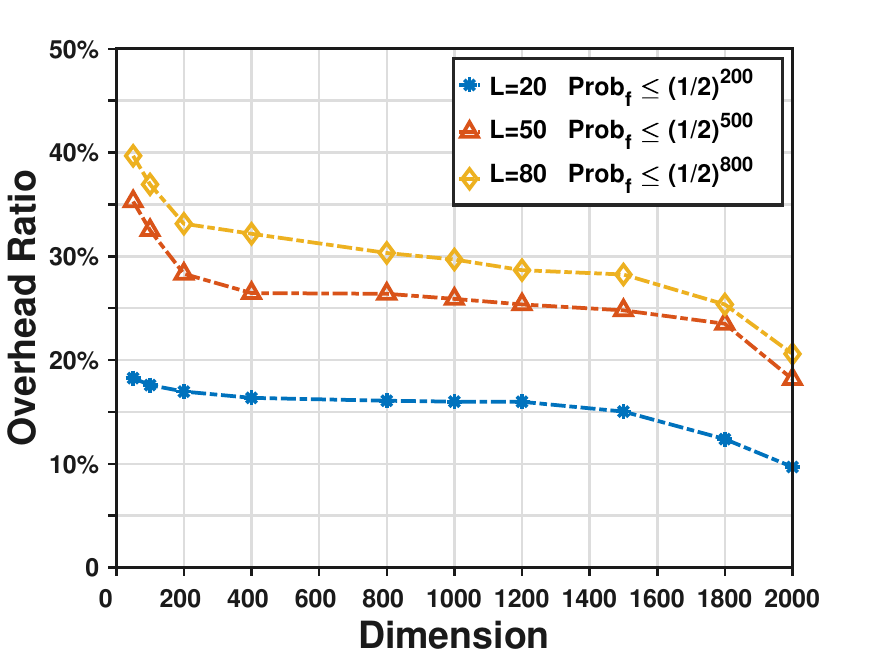}
   \label{fig: Overhead Verify}}
   \hspace{-0.1cm}
   \subfigure[EVA-S3PC for Big Matrix]{\includegraphics[width=0.49\linewidth]{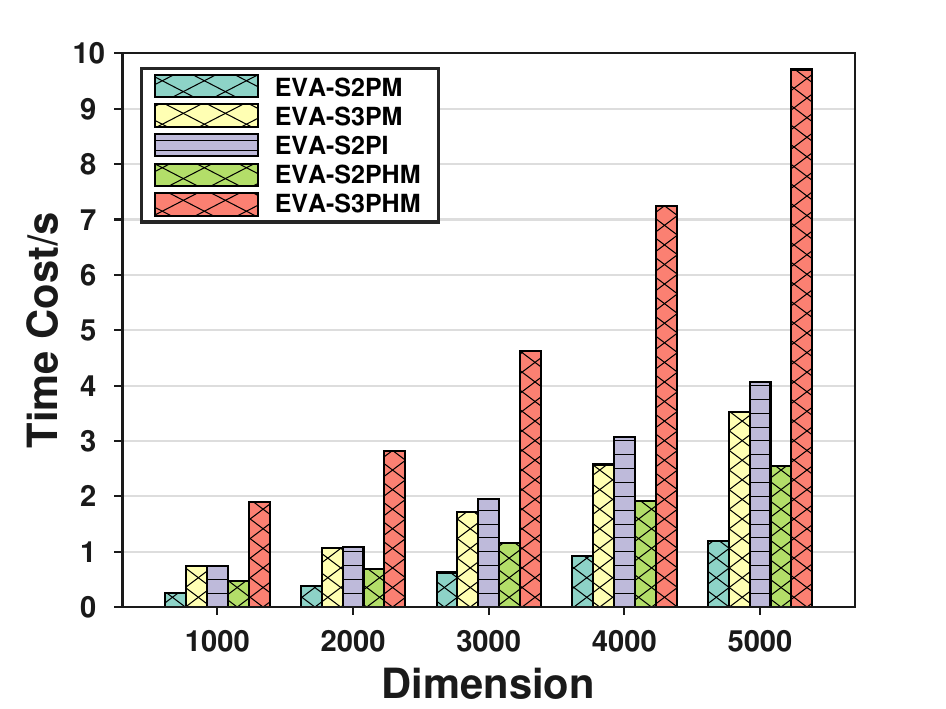}
   \label{fig: Big Matrix Time}}
    \hspace{-0.3cm}
    \caption{(a) shows the verification overhead ratio with round L=20,50,80, and dimension N=10$\sim$2000 .(b) shows running time cost of sub-protocols for big matrix in dimension N = 1000,2000,3000,4000,5000, which reported in seconds.}  
    \label{fig: Overhead and Big Matrix}
    \Description[<short description>]{<long description>}
\end{figure}
\subsubsection{Precision across Dynamic Ranges}
\noindent Accurate computation is crucial for the practical application of secure computing frameworks. In this section, we first test the maximum relative error (MRE) of the five basic protocols in EVA-S3PC under different matrix dimensions and precision settings and summarized the result in \autoref{tab:Precision of EVA-S3PC framework} and \autoref{tab:Precision of Different S3PC Framework}. We then compare EVA-S3PC against SS, GC, OT, HE, and S2PM based schemes, which is summarized in \autoref{tab:SMPC-significants}.\\
\noindent\textbf{Precision of EVA-S3PC.}
\autoref{tab:Precision of EVA-S3PC framework}(a)$\sim$(e) shows the trend of MRE for each protocol in EVA-S3PC framework when the dynamic range $\delta$ of the input matrices varies. Some interesting patterns are observed: (1) With the same dimension, MRE increases as the dynamic range of input data increases; (2) For S2PM, S3PM, S2PHM, and S3PHM, larger matrix dimension (from $N=10$ to $N=50$) leads to lower MRE; (3) S2PI exhibits the opposite trend higher MRE under larger dimension.

\noindent To explain phenomena (1) and (2), we refer to the error decomposition of matrix computation \cite{chung2006computer}: $\sigma_{total}=\sigma_{roundoff}+\sigma_{arithmetic}$. For phenomenon (1), as the dynamic range of matrices $M_A$ and $M_B$ increases to the range of $[E-\delta, E+\delta]$, the multiplication $M_A \times M_B = \sum_{k=1}^{N} a_{ik} \times b_{kj}$ is more likely to include numbers with larger variation, leading to more round-off and accumulation errors with increasing MRE. Conversely, for smaller dynamic ranges (e.g., $\delta=0$), the numbers are less varying in scale, reducing such errors and with smaller MRE.
For phenomenon (2), since S2PM, S3PM, S2PHM, and S3PHM are based on bilinear matrix multiplication over $\mathbb{R}$, research \cite{el2002inversion,fasi2023matrix}  shows that average errors in such operations converge to a limit near floating-point precision with unit in the last place (ULP) of $2^{-52}$ as dimensions increase. The accumulated MRE (AMRE) of a $N$-dimensional matrix filled with float-point (FLP) numbers is given by $AMRE(N, FLP(r,p,round)) = N^2 \times ARRE(FLP(r,p,round)) = N^2 \times \frac{(r^2-1)}{4r \ln r} \times ULP \approx 7.33 \times 10^{-14}$ (with $N=50$ and $r=2$), which is consistent to the MRE scale in \autoref{tab:Precision of EVA-S3PC framework}(a,b,d,e). This suggests that as matrix dimensions increase, positive and negative errors tend to cancel out, resulting in a regression to the mean relative error.
For phenomenon (3), since matrix inversion is a nonlinear operation, the MRE in S2PI does not follow the previous error convergence pattern but instead depends on the condition number of the input matrix. As matrix dimensions increase, control of the condition number becomes more difficult, leading to a higher likelihood of ill-conditioned matrices and thus larger inversion errors.

\begin{figure}[htbp]
    \center
    \subfigure[MRE of EVA-S2PM]{
    \begin{minipage}[c]{3.98cm} 
    \centering
    \includegraphics[width=1.0\linewidth]{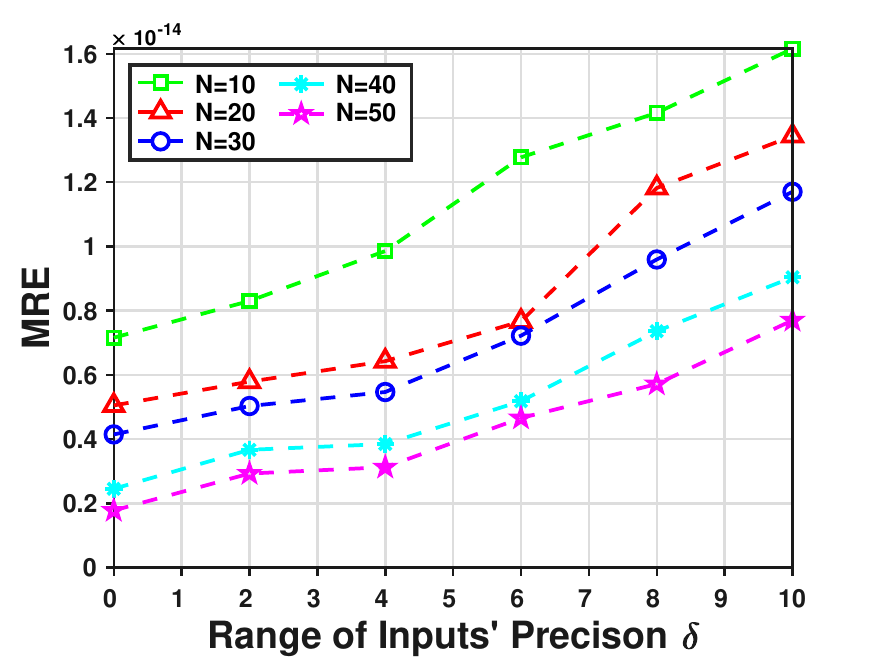}
    \end{minipage}}
    \subfigure[MRE of EVA-S3PM]{
    \begin{minipage}[c]{3.98cm} 
    \centering
    \includegraphics[width=1.0\linewidth]{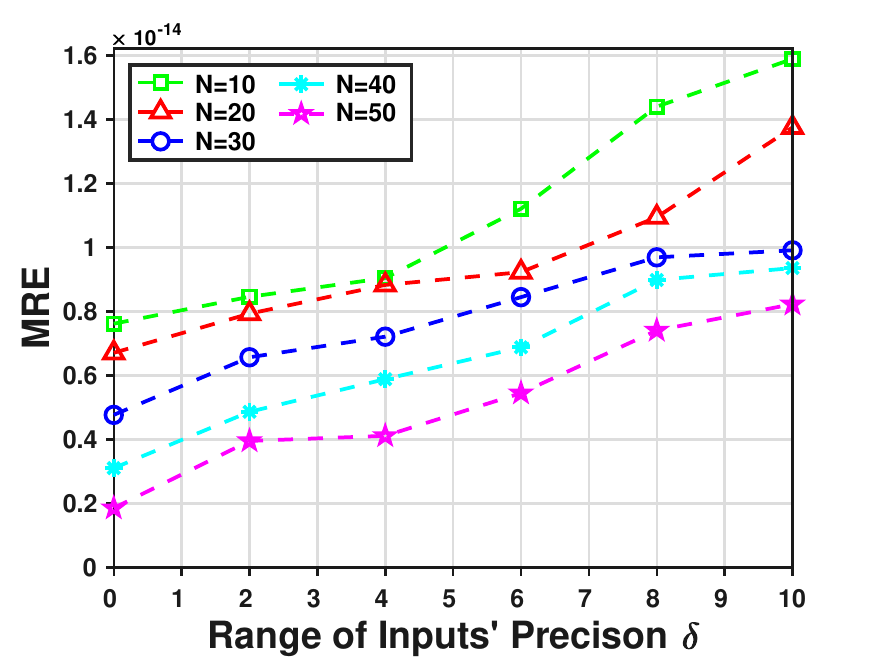}
    \end{minipage}}       
    \center
    \subfigure[MRE of EVA-S2PI]{
    \begin{minipage}[c]{3.98cm} 
    \centering
    \includegraphics[width=1.0\linewidth]{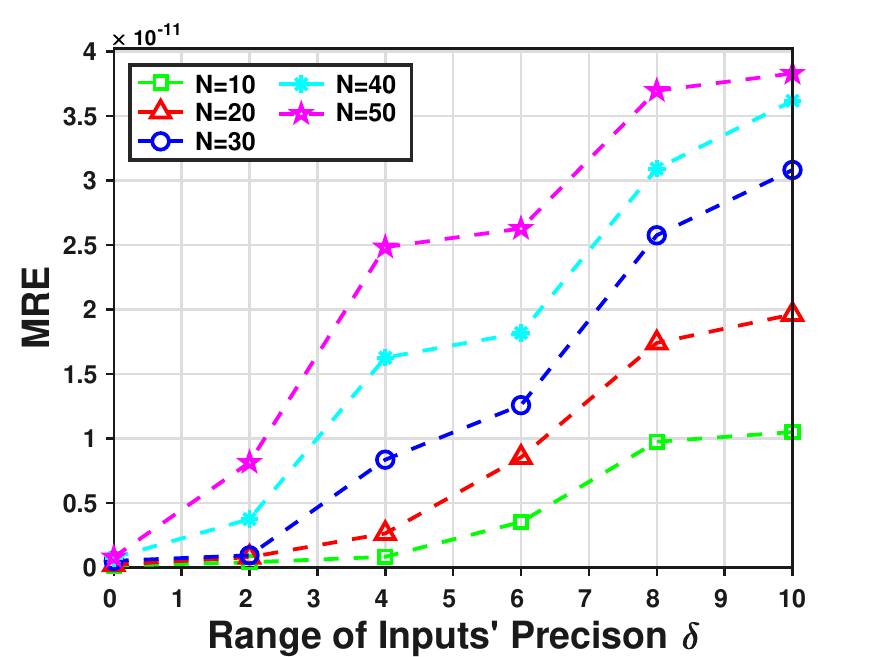}
    \end{minipage}}
    \subfigure[MRE of EVA-S2PHM]{
    \begin{minipage}[c]{3.98cm} 
    \centering
    \includegraphics[width=1.0\linewidth]{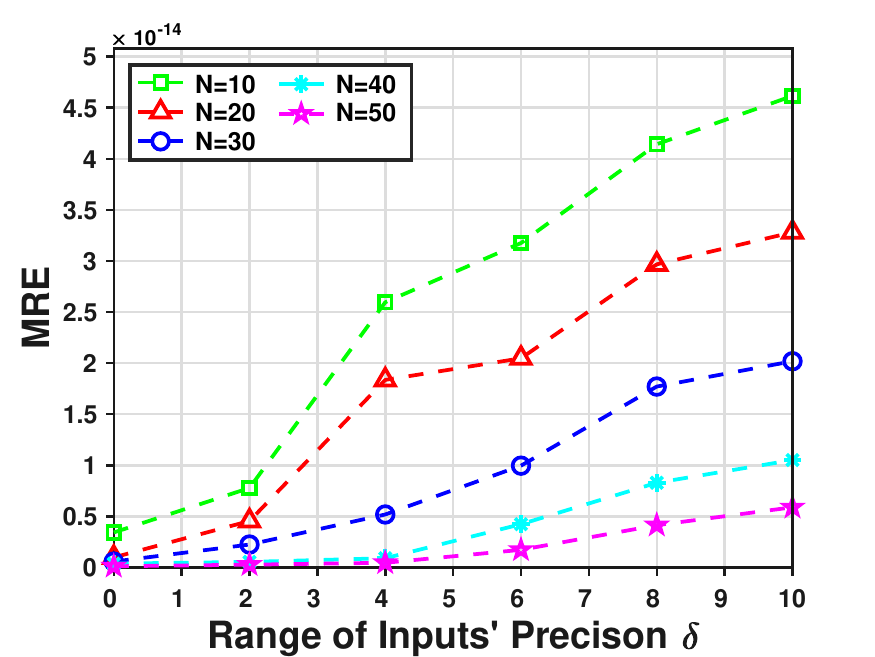}
    \end{minipage}}
    \center
    \subfigure[MRE of EVA-S3PHM]{
    \begin{minipage}[c]{3.98cm} 
    \centering
    \includegraphics[width=1.0\linewidth]{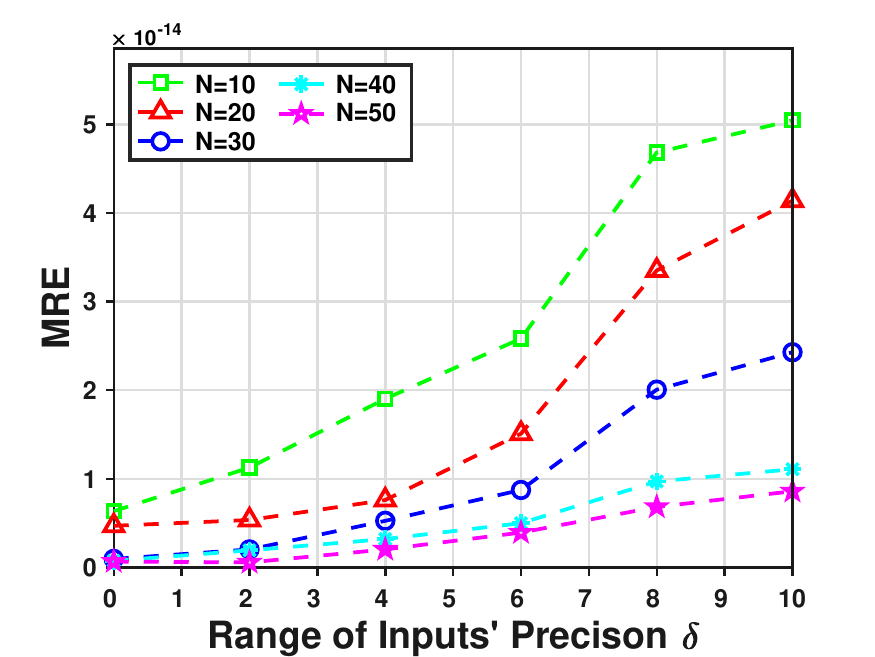}
    \end{minipage}}
    \caption{MRE of EVA-S3PC basic protocols. In (a)-(e), 1000 sets of random input matrices in the dynamic range of $\{\delta \mapsto X_{ij} \in [E-\delta, E+\delta]\}$ were sampled and the MREs of each protocol are plotted at different matrix dimensions of $N=10, 20, 30, 40, 50$.}
    \label{tab:Precision of EVA-S3PC framework}%
    \Description[<short description>]{<long description>}
\end{figure}

\begin{figure}[htbp]
    \center
    \subfigure[NMRE of S2PM]{
    \begin{minipage}[c]{3.98cm} 
    \centering
    \includegraphics[width=1.0\linewidth]{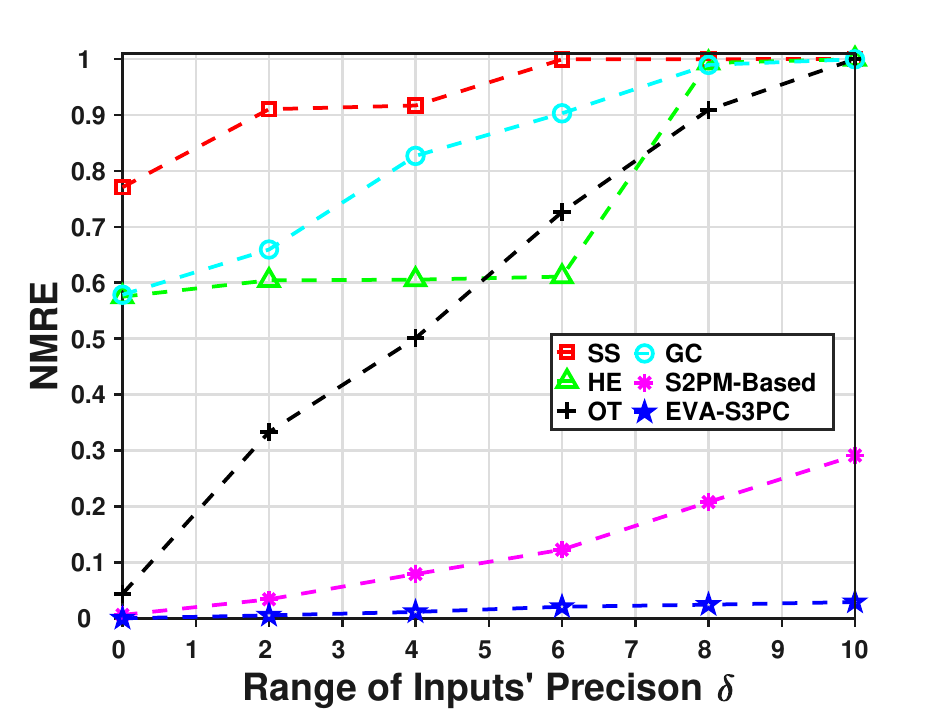}
    \end{minipage}}
    \subfigure[NMRE of S3PM]{
    \begin{minipage}[c]{3.98cm} 
    \centering
    \includegraphics[width=1.0\linewidth]{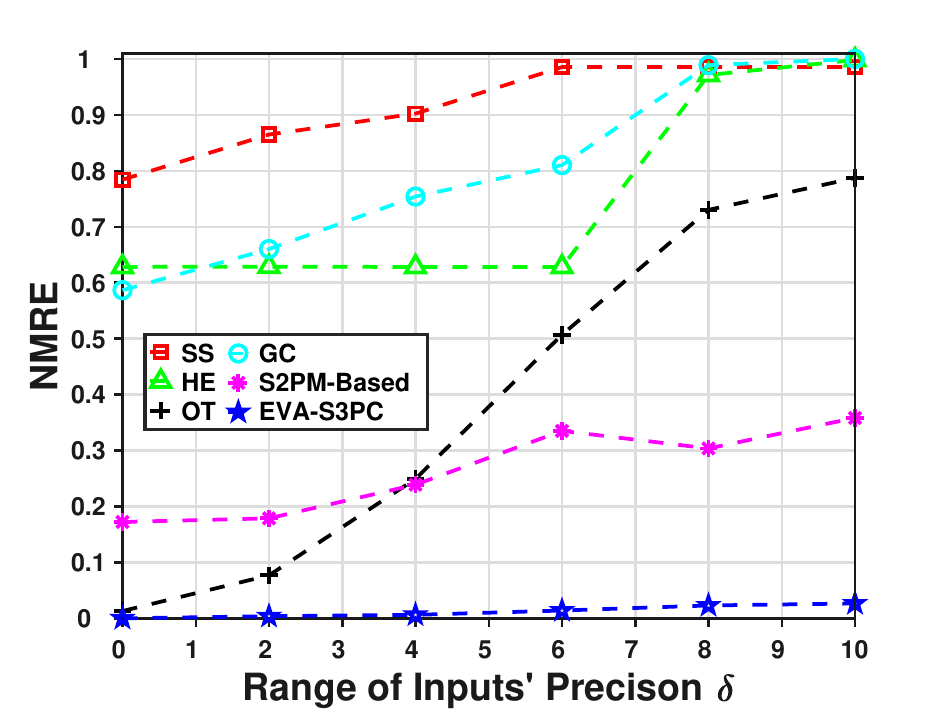}
    \end{minipage}}   
    \center
    \subfigure[NMRE of S2PI]{
    \begin{minipage}[c]{3.98cm} 
    \centering
    \includegraphics[width=1.0\linewidth]{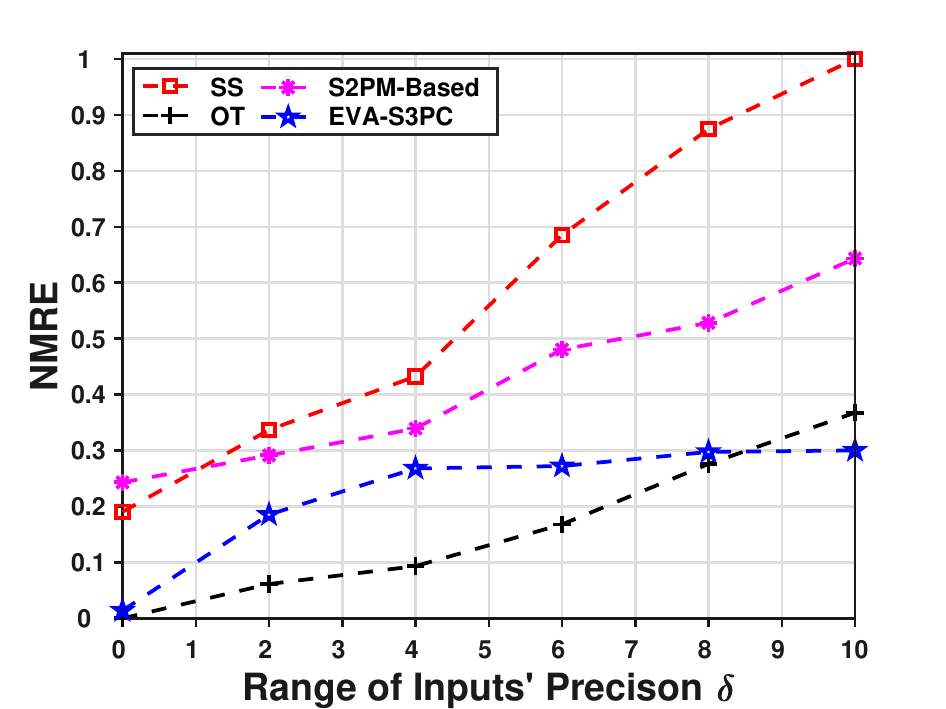}
    \end{minipage}}
    \subfigure[NMRE of S2PHM]{
    \begin{minipage}[c]{3.98cm} 
    \centering
    \includegraphics[width=1.0\linewidth]{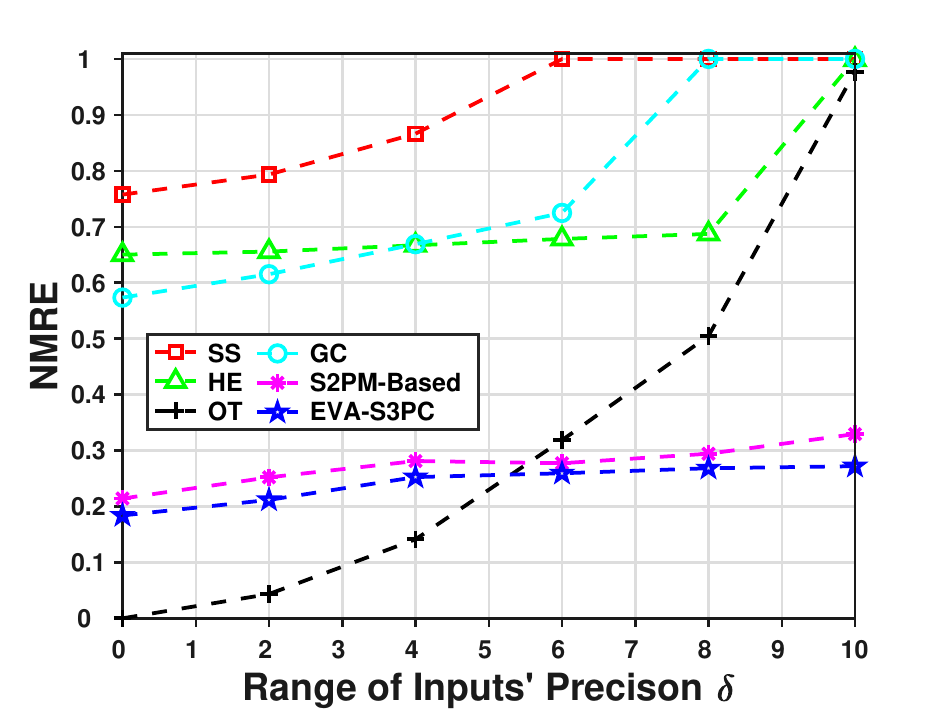}
    \end{minipage}}
    \center
    \subfigure[NMRE of S3PHM]{
    \begin{minipage}[c]{3.98cm} 
    \centering
    \includegraphics[width=1.0\linewidth]{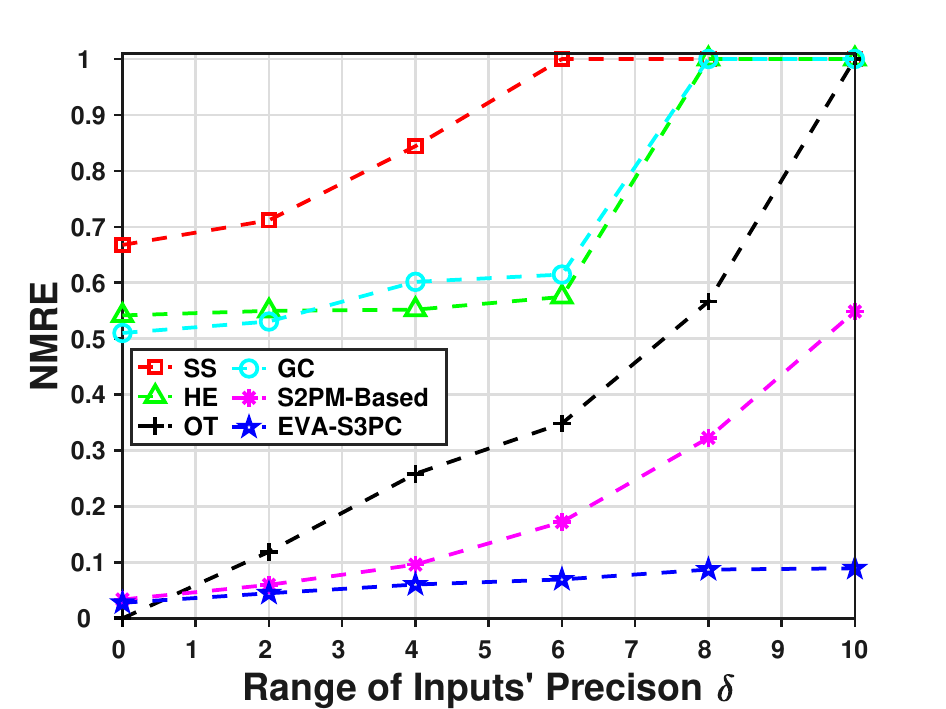}
    \end{minipage}}
    \caption{ Normalized MRE (NMRE) comparison of different frameworks. In (a)-(e), 1000 sets of random input matrices in the dynamic range of $\{\delta \mapsto X_{ij} \in [E-\delta, E+\delta]\}$ were sampled. S2PI is tested with $N=50$, while other protocols are tested with $N=10$. NMRE=1 indicates definite failure of computation. The maximum $\delta$ without failure for each framework shows the usable boundary $\delta_{max}$ without incurring error during computation. The S2PI tests using HE and GC are excluded as they cannot effectively carry out matrix inversion.}
    \label{tab:Precision of Different S3PC Framework}        \Description[<short description>]{<long description>}
\end{figure}

\noindent\textbf{Comparison to Prior Work.}
\noindent Because the difference in MREs among different methods are gigantic, they are normalized (NMRE) by taking the log and are then linearly rescaled to the range between 0 and 1 for ease of comparison. Because the expected MRE for a correct calculation is below $1E-3$, NMRE of 1 indicates the definite failure of computation due to the overstretched dynamic range. \autoref{tab:Precision of Different S3PC Framework}(a)$\sim$(e) presents the NMRE at different dynamic range, and the boundary with NMRE below 1 indicates the usable scope of different secure computing frameworks. The results show that EVA-S3PC outperforms other frameworks in all protocols, thanks to the real-number data disguising method, which avoids ciphertext-based computations, and eliminates errors from fixed-point truncation and ciphertext noise accumulation.

\noindent HE, operating in a finite field $\mathbb{Z_{2^l}}(l = 32)$, has promising precision when the dynamic range $\delta \leq 6$. However, when input variation exceed this threshold, error arises rapidly as that of S3PHM, in which consecutive multiplications accumulate too much noise to produce accurate result.
OT also shows acceptable precision due to its use of random masks for privacy without requiring numerical encoding or splitting. In contrast, GC suffers significant precision loss when converting arithmetic operations to Boolean circuits, especially in complex non-linear operations like S2PI, where multiple encryption/decryption steps across gates introduce cumulative errors.
SS demonstrates worse precision than other frameworks due to its fixed-point representation, which limits accuracy. Repeated matrix multiplications in shared and recovered data introduce rounding errors, particularly in non-linear computations, leading to substantial calculation bias.

\begin{table}[htbp]
  \centering  
  \caption{Comparison of usable dynamic range of Float64 calculation and maximum significant decimal digits. Each protocol was tested 1000 times using six frameworks under the same conditions to assess precision loss. Errors exceeding $1E-3$ were considered incorrect, and error rates were recorded. $\textbf{\ding{56}}$ indicates an error rate above 10\% so the framework is unusable, while $\textbf{\ding{52}}$ indicates a 0\% error rate.}
    \resizebox{\linewidth}{!}{
    \begin{tabular}{cccccccc}
    \toprule 
    \multicolumn{1}{c}{\multirow{2}[2]{*}{\textbf{Framework}}} & \multicolumn{5}{c}{\textbf{Error Rate of Basic Security Protocols}} & \multicolumn{1}{c}{\multirow{2}[2]{*}{\makecell[c]{\textbf{Range of} \\ \textbf{Precision}}}} & \multicolumn{1}{c}{\multirow{2}[2]{*}{\makecell[c]{\textbf{Maximum} \\ \textbf{Significant}}}} \\
\cmidrule{2-6}          & \textbf{S2PM} & \textbf{S3PM} & \textbf{S2PI} & \textbf{S2PHM} & \textbf{S3PHM} &       &  \\
    \midrule
    \textbf{SS} & \textbf{\ding{52}} & \color{black}3.54\% & \color{black}5.46\% & \color{black}2.72\% & \color{black}3.26\% & E-04\textasciitilde E+04 & 4\textasciitilde 5 \\
    \textbf{GC} & \textbf{\ding{52}} & \color{black}1.07\% & \textbf{\ding{56}} & \color{black}1.66\% & \color{black}1.97\% & E-06\textasciitilde E+06 & 7 \textasciitilde 8 \\
    \textbf{OT} & \textbf{\ding{52}} & \textbf{\ding{52}} & \textbf{\ding{52}} & \textbf{\ding{52}} & \textbf{\ding{52}} & E-10\textasciitilde E+10 & 10\textasciitilde 11 \\
    \textbf{HE} & \textbf{\ding{52}} & \textbf{\ding{52}} & \textbf{\ding{56}} & \textbf{\ding{52}} & \textbf{\ding{52}} & E-06\textasciitilde E+06 & 7\textasciitilde 8 
    \\
    \rowcolor{gray!15}
    \textbf{S2PM-Based} & \textbf{\ding{52}} & \textbf{\ding{52}} & \textbf{\ding{52}} & \textbf{\ding{52}} & \textbf{\ding{52}} & E-10\textasciitilde E+10 & 10\textasciitilde 11 \\        \rowcolor{gray!30}
    \textbf{EVA-S3PC} & \textbf{\ding{52}} & \textbf{\ding{52}} & \textbf{\ding{52}} & \textbf{\ding{52}} & \textbf{\ding{52}} & E-10\textasciitilde E+10 & 11\textasciitilde 12 \\
    \bottomrule
    \end{tabular}}
  \label{tab:SMPC-significants}
\end{table}%
\noindent According to numerical analysis studies \cite{sauer2011numerical}, the MRE and significant digits satisfy the relation $s = 1 - \log_{10}(2(a_1+1) \times MRE)$. Using the measured MREs in the experiments, we can estimate the maximum significant digits supported by each framework in Float64 computations. In \autoref{tab:Precision of Different S3PC Framework}(a)$\sim$(e), the usable dynamic range can be derived from the maximum $\delta$ with $NMRE<1$. We conducted extensive error tests for each framework under the same settings as in $\mathsection{\ref{settings}}$ and summarized the precision  in \autoref{tab:SMPC-significants}. EVA-S3PC consistently shows good precision for all protocols with a Float64 usable dynamic range of [E-10, E+10] and can produce 11 to 12 significant digits.

\subsection{Results for S3PLR}
\noindent The sub-protocols of basic operators can be assembled to form models with more complexity. We use a vertically partitioned 3-party linear regression model trained with least square method as an example and implement the model with various frameworks. The efficiency and accuracy are compared for both training and inference.

\subsubsection{Datasets and Benchmarks.}
\noindent To evaluate the performance of EVA-S3PC framework in linear regression, we use two standard benchmarking datasets from Scikit-learn: the \textbf{Boston dataset} with 506 samples (404 for training, 102 for testing), 13 features and 1 label, and the \textbf{Diabetes dataset} with 404 samples (353 for training, 89 for testing), 10 features and 1 label. In the 3-party regression, based on the design of S3PLR in section $\mathsection{\ref{S3PLR}}$, the label is private and only accessible to Carol, and the features are evenly split between Alice and Bob. The least square linear regression is repeated 100 times and the efficiency and accuracy are benchmarked using SecretFlow\cite{ma2023secretflow}, CryptGPU, LibOTe, FATE, S2PM-based methods, and EVA-S3PC on both datasets.
\subsubsection{Metrics of Comparison}
\noindent As a common preprocessing of data, we normalized the features to have standard Normal distribution. Efficiency metrics include time of training and inference, as well as communication overhead in both size and round of secure protocols. Accuracy comparison metrics include mean absolute error (MAE), mean square error (MSE) and root mean square error (RMSE) between prediction and label, L2-Norm relative error (LNRE) between securely trained model parameters and the ones learned from the plain text data using Scikit-learn, R-Square, and R-Square relative error (RRS) between privacy preserving models and plain text models. Particularly, LNRE measures the relative error between securely trained parameters $\beta$ and the ones learned from plaintext using Scikit-learn $\widehat{\beta}$: $LNRE = \frac{||\beta - \widehat{\beta}||_2}{||\beta||_2}$. RRS quantifies the relative difference of the R-Square values between secure and plaintext models: $RRS = \frac{|R^2 - \widehat{R}^2|}{\widehat{R}^2} \times 100\%$.
\begin{table}[htp]
  \centering
  \caption{Training and inference time comparison of various protocols in the LAN setting. The time obtained from testing on plaintext using Scikit-learn serves as the baseline for the training and inference phases. All communication is reported in MB, and time costs are reported in ms.}
    \resizebox{\linewidth}{!}{
    \begin{tabular}{ccccccccc}
    \toprule  
    \multicolumn{1}{c}{\multirow{2}[2]{*}{\textbf{Dataset}}} & \multicolumn{1}{c}{\multirow{2}[2]{*}{\textbf{Data State}}} & \multicolumn{1}{c}{\multirow{2}[2]{*}{\textbf{ Framework}}} & \multicolumn{2}{c}{\textbf{ Training Overhead }} & \multicolumn{1}{c}{\multirow{2}[1]{*}{\makecell[c]{\textbf{Training} \\ \textbf{Time/ms}}}} & \multicolumn{2}{c}{\textbf{ Inference Overhead }} &  \multicolumn{1}{c}{\multirow{2}[1]{*}{\makecell[c]{\textbf{ Inference} \\ \textbf{Time/ms}}}} \\
\cmidrule{4-5}\cmidrule{7-8}          &       &       & \multicolumn{1}{c}{\textbf{Com. /MB}} & \multicolumn{1}{c}{\textbf{Rounds}} &       & \multicolumn{1}{c}{\textbf{Com. /MB}} & \multicolumn{1}{c}{\textbf{Rounds}} &  \\
    \midrule
    \multirow{7}[4]{*}{\textbf{Diabetes}} & \multirow{6}[2]{*}{\textbf{Cipher}} & \textbf{SecretFlow} & 1.44  & 70    & 4741.83  & 1.12  & 54    & 1188.36  \\
          &       & \textbf{CryptGPU} & 2.64  & 133   & 53.47  & 0.07  & 9     & 23.70  \\
          &       & \textbf{Fate} & 5303.49  & 24    & 36920.94  & 300.52  & 6     & 2092.13  \\
          &       & \textbf{LibOTe} & 16.85  & 3424  & 33399.78  & 3.11  & 1800  & 18606.95  \\
          &       & \cellcolor{gray!30}\textbf{S2PM-Based} & \cellcolor{gray!30}1.24  & \cellcolor{gray!30}85    &\cellcolor{gray!30} 10.49  & \cellcolor{gray!30}0.09  & \cellcolor{gray!30}27    & \cellcolor{gray!30}0.37  \\
         &       & \cellcolor{gray!30}\textbf{EVA-S3PC} & \cellcolor{gray!30}\textbf{0.68 } & \cellcolor{gray!30}\textbf{60 } & \cellcolor{gray!30}\textbf{4.17 } & \cellcolor{gray!30}\textbf{0.09 } & \cellcolor{gray!30}\textbf{25 } & \cellcolor{gray!30}\textbf{0.32 } \\
\cmidrule{2-9}          & \textbf{Plain} & \textbf{Scikit-learn} & \textbf{/}     & \textbf{/}     & \textbf{0.58 } & \textbf{/}     & \textbf{/}     & \textbf{0.05 } \\
    \midrule
    \multirow{7}[4]{*}{\textbf{Boston}} & \multirow{6}[2]{*}{\textbf{Cipher}} & \textbf{SecretFlow} & 1.49  & 70    & 3265.70  & 1.03  & 54    & 1235.68  \\
          &       & \textbf{CryptGPU} & 3.70  & 133   & 50.66  & 0.08  & 9     & 15.77  \\
          &       & \textbf{Fate} & 8911.40  & 24    & 62037.90  & 285.24  & 6     & 1985.76  \\
          &       & \textbf{LibOTe} & 24.48  & 3662  & 34774.30  & 4.38  & 1800  & 20649.50  \\
          &       & \cellcolor{gray!30}\textbf{S2PM-Based} & \cellcolor{gray!30}1.79  & \cellcolor{gray!30}85    & \cellcolor{gray!30}13.28  & \cellcolor{gray!30}0.12  & \cellcolor{gray!30}27    & \cellcolor{gray!30}0.64  \\
           &       & \cellcolor{gray!30}\textbf{EVA-S3PC} & \cellcolor{gray!30}\textbf{0.99 } & \cellcolor{gray!30}\textbf{60 } & \cellcolor{gray!30}\textbf{8.18 } & \cellcolor{gray!30}\textbf{0.12 } & \cellcolor{gray!30}\textbf{25 } & \cellcolor{gray!30}\textbf{0.34 } \\
\cmidrule{2-9}          & \textbf{Plain} & \textbf{Scikit-learn} & \textbf{/}     & \textbf{/}     & \textbf{0.36 } & \textbf{/}     & \textbf{/}     & \textbf{0.05 } \\
    \bottomrule
    \end{tabular}}%
  \label{tab:S3PLR-Efficient}%
\end{table}%

\subsubsection{Efficiency and Accuracy of EVA-S3PC}
\noindent \autoref{tab:S3PLR-Efficient} compares EVA-S3PC framework to other PPML schemes. For the training on Diabetes dataset, S3PLR by EVA-S3PC reduces communication by $25.7\%\sim54.8\%$ compared to the fastest state of art methods (SecretFlow, CryptGPU, S2PM-based method), and speeds up the training time by $2.5\times\sim12.8\times$. For the inference, our communication cost is similar to that of CryptGPU. The performance on the Boston dataset demonstrates the same trend.
\autoref{tab:S3PLR-Accuracy} summaries the accuracy metrics for various frameworks. EVA-S3PC shows the closest parameters and R-Square to those derived from plain text, indicated by the smallest LNRE and RRS.
\noindent Particularly, SecretFlow (based on GC and SS) has low communication overhead ($1.03\sim1.49 MB$) and high accuracy but suffers from large training and inference times due to the need to encrypt and decrypt Boolean circuits for matrix operations.
CryptGPU (based on ASS and BSS) benefits from GPU acceleration, offering high computational efficiency, but its use of fixed-point representation ($Z_{2^{16}}$) reduces precision, which is reflected by its lower model accuracy (RRS at 7.32\%).
FATE introduces HE for secure linear operations, leading to the lowest number of communication rounds but significantly higher communication size ($5.3GB\sim8.9GB$ during training). Also the noise accumulates between the calls of HE, which affects accuracy.
LibOTe, using Diffie-Hellman key exchange, ensures privacy and high precision (RRS = $2.06E-5$), but its nearly 3000 interaction rounds cause excessive communication overhead and reduces practicality.
Our S2PM-based S3PLR and EVA-S3PC implementation both use data disguising techniques and thus have similar communication complexity to plaintext method with constant communication rounds. However, S2PM-based S3PLR relies on repeated calls of S2PM for 3-party regression, leading to lower precision and efficiency compared to EVA-S3PC.
\section{Conclusion and future work}\label{Conclusion}
\noindent To address the need for sensitive data protection in distributed SMPC scenarios, we leverage data disguising techniques to introduce a suite of semi-honest, secure 2-party and 3-party atomic operators for large-scale matrix operations, including linear (S2PM, S3PM) and complex non-linear operations (S2PI, S2PHM, S3PHM). Through an asynchronous computation workflow and precision-aligned random disguising, our EVA-S3PC framework achieves substantial improvements in communication overhead and Float64 precision over existing methods. Our Monte Carlo-based anomaly detection module offers robust detection at $L=20$ with minimal overhead, achieving a probability threshold of $Prob_f \leq 6.22 \times 10^{-61}$, even for minor errors.The EVA-S3PC framework achieves near-baseline accuracy in secure three-party regression, with significant improvements in communication complexity and efficiency over mainstream frameworks in both training and inference. Performance tests reveal the framework’s strong potential for privacy-preserving multi-party computing applications in distributed settings.
Currently, this study provides exploratory results in a single-machine simulated LAN  environment to validate EVA-S3PC’s performance in communication, precision, and efficiency. \\
Future work will evaluate scalability and practicality in complex multi-machine setups under real LAN/WAN conditions. Although our protocols focus on 2-party and 3-party settings, we aim to extend these to n-party protocols with optimized time complexity. Lastly, integrating lightweight cryptographic primitives will address potential collusion risks with the third-party CS node, ensuring secure, efficient, and scalable performance for real-world applications, including healthcare and finance.
\begin{table}[!hp]
  \centering
  \caption{Summary of benchmarks involving inference accuracy of various frameworks using Boston and Diabetes  datasets in the LAN setting. The RRS and LNRE refer to the relative error of R-Square and L-2 Norm of Scikit-learn computing on plain.}
    \resizebox{\linewidth}{!}{
    \begin{tabular}{ccccccccc}
    \toprule
    \multicolumn{1}{c}{\multirow{2}[2]{*}{\textbf{Dataset}}} & \multicolumn{1}{c}{\multirow{2}[2]{*}{\textbf{Data State}}} & \multicolumn{1}{c}{\multirow{2}[2]{*}{\textbf{ Framework}}} & \multicolumn{6}{c}{\textbf{Accuracy Evaluation}} \\
\cmidrule{4-9}          &       &       & \multicolumn{1}{c}{\textbf{MAE}} & \multicolumn{1}{c}{\textbf{MSE}} & \multicolumn{1}{c}{\textbf{RMSE}} & \multicolumn{1}{c}{\textbf{LNRE}} & \multicolumn{1}{c}{\textbf{R-Square}} & \multicolumn{1}{c}{\textbf{RRS}} \\
    \midrule
    \multirow{7}[4]{*}{\textbf{Diabetes}} & \multirow{6}[2]{*}{\textbf{Cipher}} & \textbf{SecretFlow} & 44.285  & 2948.274  & 54.298  & 0.322  & 0.468  & 1.16E-04 \\
          &       & \textbf{CryptGPU} & 45.389  & 3137.410  & 56.013  & 0.332  & 0.433  & 7.32E-02 \\
          &       & \textbf{Fate} & 46.127  & 2971.215  & 54.509  & 0.323  & 0.463  & 8.98E-03 \\
          &       & \textbf{LibOTe} & 44.285  & 2948.274  & 54.298  & 0.322  & 0.468  & 2.06E-05 \\
          &       & \cellcolor{gray!30}\textbf{S2PM-Based} & \cellcolor{gray!30} 44.333  & \cellcolor{gray!30} 2953.713  & \cellcolor{gray!30} 54.348  & \cellcolor{gray!30} 0.322  & \cellcolor{gray!30} 0.467  & \cellcolor{gray!30} 2.22E-03 \\
          &       & \cellcolor{gray!30} \textbf{EVA-S3PC} & \cellcolor{gray!30} \textbf{44.285 } & \cellcolor{gray!30} \textbf{2948.274 } & \cellcolor{gray!30} \textbf{54.298 } & \cellcolor{gray!30} \textbf{0.322 } & \cellcolor{gray!30} \textbf{0.468 } & \cellcolor{gray!30} \textbf{1.88E-06} \\
\cmidrule{2-9}          & \textbf{Plain} & \textbf{Scikit-learn} & \textbf{44.285 } & \textbf{2947.974 } & \textbf{54.295 } & \textbf{0.322 } & \textbf{0.468 } & \textbf{/} \\
    \midrule
    \multirow{7}[4]{*}{\textbf{Boston}} & \multirow{6}[2]{*}{\textbf{Cipher}} & \textbf{SecretFlow} & 3.534  & 24.152  & 4.914  & 0.195  & 0.764  & 3.62E-04 \\
          &       & \textbf{CryptGPU} & 3.577  & 25.343  & 5.034  & 0.200  & 0.753  & 1.52E-02 \\
          &       & \textbf{Fate} & 3.579  & 25.433  & 5.043  & 0.201  & 0.752  & 1.63E-02 \\
          &       & \textbf{LibOTe} & 3.534  & 24.151  & 4.914  & 0.195  & 0.765  & 1.31E-05 \\
          &       & \cellcolor{gray!30} \textbf{S2PM-Based} & \cellcolor{gray!30} 3.544  & \cellcolor{gray!30} 24.241  & \cellcolor{gray!30} 4.924  & \cellcolor{gray!30} 0.196  & \cellcolor{gray!30} 0.764  & \cellcolor{gray!30} 5.23E-04 \\ 
          &       & \cellcolor{gray!30} \textbf{EVA-S3PC} & \cellcolor{gray!30} \textbf{3.534 } & \cellcolor{gray!30} \textbf{24.151 } & \cellcolor{gray!30} \textbf{4.914 } & \cellcolor{gray!30} \textbf{0.195 } & \cellcolor{gray!30} \textbf{0.765 } & \cellcolor{gray!30} \textbf{2.05E-06} \\ 
\cmidrule{2-9}          & \textbf{Plain} & \textbf{Scikit-learn} & \textbf{3.534 } & \textbf{24.151 } & \textbf{4.914 } & \textbf{0.195 } & \textbf{0.765 } & \textbf{/} \\
    \bottomrule
    \end{tabular}}%
  \label{tab:S3PLR-Accuracy}%
\end{table}%

\section{Acknowledgments}
This work was supported by the National Science and Technology Major Project of the Ministry of Science and Technology of China Grant No.2022XAGG0148.

\section{Appendix}
\textbf{Data Disguising Technique}
A secure computing protocol is generally composed of a sequence of steps in which one step takes the input from the previous one and produce its output that will be fed to the next step. Each step may happen on different participant, and it is essential to make sure that no participant is able to infer the original data from the collection of its input. Take a simple step of $a \times b$ between Alice and Bob as an example. If this intermediate result is kept by Alice or Bob, it can immediately infer the other party's data. Therefore, we adopt a data disguising technique to safeguard the intermediate output at each step to avoid such side effect.

We use a two-party scenario as an example in \autoref{fig:DataDisguising} and it is easy to generalize it to more parties. In a sequence of computations steps indexed by $k$, $S_k= F_k(A_k,B_k)$, where $A_k$ and $B_k$ are the input from Alice and Bob and $S_k$ is the intermediate output that can not be revealed to either party. Instead of directly output $S_k$, the data disguising technique produces two random variables $A_{k+1}$ and $B_{k+1}$ for Alice and Bob respectively, such that $A_{k+1} + B_{k+1} = S_k$. $A_{k+1}$ and $B_{k+1}$ are taken as input for Alice and Bob respectively in the next step $F_{k+1}(A_{k+1},B_{k+1})$, which further produces $A_{k+2}+B_{k+2}=S_{k+1}$, and so on. We express such computation as:
\begin{align}
A_k:B_k \rightarrow [A_{k+1}:B_{k+1} | A_{k+1} + B_{k+1} = F_k(A_k,B_k) ].
\end{align}
Because Alice and Bob do not share the random matrices $A_{k+1}$ and $B_{k+1}$ to each other, it is guaranteed that the intermediate result $S_k$ is secure against both of them. Figure 3 demonstrates the sequence of computation steps under the data disguising technique. As a matter of fact, in the following introduction of various protocols, we assume that the intermediate outputs is never explicitly computed. Instead, they are disguised as random matrices distributed across parties. The only exception is the final step of the protocol in which the request client sums up the disguising matrices and obtain the final result.
\begin{figure}[ht]
  \centering
  \includegraphics[width=1.0\hsize]{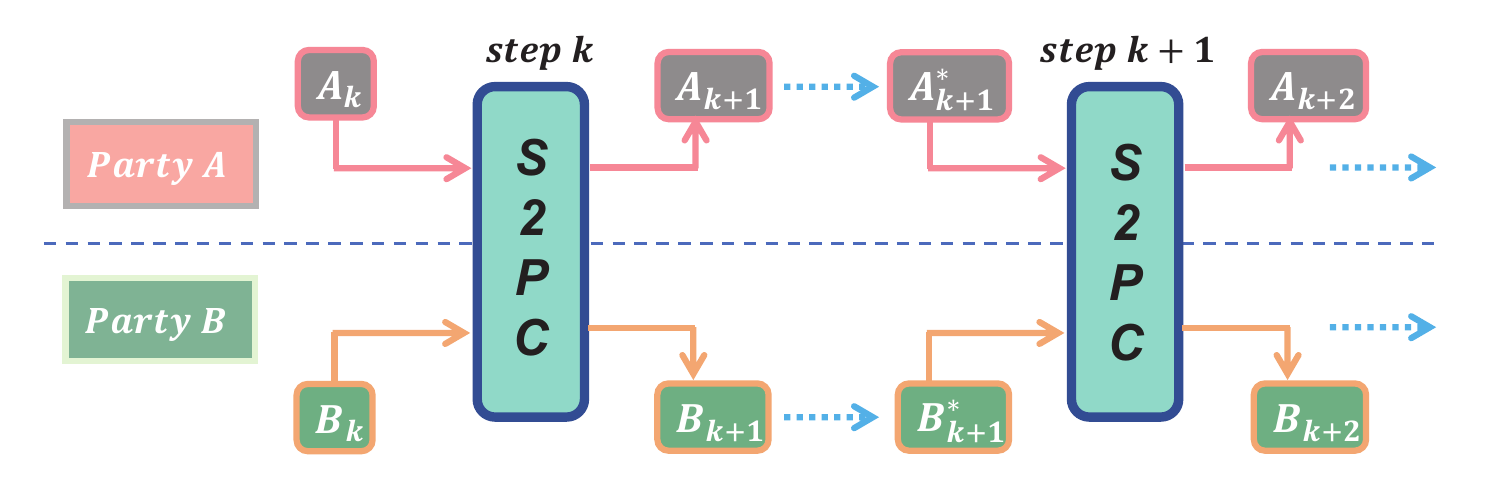}
  \vspace{-0.6cm}
  \caption{Data Disguising Strategy. The intermediate output from each step is disguised as two random matrices taken as the input in the next step.}
  \label{fig:DataDisguising}
  \Description[<short description>]{<long description>}
\end{figure}
\balance

\bibliographystyle{unsrt}
\bibliography{reference}
\end{sloppypar}
\end{document}